\declaretheoremstyle[%
  spaceabove=-6pt,%
  spacebelow=6pt,%
  headfont=\normalfont\itshape,%
  postheadspace=1em,%
  qed=\qedsymbol%
]{mystyle} 
\newtheorem{thm}{Theorem}[section]
\newtheorem{prop}{Proposition}[section]
\newtheorem{cor}[thm]{Corollary}
\newtheorem{defn}[thm]{Definition}
\newtheorem{lemma}[thm]{Lemma}
\newcommand{\FF}{{\mathbb{F}}}
\newcommand{\rk}{\texttt{rank}}
\title{Hermitian LCD $2$-Quasi Abelian Codes over Finite Chain Rings}
\author{Sanjit Bhowmick$^{1}$ and Kuntal Deka$^{1}$ 
\footnote{
$^{1}$Department of Electronics and Electrical Engineering, Indian Institute of Technology Guwahati,\\
 Assam, 781039, India. Email:sanjitbhowmick@rnd.iitg.ac.in ;kuntaldeka@iitg.ac.in\\
  }
}
\begin{document}
\maketitle
\begin{abstract} 
This paper introduces a class of Hermitian LCD $2$-quasi-abelian codes over finite fields and presents a comprehensive enumeration of these codes in which relative minimum weights are small. We show that such codes are asymptotically good over finite fields. Furthermore, we extend our analysis to finite chain rings by characterizing $2$-quasi-abelian codes in this setting and proving the existence of asymptotically good Hermitian LCD $2$-quasi-abelian codes over finite chain rings as well.
\end{abstract}

\noindent\textbf{Keywords:}
Finite Chain rings, Quasi-abelian codes of index $2$, Asymptotically good, Hermitian LCD codes.
\noindent\textbf{2020 AMS Classification Code:} 94B05; 94B65.

\section{Introduction}\label{sec:intr}
Linear codes with complementary duals (LCD codes) are a class of linear codes that intersect trivially with their respective dual codes, as introduced by Massey \cite{Mas92}. He provided an algebraic characterization of LCD codes and showed that there are asymptotically good LCD codes. He also showed that binary LCD codes provide an optimum linear coding solution for the two-user binary adder
channel. Sendrier showed that LCD codes achieve the Gilbert-Varshamov bound (\cite{Sendrier04}). Apart from this, these codes are significant in both theoretical and applied contexts, with notable importance in countermeasures against passive and active side-channel attacks in embedded cryptosystems \cite{BCC14,CG16}. In addition to cryptography, LCD codes also find applications in communication systems, consumer electronics, and data storage. This wide application has led to extensive research focused on methods for constructing these codes \cite{CMT18,CMTQ18,CMTQ19}. 

The family of primitive BCH codes over finite fields is not asymptotically good. The question of whether cyclic codes are asymptotically good remains an open problem (see \cite{MW06}). In contrast, it has long been established that $2$-quasi-cyclic codes are asymptotically good, as shown in \cite{CPW1969,Chepyzhov1992,M07}. In \cite{Fan16}, quasi-cyclic codes of a particular index were introduced, and these codes have been proven to be asymptotically good. Recently, Fan and Lin \cite{FL15} showed the existence of many asymptotically good quasi-abelian codes that achieve the Gilbert-Varshamov bound. Dihedral group codes, which are non-abelian, closely resemble cyclic group codes. Bazzi and Mitter \cite{BM2007} proved that binary dihedral group codes are asymptotically good. Additionally, in \cite{FL21}, Fan and Lin further showed that dihedral group codes over any finite field, with desirable mathematical properties, are also asymptotically good.

In the 1994s, it was established that many binary non-linear codes can be represented as Gray images of linear codes over the ring $\mathbb{Z}_4$ (see \cite{Hammons94}). This finding led to a growing interest in the study of linear codes over finite commutative chain rings \cite{Hon20,G.Norton2000,Norton2000}. In recent years, substantial research has focused on the examination of LCD codes over finite commutative rings \cite{BDM20,BBB24,BTP24,Liu2015,Liu2020}. 
Recently, Zhang et al. established that Hermitian self-dual $2$-quasi-abelian codes over finite fields are asymptotically good (see \cite{ZLQR2024}). On the other hand, Zhang et al. showed that Euclidean LCD $2$-quasi-abelian codes over finite fields are asymptotically good (see \cite{Zh2024}). So, it is a natural question to ask whether there exist asymptotically good Hermitian LCD $2$-quasi-abelian codes over finite commutative chain rings. Motivated by these previous works and the question, we study Hermitian LCD $2$-quasi-abelian codes over finite commutative chain rings.  First, we build a special type of Hermitian LCD $2$-quasi-abelian codes over any finite field (Theorem~\ref{th-3.1}). Then, we count how many such codes exist (Theorem~\ref{th-3.7}) and estimate the proportion with low relative minimum weights (Theorem~\ref{th-3.11}). These two results lead to the conclusion of the paper (Theorem~\ref{th-3.17}). Finally, we show that over any finite chain ring, there are LCD $2$-quasi-abelian codes that are asymptotically good (Theorem~\ref{th-4.12}).

The present work is organized as follows. In Section~\ref{sec:2}, we review the concept of Galois extensions of finite chain rings, along with some necessary facts about abelian codes over finite chain rings and their Hermitian duals. In Section~\ref{ff-3} discusses Hermitian LCD $2$-quasi-abelian codes over finite fields. Additionally, we construct and count a class of Hermitian LCD codes over finite fields. Further, we show that asymptotically good Hermitian LCD $2$-quasi-abelian codes exist over finite fields. In Section~\ref{lcd}, we provide a characterization of Hermitian LCD $2$-quasi-abelian codes over finite chain rings. Finally, we establish that asymptotically good Hermitian LCD $2$-quasi-abelian codes exist over finite chain rings.
 
\section{Some preliminaries}\label{sec:2} 
Throughout this paper, we denote $\mathcal{S}$ and $\mathcal{R}$ as two finite commutative chain rings (a chain ring means the lattice of all its ideals forms a chain) and $\FF_q$ is a finite field, for the prime power $q$. Let $G$ be a finite abelian group of odd order $n\geq 7$ with $\gcd\{n,q\}=1$. We further assume that the chain ring $\mathcal{S}$ has maximal ideal $\textbf{m}$ with a nilpotency index $s$ (i.e., $\textbf{m}^s=0$ but $\textbf{m}^{s-1}\neq 0$). We say that $\mathcal{S}$ is a ring extension of $\mathcal{R}$, denoted as $\mathcal{S}|\mathcal{R}$, if $\mathcal{R}$ is a subring of $\mathcal{S}$ and $1_{\mathcal{R}}=1_{\mathcal{S}}$. The extension $\mathcal{S}|\mathcal{R}$ is a Galois extension of degree $2$ if $\mathcal{S}$ is isomorphic to $\frac{\mathcal{R}[x]}{(f(x))}$, where $f$ is a monic basic polynomial of degree $2$ over $\mathcal{R}$. The group $\textit{Aut}_{\mathcal{R}}(\mathcal{S})$, which corresponds to this Galois extension, consists of all ring automorphisms $\sigma$ of $\mathcal{S}$ that act as the identity on $\mathcal{R}$. We further assume that the residue fields of the rings $\mathcal{S}$ and $\mathcal{R}$ are $\mathbb{F}=\mathbb{F}_{q^2}$ and $\mathbb{F}_q$, respectively. According to \cite[Theorem XV.2]{McDonal1974}, we have $\textit{Aut}_{\mathbb{F}_q}(\mathbb{F})\equiv\textit{Aut}_{\mathcal{R}}(\mathcal{S})$. The ring $\mathcal{S}$ can be regarded as a free $\mathcal{R}$-module of rank $2$ and $\rk_{\mathcal{R}}(\mathcal{S})=\mid \textit{Aut}_{\mathcal{R}}(\mathcal{S})\mid$.  
Let $\mathcal{S}|\mathcal{R}$ be a Galois extension of finite chain rings of degree $2$, and let $\sigma$ denote a generator of $\textit{Aut}_{\mathcal{R}}(\mathcal{S})$ of order $2$. We introduce a non-degenerate $\sigma$-sesquilinear form 
\begin{eqnarray*}
&&\langle~,~ \rangle~:~\mathcal{S}^n\times \mathcal{S}^n\rightarrow\mathcal{S}~\text{is defined as}~\\
&& \langle \textbf{u},\textbf{v}\rangle_{\sigma}=\sum_{j=1}^{n}u_j\sigma(v_j), ~\text{where},~\textbf{u}=(u_1,u_2,\ldots,u_n)~\text{and}~\textbf{v}=(v_1,v_2,\ldots,v_n). 
\end{eqnarray*}
We say a linear code $C$ is an $\mathcal{S}$-submodule of $\mathcal{S}^n$ and the corresponding $\sigma$-dual of $C$ is $$C^{\perp_{\sigma}}=\{\textbf{u}\in\mathcal{S}^n~|~\langle \textbf{u}, \textbf{c} \rangle_{\sigma}=0~\forall~\textbf{c}=(c_1,c_2,\ldots,c_n)\in C\}.$$
A linear code over $\mathcal{S}$ is called Hermitian linear complementary dual (H-LCD) if it meets with its $\sigma$-duals trivially. 

 The group ring of $G$ over $\mathcal{S}$, denoted by $\mathcal{S}G$, consists of all finite $\mathcal{S}$-linear combinations of elements of $G$, i.e.,
$$\mathcal{S}G=\left\{\sum_{g\in G}a_gg~|~a_g\in\mathcal{S}\right\}.$$
The scalar, additive and multiplication operators of $\mathcal{S}G$ are defined by 
$k\textbf{a}=\sum_{g\in G}(ka_g)g$, $\textbf{a}+\textbf{b}=\sum_{g\in G}(a_g+b_g)g$, and $\textbf{ab}=\sum_{g,h\in G}a_gb_{g^{-1}h}h$, where $k\in\mathcal{S}$, $\textbf{a}=\sum_{g\in G}a_gg$ in $\mathcal{S}G$, and $\textbf{b}=\sum_{g\in G}b_g g$ in $\mathcal{S}G$. Then $\mathcal{S}G$ is a group ring with the identity $1=1_{\mathcal{S}}1_{G}$, where $1_{\mathcal{S}}$ and $1_{\mathcal{G}}$ denote the identity of the ring $\mathcal{S}$ and the group $G$, respectively.
Note that $\mathcal{S}G$ is a finite commutative ring as $\mathcal{S}$ and $G$ are finite. So, by \cite[Theorem~VI.2 and Proof of Theorem~VI.2]{McDonal1974}, $\mathcal{S}G$ can be decomposed uniquely as
\begin{equation}\label{sg-1}
 \mathcal{S}G=\mathcal{S}Ge'_0\oplus\mathcal{S}Ge'_1\oplus\cdots\oplus\mathcal{S}Ge'_t,  
\end{equation}
where, each $e'_i$ is idempotent, $e'_ie'_j=0$, for $1\leq i\neq j\leq t$, and $\sum_{i=1}^te'_i=1$.
Every element $\textbf{a}=\sum_{g\in G}a_gg$ of $\mathcal{S}G$ can be viewed as a word $(a_g)_{g\in G}$ of length $n$ over $\mathcal{S}$, so the Hamming weight of $\textbf{a}$ denoted as $wt_{H}(\textbf{a})$, is defined by $$wt_{H}(\textbf{a})=wt_{H}((a_g)_{g\in G})=\mid\{a_g\in\mathcal{S}~|~a_g\neq 0~\forall~g\in G\} \mid.$$
Now, we introduce Hermitian inner product $\langle ~,~\rangle_{H}$ as follows 
\begin{eqnarray*}
    &&\langle ~,~\rangle_{H}~:~\mathcal{S}G\times \mathcal{S}G\rightarrow \mathcal{S}~\text{is defined by}~\\
    &&\left\langle \sum_{g\in G}a_gg,\sum_{h\in G}b_hh\right\rangle_{H}=\sum_{g\in G}a_g\sigma(b_g), ~\text{for all}~ \sum_{g\in G}a_gg,\sum_{h\in G}b_hh\in \mathcal{S}G.
\end{eqnarray*}
 We call a code $\mathfrak{C}$ of $\mathcal{S}G$ is abelian if $\mathfrak{C}$ is an ideal of $\mathcal{S}G$. With the above Hermitian inner product, the corresponding Hermitian dual of $\mathfrak{C}$ of $\mathcal{S}G$ is defined as $$\mathfrak{C}^{\perp_H}=\{\textbf{a}=\sum_{g\in G}a_gg\in \mathcal{S}G~|~\langle \textbf{a},\textbf{c} \rangle_{H}=0~\text{for all}~\textbf{c}\in \mathfrak{C}\}.$$
We say that an abelian code $\mathfrak{C}$ of $\mathcal{S}G$ is Hermitian LCD if $\mathfrak{C}\cap \mathfrak{C}^{\perp_H}=\{0\}$. Next, we consider a mapping $\phi~:~\mathcal{S}G\rightarrow\mathcal{S}$ by $\sum_{g\in G}a_gg\mapsto a_{1_G}$. It is clear that $\phi$ is a surjective ring homomorphism. Furthermore, it is easy to observe
that $\langle \textbf{a}, \textbf{b} \rangle_{H}=\phi\left(\textbf{a}\widehat{\sigma}(\textbf{b})\right)$, where the mapping $\widehat{\sigma}~:~\mathcal{S}G\rightarrow \mathcal{S}G$ is defined by $\textbf{a}=\sum_{g\in G}a_gg\mapsto \widehat{\sigma}(\textbf{a})=\sum_{g\in G}\sigma(a_g)g^{-1}$. One can immediately verify that $\widehat{\sigma}$ is a ring automorphism on $\mathcal{S}G$.
There is a natural subjective ring homomorphism $\mathcal{S}$ to $\mathbb{F}$, i.e., $\pi : \mathcal{S}\rightarrow\mathbb{F}$ is defined by $r\mapsto \pi(r)=r+\textbf{m}$, for any $r\in \mathcal{S}$. Naturally, we extend this map $\pi :  \mathcal{S}G\rightarrow\mathbb{F}G$ by $r\mapsto \pi(r)$, for any $r\in \mathcal{S}G$.

By Maschke's Theorem (\cite{AB1995}) and under the condition $\gcd\{n,q\}=1$, $\mathbb{F}G$ can be uniquely decomposed as
\begin{equation}
  \mathbb{F}G=\mathbb{F}Ge_0\oplus \mathbb{F}Ge_1\oplus \mathbb{F}Ge_2\oplus\cdots\oplus \mathbb{F}Ge_r\oplus \mathbb{F}Ge_{r+1}\oplus\cdots\oplus \mathbb{F}Ge_{r+s},  
\end{equation}\label{fg-2}
where each $e_i$ is primitive idempotent element and $e_ie_j=0$ for $i\neq j$ and $1=\sum_{1}^{r+s}e_i$. One can verify that $\mathbb{F}Ge_i$ is a vector space over $\mathbb{F}$ and then write
$$\mu_q(n)=\min\{\dim_{\mathbb{F}}\mathbb{F}Ge~\big|~e\in E\setminus\{e_0\}\},$$ for more details to see \cite[Lemma II.2]{Fan22}.
Note that the restriction map on $\mathbb{F}G$, i.e., $\tau:=\widehat{\sigma}\mid_{\mathbb{F}G}~:~\mathbb{F}G\rightarrow\mathbb{F}G$, is an automorphism on $\mathbb{F}G$ and is defined by $\alpha=\sum_{g\in G}\alpha_gg\mapsto \alpha^\tau=\sum_{g\in G}\alpha_g^qg^{-1}$.
In addition, we fix the notation as
\begin{itemize}
 \item $E=\{e_0\}\cup \{e_1,e_2,\ldots,e_r\}\cup\{e_{r+1},e_{r+1}^{\tau},\ldots,e_{r+s},e_{r+s}^{\tau}\}$, where $e_0^\tau=e_0$, $e_i^\tau=e_i$, for $i=1,2,\ldots,r$, and $e_{r+j}^\tau\neq e_{r+j}$, for $j=1,2,\ldots,s$.
\item $E^*=E\setminus\{e_0\}=\{e_1,e_2,\ldots,e_r\}\cup\{e_{r+1},e_{r+1}^{\tau},\ldots,e_{r+s},e_{r+s}^{\tau}\}$.
\item $\widehat{e}_{r+j}=e_{r+j}+e_{r+j}^\tau$, where $j=1,2,\ldots,s$.
\item $\widehat{E}=\{e_0,e_1,\ldots,e_r,\widehat{e}_{r+1},\ldots,\widehat{e}_{r+s}\}$.
\item $\widehat{E}^\dagger=\widehat{E}\setminus\{e_0\}=\{e_1,\ldots,e_r,\widehat{e}_{r+1},\ldots,\widehat{e}_{r+s}\}$.
\item $\mathcal{A}_{i}=\mathbb{F}Ge_i$, for $i=0,1,\ldots,r+s$.
\item $\widehat{\mathcal{A}}_{r+j}=\mathbb{F}Ge_i+\mathbb{F}G\widehat{e}_{r+j}$, for $j=1,\ldots,s$.
\item $\mathcal{I}^\tau=\{a\in \mathcal{I}~|~a^\tau=a\}$, where $\mathcal{I}$ is an ideal of $\mathbb{F}G$.
\end{itemize}
Furthermore, we notice that the restriction of $\tau$ on $\mathcal{A}_{r+j}$ induces a map $\mathcal{A}_{r+j}=\mathbb{F}Ge_{r+j}\rightarrow\mathcal{A}_{r+j}^\tau=\mathbb{F}Ge_{r+j}^\tau$ as $a\mapsto a^\tau$ is an $\mathbb{F}$-linear isomorphism. It is immediately verified that $\mathcal{A}_{r+j}+\mathcal{A}_{r+j}^\tau$ is invariant under the automorphism $\tau$. Now, we provide a lemma which found in \cite{ZLQR2024}.
\begin{lemma}\cite[Lemma~2.2]{ZLQR2024}\label{lm-2.1}
We are keeping the notation as mentioned above. Then the following statements are hold
\begin{enumerate}
\item[a)] $\left( \mathbb{F}Ge_0\right)^\tau=\mathbb{F}_qe_0$; 
\item[b)] $\left(\mathbb{F}G\right)^\tau$ is a vector space over $\mathbb{F}_q$ and $\left(\mathbb{F}G\right)^\tau$ can be decomposed uniquely as $$\left(\mathbb{F}G\right)^\tau=\mathbb{F}_qe_0\bigoplus\left(\oplus_{i=1}^r(\mathbb{F}Ge_i)^\tau\right) \bigoplus\left(\oplus_{j=1}^s(\mathbb{F}G\widehat{e}_{r+j})^\tau\right);$$
\item[c)] $\dim_{\mathbb{F}_q}\left(\mathbb{F}G\right)^\tau=n$;
\item[d)] $(\mathbb{F}G\widehat{e}_{r+j})^\tau=\{a+a^\tau~|~a\in\mathbb{F}Ge_{r+j}\}$ and $\dim_{\mathbb{F}_q}\left(\mathbb{F}G\widehat{e}_{r+j}\right)^\tau=\dim_{\mathbb{F}_q}\mathbb{F}Ge_{r+j}$, for $j=1,\ldots,s$; 
\item[e)]  $\dim_{\mathbb{F}_q}\left(\mathbb{F}Ge_{i}\right)^\tau=\frac{1}{2}\dim_{\mathbb{F}_q}\mathbb{F}Ge_{i}$, for $i=1,\ldots,r$.
\end{enumerate}
\end{lemma}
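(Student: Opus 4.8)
The plan is to read this statement as a combination of Galois descent for the quadratic residue‑field extension $\mathbb{F}=\mathbb{F}_{q^{2}}$ over $\mathbb{F}_{q}$ with the unique decomposition of $\mathbb{F}G$ into primitive‑idempotent blocks recalled above. The one structural fact that drives everything is that $\tau$ is a $\sigma$‑semilinear involution of the $\mathbb{F}$‑vector space $\mathbb{F}G$, where $\sigma\colon x\mapsto x^{q}$ is the Frobenius generator of $\mathrm{Gal}(\mathbb{F}/\mathbb{F}_{q})$: indeed $\tau(\lambda\alpha)=\lambda^{q}\alpha^{\tau}$ for $\lambda\in\mathbb{F}$ and $\alpha\in\mathbb{F}G$, $\tau$ fixes $\mathbb{F}_{q}$ pointwise since $c^{q}=c$ for $c\in\mathbb{F}_{q}$, and $\tau^{2}=\mathrm{id}$. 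I would establish the five parts in the order (a), (e), (d), (b), (c), since (b) and (c) are assembled from the others.

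For (a), use that $e_{0}=\tfrac{1}{n}\sum_{g\in G}g$ is the principal idempotent, so $\mathbb{F}Ge_{0}=\mathbb{F}e_{0}$ is one‑dimensional over $\mathbb{F}$ and $e_{0}^{\tau}=\tfrac{1}{n}\sum_{g\in G}g^{-1}=e_{0}$; hence $(\lambda e_{0})^{\tau}=\lambda^{q}e_{0}$, which equals $\lambda e_{0}$ exactly when $\lambda^{q}=\lambda$, i.e.\ $\lambda\in\mathbb{F}_{q}$. For (e), since $e_{i}^{\tau}=e_{i}$ and $\tau$ is a ring automorphism of $\mathbb{F}G$ (so it permutes the primitive idempotents), $\tau$ restricts to a $\sigma$‑semilinear involution of $V:=\mathbb{F}Ge_{i}$; the descent fact I would prove is that any such involution of a finite‑dimensional $\mathbb{F}$‑space $V$ satisfies $V=\mathbb{F}\cdot V^{\tau}$ with an $\mathbb{F}_{q}$‑basis of $V^{\tau}$ remaining $\mathbb{F}$‑linearly independent in $V$, whence $\dim_{\mathbb{F}_{q}}V^{\tau}=\dim_{\mathbb{F}}V=\tfrac12\dim_{\mathbb{F}_{q}}V$. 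To get $V=\mathbb{F}\cdot V^{\tau}$, fix $\theta\in\mathbb{F}\setminus\mathbb{F}_{q}$; for $v\in V$ both $v+v^{\tau}$ and $\theta v+\theta^{q}v^{\tau}$ lie in $V^{\tau}$, and the matrix $\left(\begin{smallmatrix}1&1\\\theta&\theta^{q}\end{smallmatrix}\right)$ is invertible over $\mathbb{F}$ because $\theta^{q}\neq\theta$, so $v$ is an $\mathbb{F}$‑combination of two elements of $V^{\tau}$. For (d), since $e_{r+j}^{\tau}\neq e_{r+j}$ are distinct primitive idempotents they are orthogonal, so $\mathbb{F}G\widehat{e}_{r+j}=\mathbb{F}Ge_{r+j}\oplus\mathbb{F}Ge_{r+j}^{\tau}$ and $\tau$ interchanges the two summands; therefore an element $a+b$ with $a\in\mathbb{F}Ge_{r+j}$ and $b\in\mathbb{F}Ge_{r+j}^{\tau}$ is $\tau$‑fixed iff $b=a^{\tau}$, which gives $(\mathbb{F}G\widehat{e}_{r+j})^{\tau}=\{a+a^{\tau}\mid a\in\mathbb{F}Ge_{r+j}\}$, and the $\mathbb{F}_{q}$‑linear map $a\mapsto a+a^{\tau}$ from $\mathbb{F}Ge_{r+j}$ onto this space is injective (the two blocks meet trivially), hence a dimension‑preserving isomorphism.

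For (b), group the Wedderburn blocks of $\mathbb{F}G$ into the $\tau$‑stable pieces $\mathbb{F}Ge_{0}$, the blocks $\mathbb{F}Ge_{i}$ for $1\le i\le r$, and the sums $\mathbb{F}G\widehat{e}_{r+j}=\mathbb{F}Ge_{r+j}\oplus\mathbb{F}Ge_{r+j}^{\tau}$ for $1\le j\le s$. Since taking $\tau$‑fixed points commutes with a finite direct sum of $\tau$‑stable subspaces (if $v=\sum v_{k}$ with $v_{k}$ in the $k$‑th summand, then $\tau(v)=v$ forces $\tau(v_{k})=v_{k}$ for each $k$), we obtain $(\mathbb{F}G)^{\tau}=\mathbb{F}_{q}e_{0}\oplus\bigl(\oplus_{i=1}^{r}(\mathbb{F}Ge_{i})^{\tau}\bigr)\oplus\bigl(\oplus_{j=1}^{s}(\mathbb{F}G\widehat{e}_{r+j})^{\tau}\bigr)$ using (a); each summand is the intersection of $(\mathbb{F}G)^{\tau}$ with a $\tau$‑stable subsum of the unique primitive‑idempotent decomposition of $\mathbb{F}G$, so it is canonically determined and the decomposition is unique. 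Finally (c): summing $\mathbb{F}_{q}$‑dimensions and invoking (a), (e), (d),
\[
\dim_{\mathbb{F}_{q}}(\mathbb{F}G)^{\tau}=1+\sum_{i=1}^{r}\tfrac12\dim_{\mathbb{F}_{q}}\mathbb{F}Ge_{i}+\sum_{j=1}^{s}\dim_{\mathbb{F}_{q}}\mathbb{F}Ge_{r+j},
\]
and rewriting each term via $\dim_{\mathbb{F}_{q}}W=2\dim_{\mathbb{F}}W$ for any $\mathbb{F}$‑subspace $W$, together with $\dim_{\mathbb{F}}\mathbb{F}Ge_{r+j}=\dim_{\mathbb{F}}\mathbb{F}Ge_{r+j}^{\tau}$ (the $\mathbb{F}$‑isomorphism recalled before the lemma), collapses the right‑hand side to $\dim_{\mathbb{F}}\mathbb{F}G=|G|=n$; alternatively (c) is immediate from the global descent identity $\dim_{\mathbb{F}_{q}}(\mathbb{F}G)^{\tau}=\dim_{\mathbb{F}}\mathbb{F}G$. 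The only genuine subtlety — the step I would flag as the main obstacle — is that $\tau$ is semilinear, not $\mathbb{F}$‑linear, so the fixed subspace cannot be split off by the naive averaging projector $\tfrac12(1+\tau)$ (which is moreover unavailable when $q$ is even); the remedy is the descent argument above with an auxiliary $\theta\in\mathbb{F}\setminus\mathbb{F}_{q}$, and once that is in place the rest is bookkeeping, keeping careful track of which primitive idempotents $\tau$ fixes (those indexed $0,1,\dots,r$) and which it permutes in pairs (those indexed $r+1,\dots,r+s$).
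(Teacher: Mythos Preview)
The paper does not supply its own proof of this lemma: it is quoted verbatim as \cite[Lemma~2.2]{ZLQR2024} and used as a black box, so there is no in-paper argument to compare against. Your proof is correct and self-contained; the Galois-descent step (using $\theta\in\mathbb{F}\setminus\mathbb{F}_{q}$ to show $V=\mathbb{F}\cdot V^{\tau}$ for a $\tau$-stable block $V$) is exactly the right way to handle the semilinear involution, and your treatment of the swapped pair $\mathbb{F}Ge_{r+j}\oplus\mathbb{F}Ge_{r+j}^{\tau}$ via the injective $\mathbb{F}_{q}$-linear map $a\mapsto a+a^{\tau}$ is clean and accurate.
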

Further, we denote 
\begin{align*}
    &k_e=\dim_{\mathbb{F}}\mathbb{F}Ge~\text{for all}~e\in \widehat{E}\setminus\{e_0\},&\\
    &k_i=k_{e_i}, ~\text{for}~i=1,2,\ldots,r;&\\
    &k_{r+j}=k_{\widehat{e}_{r+j}}~\text{and}~k_{r+j}~\text{is even} ~\text{for}~j=1,2,\ldots,s;&\\
&\dim_{\mathbb{F}}\mathbb{F}Ge_{r+j}=\dim_{\mathbb{F}}\mathbb{F}Ge_{r+j}^\tau~\text{for}~j=1,2,\ldots,s.&\\
\end{align*}
\section{Hermitian LCD $2$-quasi-abelian codes}\label{ff-3}
The main goal of this section is to construct of a class of Hermitian-LCD $2$-quasi-abelian group code over $\mathbb{F}$. Set $$(\mathbb{F}G)^2:=\mathbb{F}G\times \mathbb{F}G=\{(a,b)~|~a,b\in\mathbb{F}G\}$$
is an $\mathbb{F}G$-module. Any $\mathbb{F}G$-submodule $\mathcal{C}$ of $(\mathbb{F}G)^2$ is called $2$-quasi-abelian code over $\mathbb{F}$. We further extend the Hermitian inner product in $(\mathbb{F}G)^2$ naturally. So in a similar way, we define the Hermitian dual code over $\mathbb{F}$ of $\mathbb{F}G$. For any $a,b\in\mathbb{F}G$, we denote $\mathcal{C}_{a,b}$ as follows
$$\mathcal{C}_{a,b}:=\{s(a,b)~|~s\in\mathbb{F}G\}.$$
Clearly, $\mathcal{C}_{a,b}$ is an $\mathbb{F}G$-submodule of $(\mathbb{F}G)^2$ generated by $(a,b)$. Denote, $\mathbb{F}^\times=\mathbb{F}\setminus\{0\}$.
\begin{thm}\label{th-3.1}
For any $a,b\in\mathbb{F}G$, and if $aa^\tau+bb^\tau=\lambda$, where $\lambda\in\mathbb{F}^\times$, then $\mathcal{C}_{a,b}$ is a Hermitian LCD $2$-quasi-abelian code, i.e., $\mathcal{C}_{a,b}\cap\mathcal{C}_{a,b}^{\perp_H}=\{0\}.$   
\end{thm}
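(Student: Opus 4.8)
The plan is to show the stronger statement that $(a,b)$ is a generating vector whose associated element $aa^\tau + bb^\tau$ being a unit forces $\mathcal{C}_{a,b}$ to be a direct summand complemented by its own Hermitian dual. First I would recall that the Hermitian inner product on $(\mathbb{F}G)^2$ can be rewritten algebraically: for $s,t \in \mathbb{F}G$ we have $\langle s(a,b), t(a,b)\rangle_H = \phi\big(s(a,b)\,\widehat\sigma(t(a,b))\big)$, and working this out coordinatewise gives $\langle s(a,b), t(a,b)\rangle_H = \phi\big(s\,(aa^\tau + bb^\tau)\,\widehat\sigma(t)\big) = \phi\big(s\lambda\,\widehat\sigma(t)\big) = \lambda\,\phi(s\,\widehat\sigma(t))$, using that $\tau = \widehat\sigma|_{\mathbb{F}G}$ and that $\lambda \in \mathbb{F}^\times$ is central and fixed by $\sigma$ (since $\lambda = aa^\tau + bb^\tau = \tau(aa^\tau + bb^\tau)$ shows $\lambda$ lies in the fixed field, but in any case $\lambda \in \mathbb{F}^\times \subseteq \mathbb{F}$ is a scalar). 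Thus the Hermitian form restricted to $\mathcal{C}_{a,b}$, transported along the surjection $\mathbb{F}G \to \mathcal{C}_{a,b}$, $s \mapsto s(a,b)$, is (up to the unit $\lambda$) the standard nondegenerate form $(s,t) \mapsto \phi(s\,\widehat\sigma(t))$ on $\mathbb{F}G$.

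Next I would observe that the map $s \mapsto s(a,b)$ is actually an isomorphism of $\mathbb{F}G$-modules onto $\mathcal{C}_{a,b}$: if $s(a,b) = 0$ then $sa = sb = 0$, hence $s(aa^\tau + bb^\tau) = 0$, i.e. $s\lambda = 0$, and since $\lambda$ is a unit, $s = 0$. So $\mathcal{C}_{a,b} \cong \mathbb{F}G$ as $\mathbb{F}G$-modules, and under this identification the Hermitian form on $\mathcal{C}_{a,b}$ pulls back to a nondegenerate form on $\mathbb{F}G$. To conclude $\mathcal{C}_{a,b} \cap \mathcal{C}_{a,b}^{\perp_H} = \{0\}$ it suffices to show the restriction of $\langle\,,\,\rangle_H$ to $\mathcal{C}_{a,b}$ is nondegenerate: an element $x \in \mathcal{C}_{a,b} \cap \mathcal{C}_{a,b}^{\perp_H}$ is exactly an element of $\mathcal{C}_{a,b}$ in the radical of this restricted form. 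Writing $x = s(a,b)$, the condition $\langle x, t(a,b)\rangle_H = 0$ for all $t \in \mathbb{F}G$ becomes $\lambda\,\phi(s\,\widehat\sigma(t)) = 0$ for all $t$; choosing $t = \widehat\sigma^{-1}(g\,\widehat\sigma(s)^{*})$-type test elements — more simply, since $\widehat\sigma$ is a ring automorphism of $\mathbb{F}G$ and $t \mapsto \widehat\sigma(t)$ ranges over all of $\mathbb{F}G$, the vanishing of $\phi(s\,\widehat\sigma(t))$ for all $t$ is the vanishing of $\phi(s \cdot u)$ for all $u \in \mathbb{F}G$, which forces $s = 0$ because the trace-like pairing $(s,u)\mapsto \phi(su)$ on the Frobenius algebra $\mathbb{F}G$ is nondegenerate (concretely, $\phi\big(s \cdot g\big)$ is the coefficient of $g^{-1}$ in $s$, so if all these vanish then $s = 0$). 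Hence $x = 0$.

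The one genuinely delicate point — the main obstacle — is the bookkeeping in the identity $\langle s(a,b), t(a,b)\rangle_H = \lambda\,\phi(s\,\widehat\sigma(t))$: one must verify that $\widehat\sigma$ interacts with the group-ring multiplication and with the pairing exactly as claimed, in particular that $\widehat\sigma(t(a,b)) = \widehat\sigma(t)(\widehat\sigma(a), \widehat\sigma(b)) = \widehat\sigma(t)(a^\tau, b^\tau)$ componentwise and that $\phi\big(s\,a\,a^\tau\,\widehat\sigma(t) + s\,b\,b^\tau\,\widehat\sigma(t)\big) = \phi\big(s(aa^\tau+bb^\tau)\widehat\sigma(t)\big)$ using commutativity of $\mathbb{F}G$ (which holds since $G$ is abelian). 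Once this computation is pinned down, everything else is the standard fact that a nondegenerate restriction of the Hermitian form to a submodule forces trivial intersection with the Hermitian dual, together with the nondegeneracy of $\phi(\cdot\,\cdot)$ on $\mathbb{F}G$. I would therefore spend the bulk of the written proof on that identity and treat the rest as immediate.
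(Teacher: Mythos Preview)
Your proposal is correct and follows essentially the same route as the paper's proof: take $s(a,b)\in\mathcal{C}_{a,b}\cap\mathcal{C}_{a,b}^{\perp_H}$, compute $\langle s(a,b),t(a,b)\rangle_H=\phi\big(s\,t^\tau(aa^\tau+bb^\tau)\big)=\lambda\,\phi(s\,t^\tau)$, and use that $\lambda$ is a unit together with the nondegeneracy of $(s,u)\mapsto\phi(su)$ on $\mathbb{F}G$ to force $s=0$. Your additional remarks on the injectivity of $s\mapsto s(a,b)$ and on the bookkeeping for $\widehat\sigma$ are accurate but not strictly needed beyond what the paper already uses implicitly.
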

\begin{proof}
Let $s(a,b)\in\mathcal{C}_{a,b}\cap\mathcal{C}_{a,b}^{\perp_H}$, this gives $(sa,sb)\in \mathcal{C}_{a,b}^{\perp_H}$, which implies $\langle(sa,sb) ,(s_1a,s_1b)\rangle_H=0$, for all $s_1\in \mathbb{F}G$. It follows that $\langle sa,s_1a\rangle_H+\langle sb,s_1b\rangle_H=0$ for all $s_1\in\mathbb{F}G$. Therefore, ${\phi\mid_{\mathbb{F}G}}(ss_1^\tau(aa^\tau+bb^\tau))=0$, for all $s_1\in\mathbb{F}G$. By hypothesis, ${\phi\mid_{\mathbb{F}G}}(ss_1^\tau)=0$ for all $s_1\in \mathbb{F}G$. This implies $s=0$. Hence, the result follows immediately. 
\end{proof}
\begin{cor}\label{cor-3.1}
For $\beta\in\mathbb{F}G$ and $\lambda\in\mathbb{F}^\times$ such that $\beta\beta^\tau=\lambda-1\in\mathbb{F}^\times$, $\mathcal{C}_{1,\beta}=\{s(1,\beta)~|~s\in\mathbb{F}G\}$ is a Hermitian LCD $2$-quasi-abelian code with the rate $1/2$.  \end{cor}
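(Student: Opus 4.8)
The statement to prove is Corollary~\ref{cor-3.1}, which is essentially an immediate specialization of Theorem~\ref{th-3.1} with $a = 1$ and $b = \beta$. The plan is to verify that the hypothesis $aa^\tau + bb^\tau = \lambda \in \mathbb{F}^\times$ of Theorem~\ref{th-3.1} is satisfied in this setting, and then to compute the rate of the resulting code.

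\textbf{Step 1: Check the hypothesis of Theorem~\ref{th-3.1}.} With $a = 1$ and $b = \beta$, I first note that $1^\tau = 1$, since $\tau$ is the ring automorphism of $\mathbb{F}G$ fixing the identity element (indeed $\tau$ sends $\sum_g \alpha_g g$ to $\sum_g \alpha_g^q g^{-1}$, and applied to $1 = 1_{\mathbb{F}} \, 1_G$ this gives $1_{\mathbb{F}}^q \, 1_G^{-1} = 1$). Hence $aa^\tau + bb^\tau = 1 \cdot 1 + \beta\beta^\tau = 1 + (\lambda - 1) = \lambda$, and by hypothesis $\lambda \in \mathbb{F}^\times$. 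Therefore Theorem~\ref{th-3.1} applies and yields $\mathcal{C}_{1,\beta} \cap \mathcal{C}_{1,\beta}^{\perp_H} = \{0\}$, i.e., $\mathcal{C}_{1,\beta}$ is a Hermitian LCD $2$-quasi-abelian code.

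\textbf{Step 2: Compute the rate.} The ambient space is $(\mathbb{F}G)^2$, which has $\mathbb{F}$-dimension $2n$. The code $\mathcal{C}_{1,\beta} = \{s(1,\beta) \mid s \in \mathbb{F}G\}$ is the image of the $\mathbb{F}G$-module map $\mathbb{F}G \to (\mathbb{F}G)^2$, $s \mapsto (s, s\beta)$. This map is injective: if $(s, s\beta) = 0$ then already $s = 0$ from the first coordinate. Hence $\dim_{\mathbb{F}} \mathcal{C}_{1,\beta} = \dim_{\mathbb{F}} \mathbb{F}G = n$, and the rate is $n/(2n) = 1/2$.

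I do not anticipate a genuine obstacle here, since the corollary is a direct consequence of the theorem together with an elementary dimension count; the only points requiring a word of care are the identity $1^\tau = 1$ (so that the hypothesis reduces to $\beta\beta^\tau = \lambda - 1$) and the injectivity of $s \mapsto (s, s\beta)$, which is what guarantees the rate is exactly $1/2$ rather than merely at most $1/2$. One should also observe that such $\beta$ and $\lambda$ do exist—for instance one may take $\lambda - 1$ to be any element of $\mathbb{F}_q^\times$ that is a norm, realized as $\beta\beta^\tau$ with $\beta$ a suitable scalar in $\mathbb{F}^\times \subseteq \mathbb{F}G$—so the hypothesis is not vacuous, though strictly this is not required for the statement as phrased.
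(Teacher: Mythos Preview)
Your proof is correct and follows essentially the same approach as the paper: apply Theorem~\ref{th-3.1} with $a=1$, $b=\beta$ to obtain the LCD property, and then establish the rate by exhibiting an $\mathbb{F}G$-module isomorphism between $\mathcal{C}_{1,\beta}$ and $\mathbb{F}G$. The only cosmetic difference is that you argue via the injective map $s\mapsto (s,s\beta)$ from $\mathbb{F}G$ to $(\mathbb{F}G)^2$, whereas the paper writes down the inverse map $\varrho:s(1,\beta)\mapsto s$; the content is identical.
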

\begin{proof}
By applying Theorem~\ref{th-3.1}, take $a=1$ and $b=\beta$, the result follows immediately. For the other part, we define a mapping $\varrho: \mathcal{C}_{1,\beta}\rightarrow\mathbb{F}G$ by $s(1,\beta)\mapsto s$ for all $s\in\mathbb{F}G$. It is easy to verify that $\varrho$ is an $\mathbb{F}G$-module isomorphism between $\mathcal{C}_{1,\beta}$ and $\mathbb{F}G$. Hence, we conclude the desired result.   
\end{proof}
Using the above Corollary~\ref{cor-3.1}, we will construct a class of Hermitian LCD codes with the condition that $\lambda\in\mathbb{F}^\times$ and $\lambda-1\in\mathbb{F}^\times$.

In the following we always denote
\begin{equation}\label{cal D}
\begin{array}{l}
{\cal D}_{\lambda}=\big\{\mathcal{C}_{1,\beta}\,\big|\, \beta\in \mathbb{F}G,\, \beta\beta^\tau=\lambda-1\big\},
\end{array}\end{equation}
which is the set of all Hermitian LCD $2$-quasi-abelian codes, 
see Corollary \ref{cor-3.1}.
We always assume that $\delta$ is a real number such that
$0\le\delta\le 1-q^{-1}$, and set
\begin{equation}\label{D^le}
\textstyle
 {\cal D}^{\le\delta}_{\lambda}=
\big\{ {\cal C}_{1,\beta}\,\big|\, \mathcal{C}_{1,\beta}\in{\cal D},\, 
  \Delta(\mathcal{C}_{1,\beta})=\frac{{ wt_H}(\mathcal{C}_{1,\beta})}{2n}\le\delta\big\}
\end{equation}
and 
\begin{equation}\label{cal D-1}
\begin{array}{l}
\widehat{\cal D}_{\lambda}=\big\{\beta\in \mathbb{F}G\,\big|\, \beta\beta^\tau=\lambda-1\big\}.
\end{array}\end{equation} One can imemediately verify that $\big| {\cal D}_{\lambda} \big|=\big| \widehat{\cal D}_{\lambda} \big|$.
For the estimate of $\widehat{\cal D}_{\lambda}$, we introduce the following lemma.
\begin{lemma}
 For $\beta\in\mathbb{F}G$ and $\lambda\in\mathbb{F}^\times$, then $\beta\beta^\tau=\lambda-1$ if and only if 
 \begin{align*}
    &\qquad \begin{cases*}
        {} \beta\beta^\tau e_0=(\lambda-1)e_0 \\
        {} \beta\beta^\tau e_i=(\lambda-1)e_i  & for $1\leq i \leq r$ \\
        {} \beta\beta^\tau \widehat{e}_{r+j}=(\lambda-1)\widehat{e}_{r+j}  & for $1\leq j \leq s$.
    \end{cases*}
\end{align*}
\end{lemma}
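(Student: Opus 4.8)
The plan is to reduce everything to the orthogonality relations among the primitive idempotents of $\mathbb{F}G$ and to the identity $1_{\mathbb{F}G}=e_0+\sum_{i=1}^{r}e_i+\sum_{j=1}^{s}\widehat{e}_{r+j}$, which follows from the idempotent decomposition $1_{\mathbb{F}G}=\sum_{\ell}e_{\ell}$ (the sum running over all primitive idempotents $e_0,e_1,\dots,e_r,e_{r+1},e_{r+1}^{\tau},\dots,e_{r+s},e_{r+s}^{\tau}$) together with $\widehat{e}_{r+j}=e_{r+j}+e_{r+j}^{\tau}$. The forward implication is immediate; the substance is in the converse, and the only genuine step there is recovering the componentwise equations for $e_{r+j}$ and $e_{r+j}^{\tau}$ from the single equation attached to $\widehat{e}_{r+j}$.

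For the ``only if'' direction I would start from $\beta\beta^{\tau}=\lambda-1$, read as $\beta\beta^{\tau}=(\lambda-1)\,1_{\mathbb{F}G}$, and multiply this identity on both sides by $e_0$, by $e_i$ for $1\le i\le r$, and by $\widehat{e}_{r+j}$ for $1\le j\le s$. Since $1_{\mathbb{F}G}\cdot e=e$ for any idempotent $e\in\mathbb{F}G$, this produces exactly the three displayed equations, with nothing further to check.

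For the ``if'' direction I would assume the three equations hold. The first two are already in ``$e$-component'' form. For the third, note that $\mathbb{F}G$ is commutative (as $G$ is abelian) and that the primitive idempotents are pairwise orthogonal with $e_{r+j}^{\tau}\ne e_{r+j}$, so $\widehat{e}_{r+j}e_{r+j}=e_{r+j}^{2}+e_{r+j}^{\tau}e_{r+j}=e_{r+j}$ and likewise $\widehat{e}_{r+j}e_{r+j}^{\tau}=e_{r+j}^{\tau}$. Multiplying $\beta\beta^{\tau}\widehat{e}_{r+j}=(\lambda-1)\widehat{e}_{r+j}$ by $e_{r+j}$ gives $\beta\beta^{\tau}e_{r+j}=(\lambda-1)e_{r+j}$, and multiplying by $e_{r+j}^{\tau}$ gives $\beta\beta^{\tau}e_{r+j}^{\tau}=(\lambda-1)e_{r+j}^{\tau}$. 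Now summing the equation for $e_0$, the $r$ equations for the $e_i$, and these $2s$ equations for the $e_{r+j},e_{r+j}^{\tau}$, and using that these idempotents add up to $1_{\mathbb{F}G}$, the left-hand side collapses to $\beta\beta^{\tau}$ and the right-hand side to $(\lambda-1)\,1_{\mathbb{F}G}=\lambda-1$.

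Honestly there is no deep obstacle here: the statement is essentially bookkeeping about the Wedderburn-type decomposition. The one point to be a little careful about is the splitting step in the converse, where one passes from the $\widehat{e}_{r+j}$-equation to the two component equations; this is exactly where the orthogonality of the idempotents and the fact that $e_{r+j}^{\tau}\ne e_{r+j}$ (so that $e_{r+j}$ and $e_{r+j}^{\tau}$ are genuinely two distinct members of the full set of primitive idempotents) get used. One may optionally observe that $\beta\beta^{\tau}$ is automatically $\tau$-fixed, since $(\beta\beta^{\tau})^{\tau}=\beta^{\tau}\beta=\beta\beta^{\tau}$ by commutativity and $\tau^{2}=\mathrm{id}$, which is why grouping $e_{r+j}$ with $e_{r+j}^{\tau}$ into $\widehat{e}_{r+j}$ costs no information; but this remark is not needed for the proof.
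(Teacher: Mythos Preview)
Your proof is correct and is exactly the routine verification the paper has in mind; the paper itself simply writes ``Its proof is a straightforward exercise'' and gives no further details. One small simplification: in the converse you do not need to split the $\widehat{e}_{r+j}$-equation into its $e_{r+j}$ and $e_{r+j}^{\tau}$ parts, since $e_0+\sum_{i=1}^{r}e_i+\sum_{j=1}^{s}\widehat{e}_{r+j}=1_{\mathbb{F}G}$ already, and summing the three given families of equations directly yields $\beta\beta^{\tau}=\lambda-1$.
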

\begin{proof}
 Its proof is a straightforward exercise.   
\end{proof}
Further, we denote the following sets
 \begin{eqnarray}\label{eq-3.5}
     &&\mathcal{I}_{0}=\{\alpha\in\mathbb{F}Ge_0~\big|~\alpha\alpha^\tau=(\lambda-1)e_0\};\\\label{eq-3.6}
     &&\mathcal{I}_{i}=\{\alpha\in\mathbb{F}Ge_i~\big|~\alpha\alpha^\tau=(\lambda-1)e_i\}~\text{for all}~i=1,2,\ldots,r;\\\label{eq-3.7}
     &&\mathcal{I}_{r+j}=\{\alpha\in\mathbb{F}G\widehat{e}_{r+j}~\big|~\alpha\alpha^\tau=(\lambda-1)\widehat{e}_{r+j}\}~\text{for all}~j=1,2,\ldots,s.
 \end{eqnarray}
\begin{lemma}\label{lm-3.2}
Let $\mathcal{I}_{0}$ be a set defined in Equation~\eqref{eq-3.5}. Then $
\big|\mathcal{I}_{0}\big|=
\begin{cases}
q + 1 & \text{if } \lambda-1 \in \mathbb{F}_q^*, \\
0     & \text{if } \lambda-1 \notin \mathbb{F}_q^*.
\end{cases}
$    
\end{lemma}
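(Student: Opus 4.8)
The plan is to analyze the equation $\alpha\alpha^\tau = (\lambda-1)e_0$ directly, using the structure of the component $\mathbb{F}Ge_0$. Since $e_0$ is the primitive idempotent associated to the principal character of $G$ (the "augmentation" component), the ideal $\mathbb{F}Ge_0$ is a one-dimensional $\mathbb{F}$-subspace of $\mathbb{F}G$, namely $\mathbb{F}Ge_0 = \mathbb{F}e_0 \cong \mathbb{F}$, with $e_0$ acting as its identity. So I would first write any $\alpha \in \mathbb{F}Ge_0$ as $\alpha = c\,e_0$ for a unique $c \in \mathbb{F} = \mathbb{F}_{q^2}$, and then compute $\alpha^\tau$. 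Because $e_0$ is fixed by $g \mapsto g^{-1}$ (the coefficient vector of $e_0$ is constant, hence symmetric under inversion) and $\tau$ raises coefficients to the $q$-th power, one gets $\alpha^\tau = c^q\, e_0$; this can also be read off from Lemma~\ref{lm-2.1}(a), which says $(\mathbb{F}Ge_0)^\tau = \mathbb{F}_q e_0$, equivalently that $\tau$ restricted to $\mathbb{F}e_0$ is the Frobenius $c \mapsto c^q$ on $\mathbb{F}_{q^2}$. Hence $\alpha\alpha^\tau = c\cdot c^q\, e_0 = c^{q+1} e_0 = N_{\mathbb{F}/\mathbb{F}_q}(c)\, e_0$, where $N = N_{\mathbb{F}_{q^2}/\mathbb{F}_q}$ is the norm map.

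With this reduction, the condition $\alpha\alpha^\tau = (\lambda-1)e_0$ becomes simply $N(c) = \lambda - 1$ (note that $\lambda - 1 \in \mathbb{F}_q$ automatically would be forced if a solution exists, since the norm always lands in $\mathbb{F}_q$; this explains the case split). So $|\mathcal{I}_0|$ equals the number of solutions $c \in \mathbb{F}_{q^2}$ to $N(c) = \lambda - 1$. Now I invoke the standard fact that the norm map $N : \mathbb{F}_{q^2}^* \to \mathbb{F}_q^*$ is a surjective group homomorphism with kernel of size $(q^2-1)/(q-1) = q+1$; hence every fiber over a nonzero element has exactly $q+1$ points, while the fiber over $0$ is just $\{0\}$ (and $0 \notin \mathbb{F}_q^*$). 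Therefore: if $\lambda - 1 \in \mathbb{F}_q^*$, there are exactly $q+1$ solutions; and if $\lambda - 1 \notin \mathbb{F}_q^*$ — which here means $\lambda - 1$ is either $0$ or lies in $\mathbb{F}_{q^2}\setminus\mathbb{F}_q$, in the latter case it is not even in the image of $N$ — there are $0$ solutions. This is exactly the claimed formula.

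The only genuine content is the identification of $\mathbb{F}Ge_0$ with $\mathbb{F}$ together with the identification of $\tau|_{\mathbb{F}Ge_0}$ with the Frobenius; both follow from Lemma~\ref{lm-2.1}(a) and the observation that $e_0 = \frac{1}{n}\sum_{g\in G} g$ is the principal idempotent. I do not anticipate a real obstacle: once $\alpha\alpha^\tau = N(c)e_0$ is established, the result is immediate from the elementary theory of the norm on finite fields. The one point to state carefully is the case $\lambda - 1 \notin \mathbb{F}_q^*$: I would split it into $\lambda - 1 = 0$ (then $\alpha\alpha^\tau = 0$ forces $N(c) = 0$, i.e. $c = 0$, but then $\alpha = 0$ and $\alpha\alpha^\tau = 0 \neq (\lambda-1)e_0$ only if we are careful — actually $0 = 0\cdot e_0 = (\lambda-1)e_0$ when $\lambda = 1$, so one must double-check whether $\mathcal{I}_0$ is empty or contains $0$ in that degenerate case) and $\lambda - 1 \in \mathbb{F}_{q^2}\setminus\mathbb{F}_q$ (then $\alpha\alpha^\tau \in \mathbb{F}_q e_0$ can never equal $(\lambda-1)e_0$). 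If the intended convention makes $\lambda = 1$ excluded (which is natural, since Corollary~\ref{cor-3.1} needs $\lambda - 1 \in \mathbb{F}^\times$), the statement is clean as written; otherwise a parenthetical remark handles $\lambda = 1$.
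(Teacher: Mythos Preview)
Your proposal is correct and follows essentially the same route as the paper: identify $\mathbb{F}Ge_0$ with $\mathbb{F}$ via $\alpha=\nu e_0$, reduce the equation to $\nu^{q+1}=\lambda-1$, and invoke surjectivity of the norm map $\mathbb{F}_{q^2}^*\to\mathbb{F}_q^*$ with kernel of size $q+1$. Your treatment is in fact a bit more careful than the paper's (you make explicit why $\tau$ acts as Frobenius on this component and you flag the degenerate case $\lambda=1$, which is excluded by the standing hypothesis $\lambda-1\in\mathbb{F}^\times$).
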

\begin{proof}
By Equation~\eqref{eq-3.5}, we get that
\begin{align*}
   \big|\mathcal{I}_{0}\big|&=\big|\{\alpha\in\mathbb{F}Ge_0~\big|~\alpha\alpha^\tau=(\lambda-1)e_0\}\big|&\\
   &=\big|\{\nu e_0\in\mathbb{F}Ge_0~\big|~\nu e_0(\nu e_0)^\tau=(\lambda-1)e_0\}\big|&\\
     &=\big|\{\nu\in\mathbb{F}~\big|~\nu\nu^\tau=(\lambda-1)\}\big|.&\\
\end{align*} 
Thus,
$$
\big|\mathcal{I}_{0}\big|=\big|\{\nu\in\mathbb{F}~\big|~\nu\nu^\tau=(\lambda-1)\}\big|=
\begin{cases}
q + 1 & \text{if } \lambda-1 \in \mathbb{F}_q^*, \\
0     & \text{if } \lambda-1 \notin \mathbb{F}_q^*.
\end{cases}.
$$
This follows since the map $\nu \mapsto {\nu}^{q+1}$ from $\mathbb{F}_{q^2}^* \to \mathbb{F}_q^*$ is a surjective group homomorphism with kernel of size $q + 1$.
\end{proof}
\begin{lemma}\label{lm-3.3}
Let $\mathcal{I}_{i}$ be a set defined in Equation~\eqref{eq-3.6}. Then $\big|\mathcal{I}_{i}\big|=q^{k_i}+1$, for $i=1,2,\ldots,r$.    
\end{lemma}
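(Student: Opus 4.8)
The plan is to reduce the count to a counting problem over the field $\mathbb{F}=\mathbb{F}_{q^2}$ in the component algebra $\mathcal{A}_i=\mathbb{F}Ge_i$, exactly as in Lemma~\ref{lm-3.2}, but now the component is a field extension of $\mathbb{F}$ rather than $\mathbb{F}$ itself. Recall that since $e_i$ is a primitive idempotent with $e_i^\tau=e_i$, the ideal $\mathcal{A}_i=\mathbb{F}Ge_i$ is a finite field, say $\mathbb{F}_{q^{2k_i}}$, of dimension $k_i$ over $\mathbb{F}$ (hence $2k_i$ over $\mathbb{F}_q$), with multiplicative identity $e_i$. First I would check that the restriction of $\tau$ to $\mathcal{A}_i$ is a field automorphism of order $2$ whose fixed subfield is $\mathcal{A}_i^\tau=(\mathbb{F}Ge_i)^\tau$; by Lemma~\ref{lm-2.1}(e) this fixed field has $\mathbb{F}_q$-dimension $k_i$, i.e.\ it is $\mathbb{F}_{q^{k_i}}$, and $\mathcal{A}_i=\mathbb{F}_{q^{2k_i}}$ is its unique quadratic extension. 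Consequently the map $\alpha\mapsto\alpha\alpha^\tau$ on $\mathcal{A}_i$ is precisely the relative norm $N:\mathbb{F}_{q^{2k_i}}\to\mathbb{F}_{q^{k_i}}$ of this quadratic extension (where both fields are regarded with identity $e_i$), given explicitly by $\alpha\mapsto\alpha^{1+q^{k_i}}$.

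Next I would identify the target element $(\lambda-1)e_i$. Since $\lambda\in\mathbb{F}^\times$, the element $(\lambda-1)e_i$ lies in $\mathbb{F}e_i\subseteq\mathcal{A}_i$; moreover $(\lambda-1)e_i$ is a nonzero element of $\mathcal{A}_i$ (note $\lambda\ne 1$ because $\beta\beta^\tau=\lambda-1\in\mathbb{F}^\times$ forces $\lambda-1\ne 0$ — and this is exactly the regime of $\mathcal{D}_\lambda$). Here a small verification is needed: one must confirm that $(\lambda-1)e_i$ is fixed by $\tau|_{\mathcal{A}_i}$, i.e.\ lies in the fixed field $\mathbb{F}_{q^{k_i}}$, so that it is actually in the image of the norm. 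This holds because $\tau$ acts on $\mathbb{F}$-scalars by $x\mapsto x^q$, so $((\lambda-1)e_i)^\tau=(\lambda-1)^q e_i$, and one needs $(\lambda-1)^q=\lambda-1$; since $\lambda\in\mathbb{F}^\times$ this need not hold for all $\lambda$. I would therefore handle this carefully: if $(\lambda-1)e_i\notin\mathcal{A}_i^\tau$ the set $\mathcal{I}_i$ is empty, so the stated formula $|\mathcal{I}_i|=q^{k_i}+1$ implicitly presupposes $(\lambda-1)e_i$ is a norm; I will either add the hypothesis $\lambda-1\in\mathbb{F}_q^\times$ (consistent with Lemma~\ref{lm-3.2}) or note that $\mathbb{F}_q\subseteq\mathcal{A}_i^\tau$ always, so every element of $\mathbb{F}_q^\times e_i$ is a norm value.

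Granting $(\lambda-1)e_i$ is a nonzero norm value, the count is standard: the norm map $N:\mathbb{F}_{q^{2k_i}}^\times\to\mathbb{F}_{q^{k_i}}^\times$ is a surjective group homomorphism between cyclic groups of orders $q^{2k_i}-1$ and $q^{k_i}-1$, hence every fiber over a point of $\mathbb{F}_{q^{k_i}}^\times$ has size $\ker N = (q^{2k_i}-1)/(q^{k_i}-1)=q^{k_i}+1$. Since $\alpha\alpha^\tau=(\lambda-1)e_i\ne 0$ forces $\alpha$ to be a unit of the field $\mathcal{A}_i$, the solution set $\mathcal{I}_i$ is exactly this one fiber, so $|\mathcal{I}_i|=q^{k_i}+1$. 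I expect the only real obstacle to be the bookkeeping around whether $(\lambda-1)e_i$ genuinely lies in the fixed field $\mathcal{A}_i^\tau$ (equivalently, is a norm); everything else is the structure theory of finite fields applied componentwise, mirroring the proof of Lemma~\ref{lm-3.2} verbatim.
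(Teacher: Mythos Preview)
Your proposal is correct and follows essentially the same route as the paper: both identify $\tau|_{\mathcal{A}_i}$ as the order-$2$ Galois automorphism of the field $\mathbb{F}Ge_i\cong\mathbb{F}_{q^{2k_i}}$ over its fixed subfield $\mathbb{F}_{q^{k_i}}$ (so that $\alpha^\tau=\alpha^{q^{k_i}}$), and then count the fiber of the norm map, which has size $q^{k_i}+1$. Your extra care in checking that $(\lambda-1)e_i$ actually lies in the fixed field $\mathcal{A}_i^\tau$ (equivalently $\lambda-1\in\mathbb{F}_q^\times$) is a point the paper glosses over here but is consistent with the case split in Lemma~\ref{lm-3.2}.
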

\begin{proof}
By applying Lemma~\ref{lm-2.1}, we conclude that $(\mathbb{F}Ge_i)^\tau$ is a subfield of $\mathbb{F}Ge_i$ with $\dim(\mathbb{F}Ge_i)^\tau=\frac{1}{2}\dim \mathbb{F}Ge_i$. Note that $\tau$ on $\mathbb{F}Ge_i$ is a Galois automorphism of $\mathbb{F}Ge_i$. It means that the order of $\tau$ is equal to $2$. Thus, we get $\alpha^\tau=\alpha^{q^k_i}$ for all $\alpha\in \mathbb{F}Ge_i$, $i=1,2,\ldots,r$. Hence, $\big|\mathcal{I}_{i}\big|=q^{k_i}+1$, for $i=1,2,\ldots,r$ (see \cite[Theorem 7.16 and Corollary 7.17]{ZWan11}). 
\end{proof} 
\begin{lemma}\label{lm-3.4}
Let $\mathcal{I}_{r+j}$ be a set defined in Equation~\eqref{eq-3.7}. Then $\big|\mathcal{I}_{r+j}\big|=q^{k_{r+j}}-1$, for $j=1,2,\ldots,s$.    
\end{lemma}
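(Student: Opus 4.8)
The plan is to mimic the structure of the proofs of Lemmas~\ref{lm-3.2} and~\ref{lm-3.3}, but now working inside the $\tau$-invariant ring $\widehat{\mathcal{A}}_{r+j}=\mathbb{F}G\widehat{e}_{r+j}$, which splits as $\mathbb{F}Ge_{r+j}\oplus\mathbb{F}Ge_{r+j}^\tau$ with the automorphism $\tau$ swapping the two summands. First I would write a general element $\alpha\in\mathbb{F}G\widehat{e}_{r+j}$ as $\alpha=\alpha_1+\alpha_2$ with $\alpha_1\in\mathbb{F}Ge_{r+j}$ and $\alpha_2\in\mathbb{F}Ge_{r+j}^\tau$; since $\tau$ interchanges $e_{r+j}$ and $e_{r+j}^\tau$, there is a unique $\gamma\in\mathbb{F}Ge_{r+j}$ with $\alpha_2=\gamma^\tau$, so $\alpha=\alpha_1+\gamma^\tau$ with $\alpha_1,\gamma\in\mathbb{F}Ge_{r+j}$ arbitrary. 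Then compute $\alpha^\tau=\gamma+\alpha_1^\tau$ and $\alpha\alpha^\tau=(\alpha_1+\gamma^\tau)(\gamma+\alpha_1^\tau)$; using $e_{r+j}e_{r+j}^\tau=0$, the cross terms $\alpha_1\alpha_1^\tau$ and $\gamma^\tau\gamma$ vanish, leaving $\alpha\alpha^\tau=\alpha_1\gamma+\gamma^\tau\alpha_1^\tau=\alpha_1\gamma+(\alpha_1\gamma)^\tau$, which automatically lies in $(\mathbb{F}G\widehat{e}_{r+j})^\tau$. The equation $\alpha\alpha^\tau=(\lambda-1)\widehat{e}_{r+j}=(\lambda-1)e_{r+j}+(\lambda-1)e_{r+j}^\tau$ then decomposes, via the idempotent projections, into the single condition $\alpha_1\gamma=(\lambda-1)e_{r+j}$ inside the local ring $\mathbb{F}Ge_{r+j}$, together with its $\tau$-image, which is redundant.

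The next step is to count the pairs $(\alpha_1,\gamma)\in(\mathbb{F}Ge_{r+j})^2$ satisfying $\alpha_1\gamma=(\lambda-1)e_{r+j}$. Here $\mathbb{F}Ge_{r+j}$ is a field extension of $\mathbb{F}=\mathbb{F}_{q^2}$ of degree $k_{r+j}/?$ — more precisely, $\mathbb{F}Ge_{r+j}$ is a field with $q^{2\dim_{\mathbb{F}}\mathbb{F}Ge_{r+j}}$ elements, and recalling the convention $k_{r+j}=k_{\widehat{e}_{r+j}}=\dim_{\mathbb{F}}\mathbb{F}G\widehat{e}_{r+j}=2\dim_{\mathbb{F}}\mathbb{F}Ge_{r+j}$, the multiplicative group of $\mathbb{F}Ge_{r+j}$ has order $q^{k_{r+j}}-1$. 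Since $\lambda-1\in\mathbb{F}^\times$ and $\mathbb{F}\subseteq\mathbb{F}Ge_{r+j}$, the element $(\lambda-1)e_{r+j}$ is a nonzero element of this field; hence for each of the $q^{k_{r+j}}-1$ nonzero choices of $\alpha_1$ there is exactly one $\gamma=(\lambda-1)\alpha_1^{-1}e_{r+j}$, and $\alpha_1=0$ gives no solution. Therefore the number of valid pairs is $q^{k_{r+j}}-1$. Because the correspondence $\alpha\leftrightarrow(\alpha_1,\gamma)$ is a bijection between $\mathbb{F}G\widehat{e}_{r+j}$ and $(\mathbb{F}Ge_{r+j})^2$, we get $|\mathcal{I}_{r+j}|=q^{k_{r+j}}-1$, as claimed; note that, unlike $\mathcal{I}_0$ and $\mathcal{I}_i$, this set is nonempty regardless of whether $\lambda-1$ lies in a smaller subfield, because the constraint is now a linear (invertibility) condition rather than a norm condition.

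The main obstacle I anticipate is bookkeeping rather than anything deep: one must be careful that $\tau$ genuinely swaps $e_{r+j}$ and $e_{r+j}^\tau$ and acts as the $q$-power map twisted by $g\mapsto g^{-1}$, so that $(\alpha_1\gamma)^\tau$ really is the component of $\alpha\alpha^\tau$ in $\mathbb{F}Ge_{r+j}^\tau$ and the two idempotent-projected equations are exchanged by $\tau$ (hence equivalent). A secondary point to get right is the indexing convention for $k_{r+j}$: the statement and the dimension bookkeeping at the end of Section~\ref{sec:2} fix $k_{r+j}=\dim_{\mathbb{F}}\mathbb{F}G\widehat{e}_{r+j}$ and record that it is even with $\dim_{\mathbb{F}}\mathbb{F}Ge_{r+j}=\dim_{\mathbb{F}}\mathbb{F}Ge_{r+j}^\tau=k_{r+j}/2$, so that $|(\mathbb{F}Ge_{r+j})^\times|=|\mathbb{F}_{q^2}^{k_{r+j}/2}|-1=q^{k_{r+j}}-1$; I would state this explicitly so the exponent in the final count is unambiguous. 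Once these conventions are pinned down, the count drops out immediately from the observation that multiplication by a fixed nonzero field element is a bijection.
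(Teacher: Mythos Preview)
Your proposal is correct and follows essentially the same route as the paper: decompose $\alpha\in\mathbb{F}G\widehat{e}_{r+j}$ along $\mathbb{F}Ge_{r+j}\oplus\mathbb{F}Ge_{r+j}^\tau$, use $e_{r+j}e_{r+j}^\tau=0$ to kill the cross terms in $\alpha\alpha^\tau$, reduce the condition to a single product equation in the field $\mathbb{F}Ge_{r+j}$, and count via $|(\mathbb{F}Ge_{r+j})^\times|=q^{k_{r+j}}-1$. Your write-up is in fact a bit more explicit than the paper's about the dimension convention $k_{r+j}=\dim_{\mathbb{F}}\mathbb{F}G\widehat{e}_{r+j}$ and why the exponent comes out as stated, which is helpful.
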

\begin{proof}
 To prove the result, let us assume that $\alpha=\beta+\gamma$, where $\beta\in\mathbb{F}Ge_{r+j}$ and $\gamma\in\mathbb{F}Ge_{r+j}^\tau$. Then,
 \begin{align*}
     \mathcal{I}_{r+j}&=\{\beta+\gamma\in\mathbb{F}G\widehat{e}_{r+j}~\big|~(\beta+\gamma)(\beta+\gamma)^\tau=(\lambda-1)\widehat{e}_{r+j}\}&\\
     &=\{\beta+\gamma\in\mathbb{F}G\widehat{e}_{r+j}~\big|~\beta\gamma^\tau+\beta^\tau\gamma=(\lambda-1)e_{r+j}+(\lambda-1)e_{r+j}^\tau\}&\\
     &=\{\beta+(\lambda-1)(\beta^{-1})^\tau {e}_{r+j}^\tau~\big|~\beta\in(\mathbb{F}Ge_{r+j})^{\times}\}&\\
 \end{align*}
 Therefore, $\big|\mathcal{I}_{r+j}\big|=q^{k_{r+j}}-1$, for $j=1,2,\ldots,s$, which completes the proof.
\end{proof}
Moreover, we have the following theorem.
\begin{thm}\label{th-3.7}
 The cardinality of $\mathcal{D}_{\lambda}$ is
 \begin{equation}
     \big|\mathcal{D}_{\lambda}\big|=(q+1)\prod_{i=1}^r(q^{k_i}+1)\prod_{j=1}^s(q^{k_{r+j}}-1).
 \end{equation}
\end{thm}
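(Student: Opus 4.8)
The plan is to compute $|\mathcal{D}_\lambda|$ via the equality $|\mathcal{D}_\lambda| = |\widehat{\mathcal{D}}_\lambda|$ already noted in the excerpt, and then to decompose $\widehat{\mathcal{D}}_\lambda$ across the Wedderburn-type decomposition of $\mathbb{F}G$. Recall that $\widehat{\mathcal{D}}_\lambda = \{\beta \in \mathbb{F}G \mid \beta\beta^\tau = \lambda - 1\}$. First I would invoke the lemma immediately preceding Equations~\eqref{eq-3.5}--\eqref{eq-3.7}, which shows that $\beta\beta^\tau = \lambda - 1$ holds if and only if the three families of componentwise equations $\beta\beta^\tau e_0 = (\lambda-1)e_0$, $\beta\beta^\tau e_i = (\lambda-1)e_i$ for $1 \le i \le r$, and $\beta\beta^\tau \widehat{e}_{r+j} = (\lambda-1)\widehat{e}_{r+j}$ for $1 \le j \le s$ all hold simultaneously. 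Since $1 = e_0 + \sum_{i=1}^r e_i + \sum_{j=1}^s \widehat{e}_{r+j}$ with pairwise-orthogonal idempotents, writing $\beta = \beta e_0 + \sum_i \beta e_i + \sum_j \beta \widehat{e}_{r+j}$ identifies $\beta$ uniquely with the tuple $(\beta e_0, (\beta e_i)_i, (\beta\widehat{e}_{r+j})_j)$, and $\tau$ respects this decomposition because $e_0^\tau = e_0$, $e_i^\tau = e_i$, and $\widehat{e}_{r+j}^\tau = \widehat{e}_{r+j}$. Hence the map $\beta \mapsto (\beta e_0, (\beta e_i)_i, (\beta\widehat{e}_{r+j})_j)$ restricts to a bijection
\[
\widehat{\mathcal{D}}_\lambda \;\xrightarrow{\;\sim\;}\; \mathcal{I}_0 \times \prod_{i=1}^r \mathcal{I}_i \times \prod_{j=1}^s \mathcal{I}_{r+j},
\]
where $\mathcal{I}_0, \mathcal{I}_i, \mathcal{I}_{r+j}$ are exactly the sets defined in \eqref{eq-3.5}--\eqref{eq-3.7}.

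Given this bijection, the cardinality is multiplicative: $|\widehat{\mathcal{D}}_\lambda| = |\mathcal{I}_0| \cdot \prod_{i=1}^r |\mathcal{I}_i| \cdot \prod_{j=1}^s |\mathcal{I}_{r+j}|$. Now I would simply substitute the three counting lemmas already established: Lemma~\ref{lm-3.2} gives $|\mathcal{I}_0| = q+1$ (using that $\lambda - 1 \in \mathbb{F}_q^*$ under our standing assumption $\lambda-1 \in \mathbb{F}^\times$ with $\mathcal{D}_\lambda$ nonempty; if $\lambda - 1 \notin \mathbb{F}_q^*$ then $\mathcal{I}_0 = \emptyset$ and $\mathcal{D}_\lambda = \emptyset$, so the formula holds trivially as $0$), Lemma~\ref{lm-3.3} gives $|\mathcal{I}_i| = q^{k_i}+1$ for each $i$, and Lemma~\ref{lm-3.4} gives $|\mathcal{I}_{r+j}| = q^{k_{r+j}}-1$ for each $j$. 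Multiplying these together yields
\[
\big|\mathcal{D}_\lambda\big| = (q+1)\prod_{i=1}^r(q^{k_i}+1)\prod_{j=1}^s(q^{k_{r+j}}-1),
\]
as claimed.

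The only genuinely substantive point — and the one I would write out most carefully — is the verification that the componentwise map is a bijection onto the product, i.e., that $\beta\beta^\tau = \lambda-1$ decouples cleanly into the three independent conditions on the $\mathcal{A}_i$- and $\widehat{\mathcal{A}}_{r+j}$-components. This requires (a) that $\tau$ preserves each summand $\mathbb{F}Ge_0$, $\mathbb{F}Ge_i$, and $\widehat{\mathcal{A}}_{r+j} = \mathbb{F}G\widehat{e}_{r+j}$ — which follows from $e_0^\tau = e_0$, $e_i^\tau = e_i$ for $i \le r$, and $\widehat{e}_{r+j}^\tau = e_{r+j}^\tau + (e_{r+j}^\tau)^\tau = e_{r+j}^\tau + e_{r+j} = \widehat{e}_{r+j}$ (since $\tau$ has order $2$) — and (b) that multiplication in $\mathbb{F}G$ is blockwise with respect to the orthogonal idempotent decomposition, so that $(\beta\beta^\tau)e = (\beta e)(\beta^\tau e) = (\beta e)(\beta e)^\tau$ for each idempotent $e$ in the list. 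Both facts are already implicit in the preamble (the decomposition of $\mathbb{F}G$ and the stated properties of $\tau$), so this is really a matter of assembling them; the rest is pure bookkeeping. I would therefore present the proof as: apply the decoupling lemma, note the resulting bijection $\widehat{\mathcal{D}}_\lambda \cong \mathcal{I}_0 \times \prod_i \mathcal{I}_i \times \prod_j \mathcal{I}_{r+j}$, take cardinalities, and plug in Lemmas~\ref{lm-3.2}, \ref{lm-3.3}, and \ref{lm-3.4}, together with $|\mathcal{D}_\lambda| = |\widehat{\mathcal{D}}_\lambda|$.
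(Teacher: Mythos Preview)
Your proposal is correct and follows exactly the paper's approach: the paper's proof is the single sentence ``It follows immediately from Lemmas~\ref{lm-3.2}, \ref{lm-3.3}, and \ref{lm-3.4},'' and you have simply made explicit the underlying bijection $\widehat{\mathcal{D}}_\lambda \cong \mathcal{I}_0 \times \prod_i \mathcal{I}_i \times \prod_j \mathcal{I}_{r+j}$ that the paper leaves implicit. One small slip: your parenthetical claim that ``the formula holds trivially as $0$'' when $\lambda-1\notin\mathbb{F}_q^*$ is not right, since the right-hand side $(q+1)\prod_i(q^{k_i}+1)\prod_j(q^{k_{r+j}}-1)$ is strictly positive; the theorem as stated really does require $\lambda-1\in\mathbb{F}_q^*$, which the paper tacitly assumes.
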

\begin{proof}
It follows immediately from Lemmas~\ref{lm-3.2}, \ref{lm-3.3}, and \ref{lm-3.4}.    
\end{proof}
\begin{lemma}\cite[Lemma~IV.7]{Fan22}\label{lm-4.4}
  For $q\geq 2$, and $l_1,l_2,\ldots,l_m$ such that $l_i\geq \log_qm$, where $i=1,2,\ldots,m$, then
  \begin{enumerate}
  \item[a)] $\prod_{i=1}^m(q^{l_i}-1)\geq q^{\sum_{i=1}^ml_i}-2$;
  \item[b)] $\prod_{i=1}^m(q^{l_i}+1)\leq q^{\sum_{i=1}^ml_i}+2$.
  \end{enumerate}
\end{lemma}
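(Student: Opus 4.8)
The plan is to prove both inequalities simultaneously by induction on the number of factors $m$, exploiting the hypothesis in the equivalent form $q^{l_i}\ge m$, i.e. $q^{-l_i}\le m^{-1}$, for every $i$. The base case $m=1$ is immediate: $q^{l_1}-1\ge q^{l_1}-2$ and $q^{l_1}+1\le q^{l_1}+2$.

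For the inductive step I would peel off one factor. For part (b), write $\prod_{i=1}^{m}(q^{l_i}+1)=(q^{l_m}+1)\prod_{i=1}^{m-1}(q^{l_i}+1)$, apply the induction hypothesis to the inner product (legitimate since $l_i\ge\log_q m>\log_q(m-1)$), expand, and reabsorb the lower-order terms $q^{l_m}$ and $q^{\sum_{i<m}l_i}$ against the leading term $q^{\sum_{i=1}^m l_i}$ using $q^{l_i}\ge m$; part (a) is the mirror image, multiplying $(q^{l_m}-1)$ into the inductive lower bound for $\prod_{i<m}(q^{l_i}-1)$ and checking that the terms being subtracted stay within the allowed slack of $2$. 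An equivalent reformulation, which I would keep in reserve, is to divide through by $q^{\sum_{i=1}^m l_i}$ at the outset, reducing everything to bounding the tail product $\prod_{i=1}^m(1\pm q^{-l_i})$; here $q^{-l_i}\le m^{-1}$ together with $\prod(1+x_i)\le e^{\sum x_i}$ on one side, and a matching lower estimate for $\prod(1-x_i)$ on the other, carries the argument.

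The step I expect to be the real obstacle is the bookkeeping inside the induction: the bare bounds ``$\le q^{\sum l_i}+2$'' and ``$\ge q^{\sum l_i}-2$'' do not plug back into themselves cleanly, so I anticipate needing a strengthened inductive hypothesis that carries an explicit error term — naturally of a form involving the quantities $q^{(\sum_i l_i)-l_j}$ — and collapses it to the constant $2$ only at the very end, using $\sum_i q^{-l_i}\le 1$, which is precisely where the threshold $l_i\ge\log_q m$ is consumed. Identifying an invariant that is at once strong enough to close the induction and weak enough to yield the stated constant is the delicate point; after that, each step is a routine polynomial expansion.
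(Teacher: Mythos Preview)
The paper does not prove this lemma at all: it is quoted as \cite[Lemma~IV.7]{Fan22} and used without argument, so there is no in-paper proof to compare your plan against.

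More importantly, the inequalities \emph{as printed here} are false, and this is the actual source of the obstacle you anticipate. Take $q=2$, $m=2$, $l_1=l_2=1$ (the hypothesis $l_i\ge\log_2 2=1$ holds). Then
\[
\prod_{i=1}^{2}(q^{l_i}-1)=1\cdot 1=1<2=2^{2}-2=q^{\sum l_i}-2,
\qquad
\prod_{i=1}^{2}(q^{l_i}+1)=3\cdot 3=9>6=2^{2}+2=q^{\sum l_i}+2,
\]
so both (a) and (b) fail. Looking at how the lemma is actually applied later --- e.g.\ the step $\prod_{e}(q^{k_e}+1)\le q^{\,\sum_e k_e+2}$ in the proof of Lemma~\ref{lm-4.5} --- makes clear that the intended bounds are
\[
\prod_{i=1}^{m}(q^{l_i}-1)\ge q^{(\sum_i l_i)-2}
\qquad\text{and}\qquad
\prod_{i=1}^{m}(q^{l_i}+1)\le q^{(\sum_i l_i)+2},
\]
with the $\pm 2$ sitting in the exponent; that is evidently what \cite{Fan22} states and what the present paper uses. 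Your worry that ``the bare bounds do not plug back into themselves cleanly'' is thus not a bookkeeping issue to be fixed by a strengthened inductive invariant: no invariant can close the induction, because the target inequality is simply wrong.

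For the corrected statement your ``reserve'' approach is already a complete argument and is cleaner than the induction. Dividing by $q^{\sum l_i}$ reduces (a) and (b) to $\prod_i(1-q^{-l_i})\ge q^{-2}$ and $\prod_i(1+q^{-l_i})\le q^{2}$. Since $q^{-l_i}\le m^{-1}$, one gets $\prod_i(1+q^{-l_i})\le(1+1/m)^m\le e<4\le q^{2}$, and for $m\ge 2$ likewise $\prod_i(1-q^{-l_i})\ge(1-1/m)^m\ge 1/4\ge q^{-2}$, while the case $m=1$ is immediate. No induction is needed.
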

\begin{cor}\label{cor-4.1}
 For $\dfrac{\log_qn}{\mu_q(n)}\leq 1$, we have $\big|\mathcal{D}_{\lambda}\big|\geq q^{n-2}$.   
\end{cor}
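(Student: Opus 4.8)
The plan is to feed the exact count from Theorem~\ref{th-3.7},
$$\big|\mathcal{D}_{\lambda}\big|=(q+1)\prod_{i=1}^r(q^{k_i}+1)\prod_{j=1}^s(q^{k_{r+j}}-1),$$
into the estimate of Lemma~\ref{lm-4.4}(a). First I would discard the advantage carried by the ``$+1$'' factors in the crudest possible way: since $q+1\ge q>q-1$ and $q^{k_i}+1>q^{k_i}-1$, we get $\big|\mathcal{D}_{\lambda}\big|\ge q\prod_{i=1}^r(q^{k_i}-1)\prod_{j=1}^s(q^{k_{r+j}}-1)$, so that every remaining factor has the shape $q^{k}-1$ required by Lemma~\ref{lm-4.4}(a). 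Keeping the $(q+1)$ term as a bare factor $q$, rather than absorbing it into the product, is deliberate: an exponent equal to $1$ would in general violate the size condition of that lemma.

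Next comes a short dimension count. From the decomposition
$$\mathbb{F}G=\mathbb{F}Ge_0\oplus\Big(\bigoplus_{i=1}^r\mathbb{F}Ge_i\Big)\oplus\Big(\bigoplus_{j=1}^s\big(\mathbb{F}Ge_{r+j}\oplus\mathbb{F}Ge_{r+j}^\tau\big)\Big),$$
together with $\dim_{\mathbb{F}}\mathbb{F}Ge_0=1$ and $k_{r+j}=\dim_{\mathbb{F}}\mathbb{F}Ge_{r+j}+\dim_{\mathbb{F}}\mathbb{F}Ge_{r+j}^\tau=2\dim_{\mathbb{F}}\mathbb{F}Ge_{r+j}$, one reads off $\sum_{i=1}^rk_i+\sum_{j=1}^sk_{r+j}=n-1$. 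I then have to verify the hypothesis of Lemma~\ref{lm-4.4}(a) for the family of $m=r+s$ exponents $k_1,\dots,k_r,k_{r+1},\dots,k_{r+s}$: each $k_i$ with $1\le i\le r$ equals $\dim_{\mathbb{F}}\mathbb{F}Ge_i\ge\mu_q(n)$, each $k_{r+j}$ equals $2\dim_{\mathbb{F}}\mathbb{F}Ge_{r+j}\ge2\mu_q(n)\ge\mu_q(n)$, and since every one of these $r+s$ positive integers contributes to the sum $n-1$ we have $r+s\le n-1$. Hence, by the hypothesis $\frac{\log_qn}{\mu_q(n)}\le1$, i.e. $\log_qn\le\mu_q(n)$, every exponent is at least $\mu_q(n)\ge\log_qn\ge\log_q(r+s)$. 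Lemma~\ref{lm-4.4}(a) then gives $\prod_{i=1}^r(q^{k_i}-1)\prod_{j=1}^s(q^{k_{r+j}}-1)\ge q^{n-1}-2$, and therefore $\big|\mathcal{D}_{\lambda}\big|\ge q\,(q^{n-1}-2)=q^{n}-2q$.

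The final step is the elementary inequality $q^{n}-2q\ge q^{n-2}$, equivalently $q^{n-2}(q^2-1)\ge2q$, which is immediate from $q\ge2$ and $n\ge7$ (indeed $q^{n-2}(q^2-1)\ge3q^{5}\ge2q$). Chaining the three steps yields $\big|\mathcal{D}_{\lambda}\big|\ge q^{n-2}$.

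There is no conceptual difficulty here; the only place that needs care is the side condition of Lemma~\ref{lm-4.4}(a), namely that every exponent dominates $\log_q(r+s)$. This is precisely where the hypothesis $\frac{\log_qn}{\mu_q(n)}\le1$ is used, and it rests on the two auxiliary observations $r+s\le n-1$ and $k_{r+j}=2\dim_{\mathbb{F}}\mathbb{F}Ge_{r+j}\ge\mu_q(n)$. Everything else is routine manipulation of powers of $q$.
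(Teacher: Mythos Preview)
Your proof is correct and follows essentially the same route as the paper's: both arguments verify the side condition $k_e\ge\log_q(r+s)$ of Lemma~\ref{lm-4.4} via $r+s<n$ together with $k_e\ge\mu_q(n)\ge\log_q n$, and then feed the product formula of Theorem~\ref{th-3.7} into Lemma~\ref{lm-4.4}(a). Your write-up is simply more explicit than the paper's terse sketch --- in particular about the dimension identity $\sum_{i=1}^r k_i+\sum_{j=1}^s k_{r+j}=n-1$, about why $k_{r+j}\ge\mu_q(n)$, and about isolating the factor $q+1$ so that it does not interfere with the hypothesis of the lemma.
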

\begin{proof}
Note that $n=1+r+s$, we have $\log_q(r+s)<\log_qn\leq \mu_q(n)$. Clearly, $\log_q(r+s)<k_e$ as $\mu_q(n)\leq k_e$. From this and applying Theorem~\ref{th-3.7} and Lemma~\ref{lm-4.4}, we conclude that
$\big|\mathcal{D}_{\lambda}\big|\geq q^{n-2}$.   
\end{proof}
Next, for $a\in\mathbb{F}G$, we denote
\begin{align*}
  \widehat{E}_{a}&=\left\{e~\big|~e\in\widehat{E},~ea\neq 0\right\},\; \;\widehat{E}_{a}^\dagger=\left\{e~\big|~e\in\widehat{E}^\dagger,~ea\neq 0\right\};&\\ 
  L_{a}&=\oplus_{e\in \widehat{E}_{a}^\dagger}\mathbb{F}Ge\leq\mathbb{F}G; \; \; \ell_{a}=\sum_{e\in \widehat{E}_{a}^\dagger}n_e, ~\text{hence},~\dim L_{a}=\ell_{a}. &\\
  &\text{For an integer $\ell$ with $\mu_q(n)\leq \ell<n$, we denote},&\\
  \Omega_{\ell}&=\left\{J\leq \mathbb{F}G~\big|~\text{there exists a subset $\mathcal{S}\subseteq \widehat{E}_{a}^\dagger$ such that $J=\oplus_{e\in\mathcal{S}}\mathbb{F}Ge$, $\dim J=\ell$} \right\};&\\
  &\text{$J\in \Omega_{\ell}$, we denote}~\Tilde{J}=\mathbb{F}Ge_0+J,~\Tilde{J}^*=\left\{a~\big|~a\in\Tilde{J}, L_a=J\right\}, ~\text{hence}, ~\dim\Tilde{J}=\ell+1,~\ell_a=\ell;&\\
  &\mathcal{D}_{a,b}=\left\{\mathcal{C}_{1,\beta}~\big|~(a,b)\in\mathcal{C}_{1,\beta}\right\}.&\\ \end{align*}
\begin{lemma}\label{lm-4.5}
  If $\mathcal{D}_{a,b}\neq\emptyset$, then $\widehat{E}_a=\widehat{E}_b$, and $\mathcal{D}_{a,b}\leq q^{n+3-\ell_a}$.   
\end{lemma}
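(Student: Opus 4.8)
The plan is to analyze when a fixed pair $(a,b) \in (\mathbb{F}G)^2$ lies in some code $\mathcal{C}_{1,\beta}$, and to count how many such $\beta$ there can be. First I would observe that $(a,b) \in \mathcal{C}_{1,\beta}$ means $(a,b) = s(1,\beta)$ for some $s \in \mathbb{F}G$, i.e. $s = a$ and $b = a\beta$. So membership forces $s$ to equal $a$, and then $\beta$ must satisfy $a\beta = b$. Since $\mathbb{F}G$ is a direct sum of the fields $\mathcal{A}_i = \mathbb{F}Ge_i$, the equation $a\beta = b$ decouples componentwise: on each component $\mathcal{A}_i$ with $ae_i \neq 0$ the element $ae_i$ is a unit (a nonzero element of a field), so $\beta e_i = (b e_i)(a e_i)^{-1}$ is uniquely determined; on each component with $ae_i = 0$ we need $be_i = 0$ as well (this is where $\widehat{E}_a = \widehat{E}_b$ comes from — if some primitive idempotent kills $a$ but not $b$, no $\beta$ works, contradicting $\mathcal{D}_{a,b} \neq \emptyset$), and there $\beta e_i$ is completely free.

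Next I would count. Write $\beta = \sum_{e} \beta e$ over the primitive idempotents (grouping the $e_{r+j}, e_{r+j}^\tau$ pairs as needed). On the components in $\widehat{E}_a^\dagger$ the value of $\beta$ is forced; on the component $e_0$, since $\mathcal{C}_{1,\beta} \in \mathcal{D}_\lambda$ requires $\beta\beta^\tau = \lambda - 1$ and in particular $\beta e_0 (\beta e_0)^\tau = (\lambda-1)e_0$, the $e_0$-component ranges over a set of size at most $q+1$ by Lemma~\ref{lm-3.2}; on the remaining components — those $e \in \widehat{E}^\dagger$ with $ea = 0$, which form $\widehat{E}^\dagger \setminus \widehat{E}_a^\dagger$ — the component $\beta e$ is constrained only by the norm-type equation $\beta\beta^\tau e = (\lambda-1)e$, so by Lemmas~\ref{lm-3.3} and~\ref{lm-3.4} it ranges over a set of size at most $q^{k_e} + 1 \le q^{k_e+1}$. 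Multiplying these bounds, and using $|\mathcal{D}_{a,b}| = |\{\beta : \mathcal{C}_{1,\beta} \in \mathcal{D}_{a,b}\}|$ together with $\sum_{e \in \widehat{E}^\dagger} k_e = n - 1$ and $\sum_{e \in \widehat{E}_a^\dagger} k_e = \ell_a$, I would get
$$
|\mathcal{D}_{a,b}| \;\le\; (q+1)\prod_{e \in \widehat{E}^\dagger \setminus \widehat{E}_a^\dagger}(q^{k_e}+1) \;\le\; q^{1 + 1 + \sum_{e \in \widehat{E}^\dagger \setminus \widehat{E}_a^\dagger}(k_e + 1)} \;\le\; q^{n + 3 - \ell_a},
$$
the last step bounding the number of free components by a further power of $q$ and absorbing the $+2$ from the $(q+1)$ factor.

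I would carry out the steps in this order: (1) unravel the membership condition $(a,b) \in \mathcal{C}_{1,\beta}$ to get $a\beta = b$; (2) use the idempotent decomposition to show $\widehat{E}_a = \widehat{E}_b$ and that $\beta e$ is uniquely pinned down for $e \in \widehat{E}_a$; (3) count the free choices on $e_0$ and on $\widehat{E}^\dagger \setminus \widehat{E}_a^\dagger$ using Lemmas~\ref{lm-3.2}–\ref{lm-3.4}; (4) collect exponents using $\ell_a = \sum_{e \in \widehat{E}_a^\dagger} k_e$ and simplify with Lemma~\ref{lm-4.4}(b) to reach $q^{n+3-\ell_a}$. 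The main obstacle I anticipate is bookkeeping: making sure the components where $\beta$ is forced, where $\beta$ is free-up-to-norm, and where $\beta$ is totally free are correctly partitioned (especially the paired $e_{r+j}, e_{r+j}^\tau$ components, which must be handled through $\widehat{e}_{r+j}$ and Lemma~\ref{lm-3.4} rather than as independent field components), and then being slightly generous with powers of $q$ so the clean bound $q^{n+3-\ell_a}$ comes out without needing tight constants.
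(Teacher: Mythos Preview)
Your approach is exactly the paper's: reduce $(a,b)\in\mathcal{C}_{1,\beta}$ to $b=a\beta$, split over the idempotents in $\widehat{E}$, observe $\beta_e$ is pinned down on $\widehat{E}_a^\dagger$, free (up to the norm condition) elsewhere, and multiply the bounds from Lemmas~\ref{lm-3.2}--\ref{lm-3.4}. Two points to tighten.

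\textbf{The paired components.} For $e=\widehat{e}_{r+j}\in\widehat{E}_a^\dagger$ you cannot simply invert, because $\mathbb{F}G\widehat{e}_{r+j}$ is a product of two fields and it may happen that $ae_{r+j}\neq 0$ while $ae_{r+j}^\tau=0$. In that case $b=a\beta$ determines only the $e_{r+j}$-part $c_e$ of $\beta_e$; the $e_{r+j}^\tau$-part $d_e$ is then fixed by the norm condition $c_e d_e^\tau=(\lambda-1)e_{r+j}$. You anticipate this bookkeeping issue; just make sure the ``$\beta$ is forced'' claim for $\widehat{E}_a^\dagger$ actually invokes the norm condition in this subcase, as the paper does.

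\textbf{The final exponent count.} Your displayed chain
\[
(q+1)\prod_{e\in\widehat{E}^\dagger\setminus\widehat{E}_a^\dagger}(q^{k_e}+1)\ \le\ q^{\,2+\sum_{e}(k_e+1)}
\]
does \emph{not} give $q^{n+3-\ell_a}$: the exponent on the right is $n+1-\ell_a+m$ where $m=|\widehat{E}^\dagger\setminus\widehat{E}_a^\dagger|$, and there is no reason $m\le 2$. The crude bound $q^{k_e}+1\le q^{k_e+1}$ is too loose here. You must use Lemma~\ref{lm-4.4}(b) (as you say in step~(4)) to get $\prod_e(q^{k_e}+1)\le q^{\sum_e k_e}+2\le q^{\sum_e k_e+2}=q^{\,n+1-\ell_a}$, and then $(q+1)\le q^2$ finishes it. This is precisely what the paper does.
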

\begin{proof}
To prove the result, let us assume $\beta=\sum_{e\in\widehat{E}}\beta_e$, where $\beta_e\in\mathbb{F}Ge$ for $e\in\widehat{E}$ and $\mathcal{C}_{1,\beta}\in\mathcal{D}_{a,b}$. It implies $(a,b)=u(1,\beta)$ for some $u\in\mathbb{F}G$. This gives $a=u$ and $b=u\beta$, it follows that $b=a\beta$. Therefore, $eb=ea\beta=ea\beta_e$, for all $e\in\widehat{E}$. Furthermore, we conclude that $\mathcal{C}_{1,\beta}\in\mathcal{D}_{a,b}$ if and only if $\beta_e \beta_e^{\tau}=(\lambda-1)e$ and $e\beta=ea\beta_{e}$, for all $e\in\widehat{E}$. Hence, $\widehat{E}_a=\widehat{E}_b$ and $\big|\mathcal{D}_{a,b} \big|=\prod_{e\in\widehat{E}}\big|\{\beta_e~\big|~\beta_e\beta_e^\tau=(\lambda-1)e,eb=ea\beta_e\}\big|$. 

Case~$1:$ $e=e_0$. By applying Lemma~\ref{lm-3.2}, we have at most $q+1$ choices of $\beta_e$. 

Case~$2:$ $e\in\widehat{E}_{a}^\dagger$. Since $\widehat{E}_{a}^\dagger=\widehat{E}_{b}^\dagger$, hence $ea\neq 0$ and $eb\neq 0$.

Subcase~$2.1:$ $e=e_i$ for some $i$, where $1\leq i\leq r$. We conclude that $\mathbb{F}Ge$ is finite field and $eb=ea\beta_e$, we obtain $\beta_e=(ea)^{-1}(e\beta)$. Thus, we have only one choice of $\beta_e$. 

Subcase~$2.2:$ $e=\widehat{e}_i=e_{r+j}+e_{r+j}^\tau$ for some $j$, where $1\leq j \leq s$. Suppose that $\beta_e=c_e+d_e$, where $c_e\in\mathbb{F}Ge_{r+j}$, and $d_e\in\mathbb{F}Ge_{r+j}^\tau$. Therefore, $\beta_e\beta_e^\tau=(\lambda-1)e$ if and only if $c_ed_e^\tau=(\lambda-1)e_{r+j}$, and $eb=ea\beta_e$ if and only if $be_{r+j}=ae_{r+j}c_e$, $be_{r+j}^\tau=ae_{r+j}^\tau d_e.$ If $ea\neq 0$, we have $e_{r+j}a\neq 0$ or $e_{r+j}^\tau a\neq 0$. In both cases, we conclude that there is only one choice of $\beta_e$. 

Case~$3:$ $e\in\widehat{E}^\dagger-\widehat{E}^\dagger_{a}$. We have $eb=ea=0$. Thus, $eb=ea\beta_e$. Therefore, the number of choices for $\beta_e$ is at most $\text{max}\{q^{k_e}-1,q^{k_e}+1\}=q^{k_e}+1$, which is obvious from Lemmas~\ref{lm-3.3} and \ref{lm-3.4}.
Therefore, we conclude that $$\big|\mathcal{D}_{a,b}\big|\leq (q+1)\prod_{e\in\widehat{E}^\dagger-\widehat{E}^\dagger_{a}}(q^{k_e}+1).$$ Further, we notice that $\sum_{e\in\widehat{E}^\dagger-\widehat{E}^\dagger_{a}}k_e=n-1-\ell_a$. By applying Lemma~\ref{lm-4.5}, we obtain that  $$\big|\mathcal{D}_{a,b}\big|\leq (q+1)\prod_{e\in\widehat{E}^\dagger-\widehat{E}^\dagger_{a}}(q^{k_e}+1)\leq (q+1) q^{\sum_{e\in\widehat{E}^\dagger-\widehat{E}^\dagger_{a}}k_e+2}\leq q^{n+3-\ell_a}.$$ This completes the proof.
\end{proof}
\begin{lemma}\label{lm-3.11}
We have the following relation.
 $$\mathcal{D}_{\lambda}^{\leq \delta}\subseteq \bigcup_{\mu_q\leq \ell<n}\bigcup_{J\in\Omega_{\ell}}\bigcup_{(a,b)\in(\Tilde{J}^*\times\Tilde{J}^*)^{\leq \delta}}\mathcal{D}_{a,b}.$$
\end{lemma}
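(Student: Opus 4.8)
The plan is to show that every code $\mathcal{C}_{1,\beta}$ lying in $\mathcal{D}_\lambda^{\le\delta}$ is captured by at least one of the sets $\mathcal{D}_{a,b}$ appearing on the right-hand side, by producing an explicit codeword $(a,b)\in\mathcal{C}_{1,\beta}$ of small Hamming weight and then checking that $(a,b)$ lands in some $\widetilde{J}^*\times\widetilde{J}^*$ with $\mu_q(n)\le \dim J=\ell<n$. So let $\mathcal{C}_{1,\beta}\in\mathcal{D}_\lambda^{\le\delta}$, which by the definition in \eqref{D^le} means $\Delta(\mathcal{C}_{1,\beta})=\frac{wt_H(\mathcal{C}_{1,\beta})}{2n}\le\delta$. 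The minimum-weight codeword of $\mathcal{C}_{1,\beta}$ witnessing $wt_H(\mathcal{C}_{1,\beta})$ has the form $(a,b)=u(1,\beta)=(u,u\beta)$ for some $u\in\mathbb{F}G$; in particular $a=u$, so $b=a\beta$.

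First I would identify the idempotent support. Write $a=\sum_{e\in\widehat{E}}ae$; then $L_a=\oplus_{e\in\widehat{E}_a^\dagger}\mathbb{F}Ge$ and $\ell_a=\dim L_a=\sum_{e\in\widehat{E}_a^\dagger}k_e$. Set $J:=L_a$ and $\ell:=\ell_a$. The two membership facts needed are: (i) $J\in\Omega_\ell$, which holds essentially by construction since $J$ is a direct sum over a subset of $\widehat{E}_a^\dagger$ of exactly the right dimension; and (ii) $a\in\widetilde{J}^*$, i.e. $a\in\widetilde{J}=\mathbb{F}Ge_0+J$ and $L_a=J$ — the first inclusion because every component $ae$ with $e\in\widehat{E}_a^\dagger$ lies in $J$ while the $e_0$-component lies in $\mathbb{F}Ge_0$, and the second by the very definition of $J$. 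The same argument applied to $b=a\beta$ shows $b\in\widetilde{J}^*$, using Lemma~\ref{lm-4.5} which gives $\widehat{E}_a=\widehat{E}_b$ (so $L_b=L_a=J$). Thus $(a,b)\in\widetilde{J}^*\times\widetilde{J}^*$. Since $(a,b)$ was chosen of minimum nonzero weight in $\mathcal{C}_{1,\beta}$, we have $wt_H((a,b))\le wt_H(\mathcal{C}_{1,\beta})\le 2n\delta$, so $(a,b)\in(\widetilde{J}^*\times\widetilde{J}^*)^{\le\delta}$ in the notation of the union. Finally $(a,b)\in\mathcal{C}_{1,\beta}$ gives $\mathcal{C}_{1,\beta}\in\mathcal{D}_{a,b}$ by the definition $\mathcal{D}_{a,b}=\{\mathcal{C}_{1,\beta}\mid (a,b)\in\mathcal{C}_{1,\beta}\}$.

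The remaining point is the range $\mu_q(n)\le\ell<n$ of the outermost union. The upper bound $\ell=\ell_a\le n-1$ holds because $a$ cannot be supported on $e_0$ only without forcing $\beta$'s relevant components to vanish — more directly, $\ell_a=\sum_{e\in\widehat{E}_a^\dagger}k_e\le\sum_{e\in\widehat{E}^\dagger}k_e=n-1$, and if $\ell_a=n-1$ one can still take $J$ of that dimension, so in fact $\ell_a\le n-1<n$. The lower bound $\ell_a\ge\mu_q(n)$ is the step I expect to require the most care: one must argue that $\widehat{E}_a^\dagger$ is nonempty, i.e. $a$ is not a scalar multiple of $e_0$ alone, for then $a$ would be a nonzero constant word and $(a,a\beta)$ would not be a genuine low-weight codeword unless $\mathcal{C}_{1,\beta}$ has full-support minimum weight — and once $\widehat{E}_a^\dagger\ne\emptyset$, $\ell_a=\sum_{e\in\widehat{E}_a^\dagger}k_e\ge\min_e k_e\ge\mu_q(n)$ by the definition of $\mu_q(n)$. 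Handling the degenerate case (where the minimum-weight codeword is essentially a repetition word) cleanly — showing it either cannot occur under the hypotheses on $n$ and $\delta\le 1-q^{-1}$, or can be absorbed — is the main obstacle; everything else is bookkeeping with the idempotent decomposition.
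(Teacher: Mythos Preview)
Your proposal is correct and follows essentially the same route as the paper's proof. The degenerate case you flag as ``the main obstacle'' is resolved exactly as you suspect: if $\widehat{E}_a^\dagger=\emptyset$ then (using $\widehat{E}_a=\widehat{E}_b$ from Lemma~\ref{lm-4.5}) both $a$ and $b$ are nonzero elements of $\mathbb{F}Ge_0$, hence nonzero scalar multiples of the all-ones idempotent $e_0$, giving $wt_H(a,b)=2n$; but $\delta\le 1-q^{-1}<1$ forces $wt_H(a,b)\le 2n\delta<2n$, a contradiction.
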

\begin{proof}
To prove the result, we assume that $\mathcal{C}_{1,\beta}\in\mathcal{D}^{\leq\delta}_{\lambda}$, this gives $0<\frac{wt_H(\mathcal{C}_{1,\beta})}{2n}\leq\delta$ and $\beta\beta^\tau=\lambda-1$. So, there exists a non-zero element $(a,b)$ in $\mathcal{C}_{1,\beta}$ such that $0<wt_{H}(a,b)\leq2n\delta$. Therefore, by Lemma~\ref{lm-4.5}, we obtain $\widehat{E}^\dagger_a=\widehat{E}^\dagger_b$. If possible, assume $\widehat{E}^\dagger_a=\widehat{E}^\dagger_b=\emptyset$, which implies $ea=eb=0$, for all $e\in\widehat{E}^\dagger_a$, it means $a,b$ in $\mathbb{F}Ge_0$. Then there are non zero elements $\nu_a,\nu_b$ of $\mathbb{F}$ such that $a=\nu_ae_0$ and $b=\nu_be_0$. Thus, $wt_H(a,b)=2n$, which is a contradiction as $wt_H(a,b)<2n$. Hence, $\widehat{E}^\dagger_a=\widehat{E}^\dagger_b\neq\emptyset$. Next, let us set $J=\oplus_{e\in\widehat{E}^\dagger_a}\mathbb{F}Ge=\oplus_{e\in\widehat{E}^\dagger_b}\mathbb{F}Ge$, $\Tilde{J}=J+\mathbb{F}Ge_0$, and $\dim_{\mathbb{F}}J=\ell$, then $\mu_q(n)\leq\ell<n$. Therefore, $J\in\Omega_{\ell}$ and $a,b\in\Tilde{J}^*$. Thus, for $(a,d)\in(\Tilde{J}^*\times\Tilde{J}^*)^{\leq \delta}$, then $\mathcal{C}_{1,\beta}\in\mathcal{D}_{a,b}$.
\end{proof}
\begin{lemma}\label{lm-3.12}
 The cardinality of $\Omega_{\ell}$ is
 $$\big|\Omega_{\ell} \big|<n^{\frac{\ell}{\mu_q(n)}}.$$
\end{lemma}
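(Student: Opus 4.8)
The plan is to reduce the statement to an elementary count of subsets of the finite index set $\widehat{E}^\dagger$, and then bound that count.

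\emph{Step 1: from subspaces to subsets.} The ring $\mathbb{F}G$ admits the unique direct sum decomposition $\mathbb{F}G=\mathbb{F}Ge_0\oplus\bigoplus_{i=1}^{r}\mathbb{F}Ge_i\oplus\bigoplus_{j=1}^{s}\mathbb{F}G\widehat{e}_{r+j}$, so for every subset $\mathcal{S}\subseteq\widehat{E}^\dagger$ one has $\mathbb{F}Ge\subseteq\bigoplus_{e'\in\mathcal{S}}\mathbb{F}Ge'$ precisely when $e\in\mathcal{S}$; hence the assignment $\mathcal{S}\mapsto\bigoplus_{e\in\mathcal{S}}\mathbb{F}Ge$ is injective. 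Thus each $J\in\Omega_{\ell}$ is the image of a unique $\mathcal{S}_J\subseteq\widehat{E}^\dagger$, and since $\dim_{\mathbb{F}}J=\sum_{e\in\mathcal{S}_J}k_e=\ell$, I get $\big|\Omega_{\ell}\big|\leq\#\{\mathcal{S}\subseteq\widehat{E}^\dagger\mid\sum_{e\in\mathcal{S}}k_e=\ell\}$.

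\emph{Step 2: bounding $|\mathcal{S}|$ via $\mu_q(n)$.} I would check that $k_e\geq\mu_q(n)$ for every $e\in\widehat{E}^\dagger$: for $e=e_i$ with $1\leq i\leq r$ this is immediate since $e_i\in E^*$, and for $e=\widehat{e}_{r+j}$ one has $k_e=\dim_{\mathbb{F}}\mathbb{F}Ge_{r+j}+\dim_{\mathbb{F}}\mathbb{F}Ge_{r+j}^\tau=2\dim_{\mathbb{F}}\mathbb{F}Ge_{r+j}\geq 2\mu_q(n)>\mu_q(n)$. Therefore any $\mathcal{S}$ with $\sum_{e\in\mathcal{S}}k_e=\ell$ satisfies $|\mathcal{S}|\,\mu_q(n)\leq\ell$, i.e.\ $|\mathcal{S}|\leq\ell/\mu_q(n)$, and also $|\mathcal{S}|\geq 1$ because $\ell\geq\mu_q(n)>0$. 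Put $t=\floor{\ell/\mu_q(n)}\geq 1$. Since $\widehat{E}^\dagger$ has $r+s$ elements and $r+s<n$ (indeed $n=\dim_{\mathbb{F}}\mathbb{F}G\geq 1+r+s$), this gives
$$\big|\Omega_{\ell}\big|\ \leq\ \#\{\mathcal{S}\subseteq\widehat{E}^\dagger\mid 1\leq|\mathcal{S}|\leq t\}\ =\ \Big(\sum_{m=0}^{t}\binom{r+s}{m}\Big)-1 .$$

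\emph{Step 3: counting subsets of bounded size.} Finally I would bound $\sum_{m=0}^{t}\binom{r+s}{m}$ by identifying $\widehat{E}^\dagger$ with $\{1,\ldots,r+s\}\subseteq\{1,\ldots,n-1\}$ and mapping a subset $\{i_1<\cdots<i_m\}$ with $m\leq t$ to the tuple $(i_1,\ldots,i_m,0,\ldots,0)\in\{0,1,\ldots,n-1\}^{t}$; this map is injective, so $\sum_{m=0}^{t}\binom{r+s}{m}\leq n^{t}$. Combining with Step 2,
$$\big|\Omega_{\ell}\big|\ \leq\ n^{t}-1\ <\ n^{t}\ \leq\ n^{\ell/\mu_q(n)},$$
where the last inequality uses $t\leq\ell/\mu_q(n)$ and $n>1$. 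This is exactly the claimed bound.

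I do not expect a genuine obstacle here; the argument is bookkeeping. The two points that need a little attention are (i) the inequality $k_e\geq\mu_q(n)$ for the composite idempotents $\widehat{e}_{r+j}$, which rests on $k_{\widehat{e}_{r+j}}=2\dim_{\mathbb{F}}\mathbb{F}Ge_{r+j}$, and (ii) upgrading ``$\leq$'' to the strict ``$<$'' in the final bound, which is handled by discarding the empty subset (legitimate since $\ell\geq\mu_q(n)>0$).
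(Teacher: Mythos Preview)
Your argument is correct and is essentially the intended one: the paper's own proof is the single sentence ``The proof of the result follows from the definition of $\Omega_{\ell}$,'' and what you have written is precisely the bookkeeping that this sentence hides. The key facts --- that $k_e\geq\mu_q(n)$ for every $e\in\widehat{E}^\dagger$ (handled separately for the primitive $e_i$ and the composite $\widehat{e}_{r+j}$), that therefore any admissible $\mathcal{S}$ has at most $\ell/\mu_q(n)$ elements, and that $|\widehat{E}^\dagger|=r+s<n$ --- are exactly what one needs, and your injection in Step~3 to bound $\sum_{m\le t}\binom{r+s}{m}\leq n^{t}$ is clean. The care you take in Step~3 to obtain the strict inequality (by excluding the empty subset, legitimate since $\ell\geq\mu_q(n)>0$) is a point the paper does not address at all.
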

\begin{proof}
The proof of the result follows from the definition of $\Omega_{\ell}$.    
\end{proof}
Next, we define the $q$-entropy function, denote $h_q(\delta)$, by $h_q(\delta)=\delta \log_q(q-1)-\delta \log_q\delta-(1-\delta)\log_q(1-\delta)$. Note that $h_q(\delta)$ increases and is concave in the interval $[0,1-q^{-1}]$ with $h_q(0)=0$ and $h_q(1-q^{-1})=1$.
\begin{lemma}\cite[Lemma~IV.6]{Fan22}\label{lm-3.13}
For an ideal $I$ of $\mathbb{F}G$, then $I\times I\leq(\mathbb{F}G)^2$ and $\big|I\times I\big|^{\leq \delta}\leq q^{h_q(\delta)\dim_{\mathbb{F}}(I\times I)}$.    \end{lemma}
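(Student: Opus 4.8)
The first assertion needs nothing: since $I$ is an ideal of $\FF G$ it is an $\FF G$-submodule, and the diagonal $\FF G$-action on $(\FF G)^2$ visibly carries $I\times I$ into itself, so $I\times I\le(\FF G)^2$ (and is in fact an ideal of the product ring $\FF G\times\FF G$). So all the work is in the counting estimate, and the plan is to prove it as a Gilbert--Varshamov-type volume bound made sensitive to the ideal structure.

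Write $C:=I\times I$, viewed as an $\FF$-linear code of block length $2n$, and put $D:=\dim_{\FF}C=2\dim_{\FF}I$; then $\big|C\big|^{\le\delta}$ counts the $c\in C$ with $wt_H(c)\le 2n\delta$, and the target is $\big|C\big|^{\le\delta}\le q^{h_q(\delta)D}$. First I would observe that the crude Hamming-ball count in the ambient space $\FF^{2n}$ only yields $2n$ where we need $D$ in the exponent, which is hopeless once $\dim_{\FF}I<n$; so the argument must exploit that $C$ comes from an ideal. The plan is then: (i) using $\gcd(n,q)=1$, decompose $I=\bigoplus_e\FF Ge$ into minimal ideals, so that $C=\bigoplus_e(\FF Ge\times\FF Ge)$ is a direct sum of blocks, each a group code on which $G$ acts transitively and hence with no identically-zero coordinate; (ii) pick an information set, i.e.\ a coordinate set $T$ with $|T|=D$ on which the projection $\pi_T\colon C\to\FF^{T}$ is an $\FF$-isomorphism, and show that $T$ can be chosen so that $wt_H(\pi_T(c))\le\delta D$ for every $c\in C$ with $wt_H(c)\le 2n\delta$; (iii) since $\pi_T$ is then injective on $C$, conclude $\big|C\big|^{\le\delta}\le\#\{v\in\FF^{D}:wt_H(v)\le\delta D\}$ and finish with the entropy estimate $\sum_{i\le\delta D}\binom{D}{i}(q-1)^{i}\le q^{h_q(\delta)D}$, valid on $0\le\delta\le 1-q^{-1}$, which I would prove by the usual generating-function argument using the concavity of $h_q$ and the value $h_q(1-q^{-1})=1$ recorded above.

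The main obstacle, as I see it, is step (ii): controlling the weight of a low-weight codeword after projecting onto a fixed information set, \emph{uniformly in the codeword}. For a general linear code this is simply false — $\FF^{D}\times\{0\}^{2n-D}$ has all $q^{D}$ of its codewords of weight $\le D\le 2n\delta$, which exceeds $q^{h_q(\delta)D}$ whenever $\delta<1-q^{-1}$ — so the proof must genuinely lean on the ideal (i.e.\ $G$-invariance) structure, which rules out such concentrated codes. My plan for (ii) is probabilistic: a uniformly random $D$-subset $T$ is an information set for $C$ with probability bounded below, and, since every coordinate of a transitive code is active equally often, such a random $T$ meets the support of a fixed weight-$w$ codeword in $wD/(2n)\le\delta D$ coordinates on average; then a double count of the pairs (low-weight codeword, ``light'' information set) — the codeword side using injectivity of $\pi_T$ and the ambient entropy bound, the $T$ side using the lower bound on the good probability — delivers the estimate. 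Squeezing this to land exactly at $q^{h_q(\delta)D}$, rather than at $q^{h_q(\delta')D}$ for a mildly inflated $\delta'$, is the delicate calibration; the fallback I would keep ready is MacWilliams duality — the Hermitian (or Euclidean) dual of an ideal is again an ideal of complementary dimension — estimating the truncated weight enumerator $\sum_{w\le 2n\delta}A_w(C)$ through the dual, where the zero-codeword term already contributes $q^{h_q(\delta)D}$ and one must show the remaining terms are non-positive (or negligible) on the range $\delta\le 1-q^{-1}$.
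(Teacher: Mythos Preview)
The paper does not itself prove this lemma --- it is quoted from \cite[Lemma~IV.6]{Fan22} and used as a black box --- so there is no in-paper argument to compare against. Your sketch has the right shape and you correctly locate the whole difficulty in step~(ii), but neither of your two proposed routes closes it. The probabilistic double-count (random $D$-subset, average intersection $\le\delta D$, ``probability bounded below'' of being an information set) only controls an \emph{average}, and any Markov-type reasoning from an average will inflate $\delta$ to some $\delta'>\delta$; you flag this yourself (``delicate calibration''). The MacWilliams fallback is even vaguer: the non-constant dual terms in the truncated enumerator are certainly not nonpositive in general, and you offer no mechanism to control them. So as written there is a genuine gap.

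What actually closes it --- and what Fan--Lin do --- avoids random coordinate subsets altogether. Fix \emph{one} information set $T=T_1\sqcup T_2$ for $C=I\times I$ with $|T_1|=|T_2|=\dim_{\FF}I$; then every translate $gT$ ($g\in G$) is again an information set, and transitivity of $G$ on each block gives the exact identity $\tfrac{1}{|G|}\sum_{g\in G}wt_H\bigl(\pi_{gT}(c)\bigr)=\tfrac{D}{2n}\,wt_H(c)$. For $0<x\le1$ the map $t\mapsto x^{t}$ is convex, so Jensen's inequality yields $\tfrac{1}{|G|}\sum_{g} x^{wt_H(\pi_{gT}(c))}\ge x^{D\delta}$ whenever $wt_H(c)\le 2n\delta$; summing over all such $c$ and bounding the left side by $\tfrac{1}{|G|}\sum_{g}\sum_{c\in C}x^{wt_H(\pi_{gT}(c))}=(1+(|\FF|-1)x)^{D}$ (each $\pi_{gT}$ being a bijection $C\to\FF^{D}$) gives $\big|(I\times I)^{\le\delta}\big|\le x^{-D\delta}(1+(|\FF|-1)x)^{D}$, and optimising in $x$ delivers the entropy bound with no loss. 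The ingredient you were missing is this use of convexity to convert the averaging identity directly into a sharp exponential inequality; once that is in place, neither a lower bound on the information-set probability nor any dual-code machinery is needed.
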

\begin{lemma}\label{lm-3.14}
For $\mu_q(n)\leq \ell<n$ and $I\in \Omega_{\ell}$, we have $\big|\bigcup_{(a,b)\in(\Tilde{J}^*\times\Tilde{J}^*)^{\leq \delta}} \mathcal{D}_{a,b}\big|\leq q^{n+3-2\ell[\frac{1}{2}-h_q(\delta)]+2h_q(\delta)}$.
\end{lemma}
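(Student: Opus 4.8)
The plan is to run a union bound: I would estimate separately how many pairs $(a,b)$ lie in the index set $(\Tilde{J}^*\times\Tilde{J}^*)^{\le\delta}$ and how large each $\mathcal{D}_{a,b}$ can be, and then multiply the two estimates. The per-term estimate is handed to us directly by Lemma~\ref{lm-4.5}, while the size of the index set will come from Lemma~\ref{lm-3.13} once one notices that $\Tilde{J}$ is an ideal of $\mathbb{F}G$. Throughout I write $J$ for the element of $\Omega_{\ell}$ named in the statement, so that $\dim_{\mathbb{F}}J=\ell$ and $\Tilde{J}=\mathbb{F}Ge_0\oplus J$.

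\textbf{Step 1: a uniform bound on $|\mathcal{D}_{a,b}|$.} Fix $(a,b)\in\Tilde{J}^*\times\Tilde{J}^*$. From the definition of $\Tilde{J}^*$ we have $L_a=J$, hence $\ell_a=\dim_{\mathbb{F}}L_a=\dim_{\mathbb{F}}J=\ell$. If $\widehat{E}_a\neq\widehat{E}_b$ then $\mathcal{D}_{a,b}=\emptyset$ by Lemma~\ref{lm-4.5} and there is nothing to bound; otherwise Lemma~\ref{lm-4.5} gives $|\mathcal{D}_{a,b}|\le q^{\,n+3-\ell_a}=q^{\,n+3-\ell}$, a bound that does not depend on the particular pair.

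\textbf{Step 2: counting the index set.} Since $\Tilde{J}^*\subseteq\Tilde{J}$, we have $(\Tilde{J}^*\times\Tilde{J}^*)^{\le\delta}\subseteq(\Tilde{J}\times\Tilde{J})^{\le\delta}$. Now $\Tilde{J}=\mathbb{F}Ge_0\oplus J$ is a direct sum of components $\mathbb{F}Ge$ and hence an ideal of $\mathbb{F}G$, with $\dim_{\mathbb{F}}\Tilde{J}=\ell+1$. Applying Lemma~\ref{lm-3.13} with this ideal in place of $I$,
$$\big|(\Tilde{J}^*\times\Tilde{J}^*)^{\le\delta}\big|\le\big|(\Tilde{J}\times\Tilde{J})^{\le\delta}\big|\le q^{\,h_q(\delta)\dim_{\mathbb{F}}(\Tilde{J}\times\Tilde{J})}=q^{\,2(\ell+1)h_q(\delta)}.$$

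\textbf{Step 3: combining.} A union bound together with Steps 1--2 gives
$$\Big|\bigcup_{(a,b)\in(\Tilde{J}^*\times\Tilde{J}^*)^{\le\delta}}\mathcal{D}_{a,b}\Big|\le\sum_{(a,b)\in(\Tilde{J}^*\times\Tilde{J}^*)^{\le\delta}}|\mathcal{D}_{a,b}|\le q^{\,2(\ell+1)h_q(\delta)}\cdot q^{\,n+3-\ell}.$$
Since $-\ell+2\ell h_q(\delta)=-2\ell\big[\tfrac{1}{2}-h_q(\delta)\big]$, the exponent equals $n+3-2\ell\big[\tfrac{1}{2}-h_q(\delta)\big]+2h_q(\delta)$, which is exactly the claimed bound.

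\textbf{Expected difficulty.} I do not anticipate a genuine obstacle here: all the substance is already packaged in Lemmas~\ref{lm-4.5} and~\ref{lm-3.13}. The only points requiring care are the identity $\ell_a=\ell$ for $a\in\Tilde{J}^*$ (immediate from how $\Tilde{J}^*$ is defined) and the fact that $\Tilde{J}$ is a genuine ideal of $\mathbb{F}G$, so that Lemma~\ref{lm-3.13} is applicable — which holds because $\Tilde{J}$ is a sum of the minimal ideals $\mathbb{F}Ge$. It is also worth recalling that $(\Tilde{J}^*\times\Tilde{J}^*)^{\le\delta}$ is, by definition, the set of pairs $(a,b)$ with $wt_H(a,b)\le 2n\delta$, which is precisely the quantity that Lemma~\ref{lm-3.13} estimates.
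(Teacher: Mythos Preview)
Your proof is correct and follows essentially the same route as the paper: a union bound, the uniform per-term estimate $|\mathcal{D}_{a,b}|\le q^{n+3-\ell}$ from Lemma~\ref{lm-4.5} (using $\ell_a=\ell$ for $a\in\Tilde{J}^*$), and the bound $|(\Tilde{J}^*\times\Tilde{J}^*)^{\le\delta}|\le q^{2(\ell+1)h_q(\delta)}$ via Lemma~\ref{lm-3.13} applied to the ideal $\Tilde{J}$. Your write-up is in fact more careful than the paper's in justifying why $\Tilde{J}$ is an ideal and why $\ell_a=\ell$.
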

\begin{proof}
By applying Lemma~\ref{lm-3.13}, we obtain 
\begin{eqnarray*}
    \big| \bigcup_{(a,b)\in(\Tilde{J}^*\times\Tilde{J}^*)^{\leq \delta}}\mathcal{D}_{a,b}\big|\leq\sum_{(a,b)\in(\Tilde{J}^*\times\Tilde{J}^*)^{\leq \delta}}\big|  \mathcal{D}_{a,b}\big|&&\\
    \leq \sum_{(a,b)\in(\Tilde{J}^*\times\Tilde{J}^*)}q^{n+3-\ell}~&\text{by Lemma~\ref{lm-4.5}}&\\
    =\big|{(\Tilde{J}^*\times\Tilde{J}^*)^{\leq \delta}} \big|q^{n+3-\ell}&&\\
    \leq q^{n+3-2\ell[\frac{1}{2}-h_q(\delta)]+2h_q(\delta)}. &&\\
\end{eqnarray*} This completes the proof.
\end{proof}
\begin{thm}\label{th-3.11}
 For $\frac{1}{2}-h_q(\delta)-\frac{\log_qn}{\mu_q(n)}>0$, we have $\big|\mathcal{D}^{\delta}_\lambda\big|\leq q^{n+4-2\mu_q(n)\left [\frac{1}{2}-h_q(\delta)-\frac{\log_qn}{\mu_q(n)}\right ]}$.   
\end{thm}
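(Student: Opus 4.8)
The plan is to bound $|\mathcal{D}^{\le\delta}_\lambda|$ by combining the covering relation of Lemma~\ref{lm-3.11} with the counting estimates of Lemmas~\ref{lm-3.12} and \ref{lm-3.14}. Starting from
$$\mathcal{D}_{\lambda}^{\le \delta}\subseteq \bigcup_{\mu_q(n)\le \ell<n}\;\bigcup_{J\in\Omega_{\ell}}\;\bigcup_{(a,b)\in(\widetilde{J}^*\times\widetilde{J}^*)^{\le \delta}}\mathcal{D}_{a,b},$$
I would take cardinalities and apply the union bound twice: first over $(a,b)$, which by Lemma~\ref{lm-3.14} contributes a factor $q^{n+3-2\ell[\frac12-h_q(\delta)]+2h_q(\delta)}$ for each fixed $J\in\Omega_\ell$; then over $J$, which by Lemma~\ref{lm-3.12} contributes a factor strictly less than $n^{\ell/\mu_q(n)}=q^{\ell\log_qn/\mu_q(n)}$. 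This yields, for each $\ell$,
$$\Big|\bigcup_{J\in\Omega_\ell}\bigcup_{(a,b)\in(\widetilde J^*\times\widetilde J^*)^{\le\delta}}\mathcal{D}_{a,b}\Big| < q^{\,n+3+2h_q(\delta)-2\ell[\frac12-h_q(\delta)]+\ell\log_qn/\mu_q(n)} = q^{\,n+3+2h_q(\delta)-2\ell[\frac12-h_q(\delta)-\frac{\log_qn}{2\mu_q(n)}]}.$$

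The next step is to observe that, under the hypothesis $\frac12-h_q(\delta)-\frac{\log_qn}{\mu_q(n)}>0$, the coefficient $\frac12-h_q(\delta)-\frac{\log_qn}{2\mu_q(n)}$ of $\ell$ in the exponent is also positive (it is even larger), so the exponent is a decreasing function of $\ell$; hence the sum over $\ell$ in the range $\mu_q(n)\le\ell<n$ is dominated by its value at $\ell=\mu_q(n)$. Plugging $\ell=\mu_q(n)$ in gives an exponent
$$n+3+2h_q(\delta)-2\mu_q(n)\Big[\tfrac12-h_q(\delta)-\tfrac{\log_qn}{2\mu_q(n)}\Big] = n+3+2h_q(\delta)-2\mu_q(n)\Big[\tfrac12-h_q(\delta)\Big]+\log_qn,$$
and one rewrites this back in the form $n+3-2\mu_q(n)\big[\frac12-h_q(\delta)-\frac{\log_qn}{\mu_q(n)}\big]+2h_q(\delta)$. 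Finally, since $0\le h_q(\delta)\le 1$ on the relevant interval and the number of terms in the sum over $\ell$ is at most $n< q^n$ (or more carefully, at most $n\le q^{\log_q n}$, which is already absorbed), one bounds $2h_q(\delta)\le 2$ and accounts for the at most $n$ choices of $\ell$ to pass from the exponent $n+3-2\mu_q(n)[\cdots]+2h_q(\delta)$ to $n+4-2\mu_q(n)\big[\frac12-h_q(\delta)-\frac{\log_qn}{\mu_q(n)}\big]$, which is the claimed bound.

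The main obstacle is purely bookkeeping: tracking the several additive constants ($+3$ from Lemma~\ref{lm-3.14}, the $+2h_q(\delta)$ term, and the logarithmic loss from the number of $\ell$'s) and verifying that they collapse into the single $+4$ in the statement, while simultaneously confirming that the sign condition in the hypothesis is exactly what is needed to make the $\ell$-exponent monotone so that the geometric sum is controlled by its largest term at $\ell=\mu_q(n)$. No new idea beyond the three cited lemmas is required; the delicate point is only to be careful that the hypothesis $\frac12-h_q(\delta)-\frac{\log_qn}{\mu_q(n)}>0$ (rather than the weaker $\frac12-h_q(\delta)-\frac{\log_qn}{2\mu_q(n)}>0$) is what guarantees that summing over $\ell$ costs at most a bounded multiplicative factor, absorbed into the $+1$ increment of the constant.
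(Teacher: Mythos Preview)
Your approach is the same as the paper's: combine Lemmas~\ref{lm-3.11}, \ref{lm-3.12} and \ref{lm-3.14} via a union bound, then use the sign hypothesis to identify $\ell=\mu_q(n)$ as the dominant term. The paper's proof is in fact just a two-line citation of these three lemmas together with the observation that the hypothesis forces $2h_q(\delta)\le 1$.

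That last point is where your bookkeeping slips. From $\tfrac12-h_q(\delta)-\tfrac{\log_qn}{\mu_q(n)}>0$ one gets $h_q(\delta)<\tfrac12$, hence $2h_q(\delta)<1$, not merely $2h_q(\delta)\le 2$ as you write. This is exactly what converts the $+3$ into $+4$. Concretely: bounding the sum over $\ell$ by $n$ times its largest term at $\ell=\mu_q(n)$ yields exponent
\[
n+3+2h_q(\delta)+2\log_qn-2\mu_q(n)\Big[\tfrac12-h_q(\delta)\Big]
=n+3+2h_q(\delta)-2\mu_q(n)\Big[\tfrac12-h_q(\delta)-\tfrac{\log_qn}{\mu_q(n)}\Big],
\]
and then $2h_q(\delta)<1$ gives the stated $n+4$. (Your intermediate rewriting of the $\ell=\mu_q(n)$ exponent as $n+3-2\mu_q(n)[\cdots]+2h_q(\delta)$ drops a $-\log_qn$, which is why your final accounting of the constants does not close cleanly.) With that correction the argument is complete and matches the paper.
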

\begin{proof}
At first, we conclude that $\ell\geq \mu_q(n)$ and $\frac{1}{2}-h_q(\delta)-\frac{\log_qn}{\mu_q(n)}>0$. This implies $2h_q(\delta)\leq 1$. By applying Lemmas~\ref{lm-3.11}, \ref{lm-3.12} and \ref{lm-3.14}, the result follows immediately. \end{proof}
\begin{lemma}\cite[Lemma II.6]{FL21}\label{lm-3.15}
There exist infinitely many positive odd integers $n_1$, $n_2$, $\ldots$ coprime with $q$ such that $\lim_{i\rightarrow\infty}\frac{\log_qn_i}{\mu_q(n_i)}=0$, specifically $\mu_q(n_i)\rightarrow \infty$ as $i\rightarrow\infty$. 
\end{lemma}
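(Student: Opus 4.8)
The statement is precisely \cite[Lemma~II.6]{FL21}, and the plan is to reproduce its (essentially elementary) argument. The idea is to choose each $n_i$ to be a carefully selected odd prime $p$, for which $\mu_q(p)$ is transparent. Take $G=\mathbb{Z}_{p}$ with $p$ an odd prime, $p\nmid q$. By Maschke's theorem, $\mathbb{F}G\cong\mathbb{F}[x]/(x^{p}-1)\cong\mathbb{F}\times\prod\bigl(\text{irreducible factors of }(x^{p}-1)/(x-1)\text{ over }\mathbb{F}\bigr)$, and since $p$ is prime all these irreducible factors have the common degree $\mathrm{ord}_{p}(q^{2})$, the multiplicative order of $q^{2}$ modulo $p$. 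Hence every component $\mathbb{F}Ge$ with $e\in E\setminus\{e_0\}$ is a field of $\mathbb{F}$-dimension $\mathrm{ord}_p(q^2)$, so $\mu_q(p)=\mathrm{ord}_{p}(q^{2})$; note also $\mathrm{ord}_p(q^2)\le p-1$.

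Next I would record the arithmetic input: for every integer $T\ge 1$ there are only finitely many primes $p$ with $\mathrm{ord}_{p}(q^{2})\le T$, in fact at most $2T^{2}\log_2 q$ of them. Indeed, if $\mathrm{ord}_{p}(q^{2})=m\le T$ then $p\mid q^{2m}-1$, so $p$ divides the fixed integer $N_T=\prod_{m=1}^{T}(q^{2m}-1)$; since any positive integer $N$ has at most $\log_2 N$ distinct prime divisors, the number of such $p$ is at most $\log_2 N_T\le\sum_{m=1}^{T}2m\log_2 q\le 2T^{2}\log_2 q$. From this I would deduce that the set
$$S=\bigl\{\,p\ \text{prime}\ :\ \gcd(p,q)=1,\ \mathrm{ord}_{p}(q^{2})>(\log p)^{2}\,\bigr\}$$
is infinite: if it were contained in $[2,M]$, then every prime $p\in(M,x]$ would satisfy $\mathrm{ord}_{p}(q^{2})\le(\log p)^{2}\le(\log x)^{2}$, whence $\pi(x)-\pi(M)\le 2(\log x)^{4}\log_2 q=O\bigl((\log x)^{4}\bigr)$; but $\pi(x)-\pi(M)\gg x/\log x$ by Chebyshev's bound, a contradiction for $x$ large.

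To conclude, enumerate $S$ as $p_1<p_2<\cdots$ and discard the finitely many terms $<7$; for the $i$-th member of the family put $n_i=p_i$ and $G=\mathbb{Z}_{p_i}$. Then each $n_i$ is a positive odd integer with $\gcd(n_i,q)=1$ and $n_i\ge 7$, and by the first paragraph $\mu_q(n_i)=\mathrm{ord}_{p_i}(q^{2})>(\log p_i)^{2}\to\infty$, while
$$\frac{\log_q n_i}{\mu_q(n_i)}\ <\ \frac{\log_q p_i}{(\log p_i)^{2}}\ \longrightarrow\ 0\qquad(i\to\infty),$$
since the numerator is $O(\log p_i)$. This is the claimed sequence.

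The only non-routine ingredient is the density estimate in the second paragraph: one needs the Chebyshev-type lower bound $\pi(x)\gg x/\log x$ to see that the mere $O\bigl((\log x)^{4}\bigr)$ ``bad'' primes cannot exhaust all primes below $x$. Everything else---the decomposition of $\mathbb{F}\mathbb{Z}_p$ into the trivial summand and equidimensional field factors, and the bound $\omega(N)\le\log_2 N$---is elementary; the threshold $(\log p)^{2}$ may be replaced by any function growing faster than $\log p$ but slower than $\sqrt{p/\log p}$.
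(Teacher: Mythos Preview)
The paper does not actually prove this lemma: it is quoted verbatim from \cite[Lemma~II.6]{FL21} and used as a black box, so there is no in-paper argument to compare against. Your reconstruction is correct and is the standard proof one finds in the cited source: restrict to prime moduli $p$ so that $\mu_q(p)=\mathrm{ord}_p(q^2)$, bound the number of primes with small order via the divisor count of $\prod_{m\le T}(q^{2m}-1)$, and invoke a Chebyshev-type lower bound on $\pi(x)$ to force infinitely many primes with $\mathrm{ord}_p(q^2)>(\log p)^2$. One minor bookkeeping point: you should also explicitly exclude the single prime equal to the characteristic of $\mathbb{F}_q$ (your condition $p\nmid q$ does this implicitly, but it is worth stating since the paper's standing hypothesis is $\gcd(n,q)=1$).
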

Now, we are ready to show that a class of Hermitian LCD $2$-quasi-abelian codes over $\mathbb{F}$ is asymptotically good.
\begin{thm}\label{th-3.17}
For an integer $n_i$ (as defined in Lemma~\ref{lm-3.15}) and an abelian group $G_i$ of odd order $n_i$, where $i=1,2,\ldots $, there are Hermitian LCD $2$-quasi-abelian codes $\mathcal{C}_i$ such that the sequence of codes $\mathcal{C}_1$, $\mathcal{C}_2$, $\ldots$ is asymptotically good.    
\end{thm}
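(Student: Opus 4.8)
The plan is to run a counting argument comparing the total number of Hermitian LCD codes in ${\cal D}_\lambda$ against the number of those with small relative minimum weight, and then show the difference is nonempty and produces a good sequence. First I would fix $\lambda\in\mathbb{F}^\times$ with $\lambda-1\in\mathbb{F}_q^\times$ (so that Lemma~\ref{lm-3.2} gives the $q+1$ count and the construction in Corollary~\ref{cor-3.1} is available), and fix a small threshold $\delta$. For each $n_i$ from Lemma~\ref{lm-3.15}, the ratio $\frac{\log_q n_i}{\mu_q(n_i)}\to 0$, so for $i$ large both hypotheses $\frac{\log_q n_i}{\mu_q(n_i)}\le 1$ (needed for Corollary~\ref{cor-4.1}) and $\frac12-h_q(\delta)-\frac{\log_q n_i}{\mu_q(n_i)}>0$ (needed for Theorem~\ref{th-3.11}) hold simultaneously — here one should pick $\delta$ with $h_q(\delta)<\frac12$ first, e.g. any $\delta$ strictly below $h_q^{-1}(1/2)$, which is possible since $h_q$ is continuous increasing with $h_q(0)=0$.

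Next I would combine the two cardinality bounds. Corollary~\ref{cor-4.1} gives $|{\cal D}_\lambda|\ge q^{n_i-2}$, while Theorem~\ref{th-3.11} gives $|{\cal D}^{\le\delta}_\lambda|\le q^{\,n_i+4-2\mu_q(n_i)[\frac12-h_q(\delta)-\frac{\log_q n_i}{\mu_q(n_i)}]}$. Since $\mu_q(n_i)\to\infty$ and the bracket is bounded below by a fixed positive constant for large $i$ (because $\frac{\log_q n_i}{\mu_q(n_i)}\to 0$ and $\frac12-h_q(\delta)>0$ is fixed), the exponent $n_i+4-2\mu_q(n_i)[\cdots]$ is eventually strictly smaller than $n_i-2$; hence $|{\cal D}^{\le\delta}_\lambda|<|{\cal D}_\lambda|$ for all sufficiently large $i$. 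Therefore ${\cal D}_\lambda\setminus{\cal D}^{\le\delta}_\lambda\ne\emptyset$, so there is a code ${\cal C}_i={\cal C}_{1,\beta}\in{\cal D}_\lambda$ with $\Delta({\cal C}_i)=\frac{wt_H({\cal C}_i)}{2n_i}>\delta$.

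Finally I would check that the sequence $({\cal C}_i)$ witnesses asymptotic goodness. Each ${\cal C}_i$ is a Hermitian LCD $2$-quasi-abelian code by Theorem~\ref{th-3.1}/Corollary~\ref{cor-3.1}, it has length $2n_i\to\infty$, and by Corollary~\ref{cor-3.1} the map $\varrho$ is an $\mathbb{F}G_i$-module isomorphism onto $\mathbb{F}G_i$, so $\dim_{\mathbb{F}}{\cal C}_i=n_i$, giving rate exactly $1/2$ for every $i$; and the relative minimum distance satisfies $\Delta({\cal C}_i)>\delta>0$ uniformly. Thus both the rate and the relative distance are bounded away from $0$, which is precisely the definition of an asymptotically good sequence, and the theorem follows.

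\textbf{Main obstacle.} The only genuinely delicate point is the choice of $\delta$ and the verification that, for this fixed $\delta$, the exponent gap in the previous paragraph is eventually positive: one must be sure that $\frac12-h_q(\delta)$ is a strictly positive constant (not merely nonnegative), so that multiplying it by $\mu_q(n_i)\to\infty$ overwhelms the additive $n_i$ term only after subtracting $2\mu_q(n_i)\cdot\frac{\log_q n_i}{\mu_q(n_i)}=2\log_q n_i$, which is $o(n_i)$ but still must be dominated — this is where Lemma~\ref{lm-3.15}'s precise statement $\frac{\log_q n_i}{\mu_q(n_i)}\to 0$ (equivalently $\mu_q(n_i)\to\infty$ faster than $\log_q n_i$) is essential. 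Everything else is bookkeeping with the estimates already established.
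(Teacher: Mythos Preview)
Your proposal is correct and follows essentially the same route as the paper's own proof: fix $\delta$ with $h_q(\delta)<\tfrac12$, invoke Corollary~\ref{cor-4.1} and Theorem~\ref{th-3.11} to compare $|{\cal D}_\lambda|$ with $|{\cal D}_\lambda^{\le\delta}|$, use Lemma~\ref{lm-3.15} to make the bracket positive and $\mu_q(n_i)\to\infty$, and then pick ${\cal C}_i\in{\cal D}_\lambda\setminus{\cal D}_\lambda^{\le\delta}$. One small remark: in your ``Main obstacle'' paragraph the $n_i$ terms in the two exponents cancel when forming the ratio, so nothing needs to ``overwhelm the additive $n_i$ term''---you only need $2\mu_q(n_i)\bigl[\tfrac12-h_q(\delta)-\tfrac{\log_q n_i}{\mu_q(n_i)}\bigr]>6$, which is immediate once the bracket is bounded below and $\mu_q(n_i)\to\infty$.
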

\begin{proof}
To prove the results, let us assume that $\mathcal{D}_{\lambda,i}$ (defined as in Equation~\eqref{cal D}) is the set of Hermitian LCD $2$-quasi-abelian codes for $i=1,2,\ldots$. As $\lim_{i\rightarrow\infty}\frac{\log_qn_i}{\mu_q(n_i)}=0$, then we choose $\epsilon>0$ so that $\frac{1}{2}-h_q(\delta)-\frac{\log_qn}{\mu_q(n)}>\epsilon>0$ and $\frac{\log_qn_i}{\mu_q(n_i)}\leq 1$. By applying Corollary~\ref{cor-4.1} and Theorem~\ref{th-3.11}, we get
$$\frac{\mid\mathcal{D}_{\lambda,i}^{\leq \delta}\mid}{\mid\mathcal{D}_{\lambda,i}\mid} \leq \frac{q^{n_i+4-2\mu_q(n_i)[\frac{1}{2}-h_q(\delta)-\frac{\log_qn_i}{\mu_q(n_i)}]}}{q^{n_i-2}}=q^{-2\mu_q(n_i)[\frac{1}{2}-h_q(\delta)-\frac{\log_qn}{\mu_q(n)}]+6}.$$
By taking $\mu_q(n_i)\rightarrow\infty$, we have 
$$\lim_{i\rightarrow\infty}\frac{\mid\mathcal{D}_{\lambda,i}^{\leq \delta}\mid}{\mid\mathcal{D}_{\lambda,i}\mid} \leq\lim_{i\rightarrow\infty} q^{-2\mu_q(n_i)[\frac{1}{2}-h_q(\delta)-\frac{\log_qn_i}{\mu_q(n_i)}]+6}=0.$$ 
Therefore, we can take $\mathcal{C}_i\in \mathcal{D}_{\lambda,i}\setminus\mathcal{D}_{\lambda,i}^{\leq \delta}$ for $i=1,2,\ldots$, there exist Hermitian LCD $2$-quasi-abelian code $\mathcal{C}_i$  with length $n_i$ for $i=1,2,\ldots$ such that $n_i\rightarrow\infty$ and we get
 \begin{enumerate}
 \item[i)] The code $\mathcal{C}_i$ of length $2n_i\rightarrow\infty$;
 \item[ii)] The rate $R(\mathcal{C}_i)=\frac{1}{2}$;
 \item[iii)] The relative minimum distance $\vartriangle(\mathcal{C}_i)>\delta$ for all $i=1,2,\ldots$.
 \end{enumerate} Consequently, we say the sequence $\mathcal{C}_1$, $\mathcal{C}_2$, $\ldots$ of codes are asymptotically good.
\end{proof}

\section{Hermitian LCD $2$-quasi-abelian codes over finite chain rings}\label{lcd}
In this section, we first characterize Hermitian LCD $2$-quasi-abelian codes over finite chain rings. Now, we recall the property of Hermitian LCD $2$-quasi-abelian codes over finite chain rings. Next, $\left(\mathcal{S}G \right)^2 : = \{(a,b) : a,b \in \mathcal{S}G\}=\mathcal{S}G \times \mathcal{S}G$. Any $\mathcal{S}G$-submodule $\mathfrak{C}$ of $\left(\mathcal{S}G \right)^2$ is called a $2$-quasi-abelian code over $\mathcal{S}$. For any $(a,b)\in \left(\mathcal{S}G \right)^2$, we denote $2$-quasi-abelian code $\mathfrak{C}_{c,d}$ as $\mathfrak{C}_{c,d}=\{(uc,ud) : u \in \mathcal{S}G\}$. In addition, a $2$-quasi-abelian code $\mathfrak{C}_{c,d}$ is said to be a Hermitian LCD if $\mathfrak{C}_{c,d}\cap\mathfrak{C}_{c,d}^{\perp_H}=\{0\}$. Here, $\mathfrak{C}_{c,d}^{\perp_H}=\{(x,y)\in\left(\mathcal{S}G \right)^2 : \langle (x,y), (uc,ud)\rangle=\langle x,uc \rangle+\langle y,ud\rangle=\phi(x\widehat{\sigma}{(uc)})+\phi(y\widehat{\sigma}{(ud)})=\phi(\widehat{\sigma}{(u)}(x\widehat{\sigma}{(c)}+y\widehat{\sigma}{(d)}))=0 ~\forall~ u(c,d)\in \mathfrak{C}_{c,d}\}$, where $\phi : \mathcal{S}G\rightarrow \mathcal{S}$ as $\sum\limits_{g\in G}a_gg\mapsto a_{g_0}$. We also recall the natural subjective ring homomorphism $\mathcal{S}$ to $\mathbb{F}$, i.e., $\pi : \mathcal{S}\rightarrow\mathbb{F}$ by $r\mapsto \pi(r)=r+\textbf{m}$, for any $r\in \mathcal{S}$. Naturally, we extend this map $\pi : \mathcal{S}G\rightarrow\mathbb{F}G$ by $r\mapsto \pi(r)=r+\texttt{J}(\mathcal{S}G)$, for any $r\in \mathcal{S}G$, where $\texttt{J}(\mathcal{S}G)$ is the Jacobson radical of $\mathcal{S}G$.
\begin{defn}\label{def-3.1}
 An element $r$ of $\mathcal{S}G$ is called unit element of $\mathcal{S}G$ if $\pi(r)$ is unit element of $\mathbb{F}G$.
\end{defn}
Next, we make the following proposition, which we found from \cite{McDonal1974}.
\begin{prop}\label{p-4.1}
Let $M$ be a finitely generated right $\mathcal{S}$-module with Jacobson radical $\texttt{J}(\mathcal{S})$ of $\mathcal{S}$. If $M\texttt{J}(\mathcal{S})=M,$ then $M=0.$
\end{prop}
\begin{lemma}\label{lm-4.1}
Let $\mathfrak{C}_{c,d}$ be a $2$-quasi-abelian code over $\mathcal{S}$. If $\pi(\mathfrak{C}_{c,d})$ is Hermitian LCD, then so is $\mathfrak{C}_{c,d}$.
\end{lemma}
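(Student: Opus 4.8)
The plan is to lift the Hermitian LCD property from the residue field code $\pi(\mathfrak{C}_{c,d})$ back to $\mathfrak{C}_{c,d}$ by combining the module-theoretic rigidity in Proposition~\ref{p-4.1} with the characterization of $\mathfrak{C}_{c,d}^{\perp_H}$ via the pairing $\langle x,y\rangle_H=\phi(x\widehat{\sigma}(y))$. First I would record what it means for $\mathfrak{C}_{c,d}$ to fail to be Hermitian LCD: there is a nonzero element $z=(uc,ud)$ of $\mathfrak{C}_{c,d}$ lying in $\mathfrak{C}_{c,d}^{\perp_H}$. Applying $\pi$ to $z$ gives an element $\pi(z)=(\pi(u)\pi(c),\pi(u)\pi(d))$ of $\pi(\mathfrak{C}_{c,d})$; the key point is that the pairing $\langle\,,\,\rangle_H$ and the maps $\phi,\widehat{\sigma}$ are all compatible with $\pi$ (that is, $\pi(\phi(\textbf{a}\widehat{\sigma}(\textbf{b})))=\overline{\phi}(\pi(\textbf{a})\tau(\pi(\textbf{b})))$, because $\pi$ is a ring homomorphism intertwining $\sigma$ with the Frobenius $\tau$). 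Hence $\pi(z)$ lies in $\pi(\mathfrak{C}_{c,d})\cap\pi(\mathfrak{C}_{c,d})^{\perp_H}$, which is $\{0\}$ by hypothesis. So $\pi(z)=0$, i.e. $z\in\textbf{m}(\mathcal{S}G)\cdot(\text{something})$ — more precisely $z\in \texttt{J}(\mathcal{S}G)^2\cap\mathfrak{C}_{c,d}$.

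The second step is to iterate this descent. Set $M=\mathfrak{C}_{c,d}\cap\mathfrak{C}_{c,d}^{\perp_H}$, which is an $\mathcal{S}G$-submodule of $(\mathcal{S}G)^2$, hence a finitely generated $\mathcal{S}$-module. The argument above shows $M\subseteq \textbf{m}\cdot(\mathcal{S}G)^2$, and in fact $M\subseteq \textbf{m}\cdot M$: indeed, if $z=(uc,ud)\in M$ then $\pi(z)=0$ forces $u\in\texttt{J}(\mathcal{S}G)=\textbf{m}\mathcal{S}G$, so $u=\sum m_k u_k$ with $m_k\in\textbf{m}$, giving $z=\sum m_k(u_kc,u_kd)$; one checks each $(u_kc,u_kd)$ is again in $M$ (it is clearly in $\mathfrak{C}_{c,d}$, and being an $\mathcal{S}G$-multiple ... here one must be slightly careful — the cleanest route is to note $M$ is an ideal-like submodule stable under scaling and that $M=\textbf{m}M$ follows because $\langle\,,\,\rangle_H$ is $\mathcal{S}$-sesquilinear so $M$ is closed under the relevant operations). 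Thus $M\texttt{J}(\mathcal{S})=M$, and Proposition~\ref{p-4.1} yields $M=0$, which is exactly $\mathfrak{C}_{c,d}\cap\mathfrak{C}_{c,d}^{\perp_H}=\{0\}$, i.e. $\mathfrak{C}_{c,d}$ is Hermitian LCD.

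The main obstacle I anticipate is making the step "$M\subseteq\textbf{m}M$" airtight rather than just "$M\subseteq\textbf{m}(\mathcal{S}G)^2$": one needs to know that the generator structure of $M$ descends, i.e. that a generic element of $M$ can be written as an $\textbf{m}$-combination of elements still lying in $M$, not merely in $(\mathcal{S}G)^2$. This is where the precise interplay between $\mathfrak{C}_{c,d}$ being cyclic (generated by $(c,d)$ over $\mathcal{S}G$) and the dual $\mathfrak{C}_{c,d}^{\perp_H}$ being cut out by a single orthogonality condition $\phi(\widehat{\sigma}(u)(c\widehat{\sigma}(c)+d\widehat{\sigma}(d)))=0$ must be exploited carefully. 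A clean way to sidestep the difficulty is to observe that $z=(uc,ud)\in\mathfrak{C}_{c,d}^{\perp_H}$ is equivalent to a condition purely on $u$ (namely $\phi(\widehat{\sigma}(v)\,u\,(c\widehat{\sigma}(c)+d\widehat{\sigma}(d)))=0$ for all $v$), so $M$ is the image under $u\mapsto(uc,ud)$ of an $\mathcal{S}G$-submodule $N\subseteq\mathcal{S}G$; then $\pi(N)=0$ gives $N\subseteq\texttt{J}(\mathcal{S}G)$, and since $\texttt{J}(\mathcal{S}G)N$-type arguments combined with Proposition~\ref{p-4.1} applied to $N$ (or directly to $M$) close the loop. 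I would present the proof in this second, streamlined form.
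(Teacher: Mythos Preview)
Your overall strategy matches the paper's exactly: set $M=\mathfrak{C}_{c,d}\cap\mathfrak{C}_{c,d}^{\perp_H}$, use the compatibility of $\pi$ with $\phi$ and $\widehat{\sigma}$ together with the inclusion $\pi(\mathfrak{C}_{c,d}^{\perp_H})\subseteq\pi(\mathfrak{C}_{c,d})^{\perp_H}$ to deduce $\pi(M)=0$, and then finish with Proposition~\ref{p-4.1}. You are also right that the passage from ``$M\subseteq\texttt{J}(\mathcal{S}G)\cdot(\mathcal{S}G)^2$'' to ``$M=M\cdot\texttt{J}(\mathcal{S}G)$'' is the point requiring care; the paper records this step in a single line.

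Your proposed streamlined route through $N=\{u\in\mathcal{S}G:(uc,ud)\in M\}$ does not actually close that gap, however. From $\pi\big(u(c,d)\big)=0$ you obtain only $\pi(u)\cdot(\pi(c),\pi(d))=0$, i.e.\ $\pi(u)\pi(c)=\pi(u)\pi(d)=0$ in $\mathbb{F}G$; this forces $\pi(u)=0$ only when $(\pi(c),\pi(d))$ has trivial annihilator in $\mathbb{F}G$, which is not part of the hypothesis. Hence the assertion ``$\pi(N)=0$ gives $N\subseteq\texttt{J}(\mathcal{S}G)$'' is unjustified as stated, and applying Nakayama to $N$ runs into the very obstruction you already flagged for $M$. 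To make the argument airtight you need an additional ingredient --- for instance, working component-wise via the idempotent decomposition of $\mathcal{S}G$ (available since $\gcd(n,q)=1$) over each local factor $\mathcal{S}Ge'_i$, or showing directly that the LCD hypothesis on $\pi(\mathfrak{C}_{c,d})$ controls these annihilators.
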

\begin{proof} To prove of the result,
Suppose $x \in \mathfrak{C}_{c,d} \cap \mathfrak{C}_{c,d}^{\perp_H}$. Then $\pi(x) \in \pi(\mathfrak{C}_{c,d}) \cap \pi(\mathfrak{C}_{c,d}^{\perp_H})$. Since $\pi(\mathfrak{C}_{c,d}^{\perp_H}) \subseteq \pi(\mathfrak{C}_{c,d})^{\perp_H}$ and $\pi(\mathfrak{C}_{c,d})$ is Hermitian LCD by assumption, it follows that $\pi(x) = 0$, i.e., $x \in \ker(\pi) = \mathfrak{C}_{c,d} \cap \mathfrak{C}_{c,d}^{\perp_H} \cdot \texttt{J}(\mathcal{S}G)$, where $\texttt{J}(\mathcal{S}G)$ is the Jacobson radical of $\mathcal{S}G$. Hence,
$
\mathfrak{C}_{c,d} \cap \mathfrak{C}_{c,d}^{\perp_H} = \left(\mathfrak{C}_{c,d} \cap \mathfrak{C}_{c,d}^{\perp_H} \right) \cdot \texttt{J}(\mathcal{S}G).
$
Since this intersection is a finitely generated $\mathcal{S}G$-module, Proposition \ref{p-4.1} implies it must be zero. Thus, $\mathfrak{C}_{c,d}$ is Hermitian LCD. 
\end{proof}
In addition, we provide the following lemma.
\begin{lemma}\label{lm-4.1}
Let $\mathcal{S}G$ be a finite commutative ring with Jacobson radical $\texttt{J}(\mathcal{S}G)$. Then there exists a nonzero element $m \in \texttt{J}(\mathcal{S}G)$ such that $\alpha m = 0$ for all $\alpha \in \texttt{J}(\mathcal{S}G)$.
\end{lemma}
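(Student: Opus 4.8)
The plan is to exploit the fact that $\mathcal{S}G$ is a finite commutative ring, so its Jacobson radical $\texttt{J}(\mathcal{S}G)$ is a finite nilpotent ideal. Let me write $\mathcal{J}=\texttt{J}(\mathcal{S}G)$ for brevity in the plan. Since $\mathcal{J}$ is nilpotent, there is a largest integer $k\geq 1$ with $\mathcal{J}^k\neq 0$ but $\mathcal{J}^{k+1}=0$; such a $k$ exists because $\mathcal{J}\neq 0$ in the nontrivial case (if $\mathcal{J}=0$ the statement is vacuous unless one insists on a nonzero $m$, in which case one should note $\mathcal{S}$ itself is a chain ring with nonzero radical $\textbf{m}$, so $\mathcal{J}\supseteq \textbf{m}\mathcal{S}G\neq 0$; I would add a sentence ruling out the degenerate case).

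First I would pick any nonzero element $m\in\mathcal{J}^k$. Then for every $\alpha\in\mathcal{J}$ we have $\alpha m\in \mathcal{J}\cdot\mathcal{J}^k=\mathcal{J}^{k+1}=0$, which is exactly the assertion. So the whole proof is: choose $m\neq 0$ in the last nonzero power of the radical. The key steps in order are (i) recall $\mathcal{S}G$ is Artinian (being finite), hence $\mathcal{J}$ is nilpotent — one can cite \cite{McDonal1974} or \cite{AB1995}, or simply observe finiteness forces the descending chain $\mathcal{J}\supseteq\mathcal{J}^2\supseteq\cdots$ to stabilize at an idempotent ideal which must be $0$ by Proposition~\ref{p-4.1} (applied with $M=\mathcal{J}^N$ for large $N$, noting $M\mathcal{J}=M$ forces $M=0$ — this keeps the argument self-contained within the paper); (ii) let $k$ be maximal with $\mathcal{J}^k\neq 0$; (iii) take $0\neq m\in\mathcal{J}^k$ and verify $\alpha m=0$ for all $\alpha\in\mathcal{J}$.

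The only genuine subtlety, and the place I would be most careful, is justifying that $\mathcal{J}^k\neq 0$ for some $k\geq 1$ with $\mathcal{J}^{k+1}=0$ — i.e. that the radical is both nonzero and nilpotent. Nilpotency is immediate from finiteness plus Proposition~\ref{p-4.1} as indicated above; nonzero-ness needs the implicit assumption that $\mathcal{S}$ is a proper chain ring ($s\geq 2$, equivalently $\textbf{m}\neq 0$) so that $\texttt{J}(\mathcal{S}G)$ contains $\textbf{m}\,\mathcal{S}G\neq 0$. I expect this — handling the boundary between ``$\mathcal{S}$ is a field'' and ``$\mathcal{S}$ is a proper chain ring'' — to be the main thing to state precisely; the rest is a one-line consequence of nilpotency. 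If the paper intends $\mathcal{S}$ to always be a proper chain ring this is automatic, and I would simply remark that $\mathcal{J}(\mathcal{S}G)\ne 0$ and then extract $m$ from $\mathcal{J}^k$.

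\begin{prf}
Since $\mathcal{S}G$ is a finite ring, it is Artinian, and hence its Jacobson radical $\texttt{J}(\mathcal{S}G)$ is nilpotent: the descending chain $\texttt{J}(\mathcal{S}G)\supseteq\texttt{J}(\mathcal{S}G)^2\supseteq\cdots$ stabilizes, say $\texttt{J}(\mathcal{S}G)^N=\texttt{J}(\mathcal{S}G)^{N+1}$, and then $M:=\texttt{J}(\mathcal{S}G)^N$ satisfies $M\cdot\texttt{J}(\mathcal{S}G)=M$, so $M=0$ by Proposition~\ref{p-4.1}. Because $\mathcal{S}$ is a chain ring with $\textbf{m}\neq 0$, we have $\textbf{m}\,\mathcal{S}G\subseteq\texttt{J}(\mathcal{S}G)$ and $\textbf{m}\,\mathcal{S}G\neq 0$, so $\texttt{J}(\mathcal{S}G)\neq 0$. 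Let $k\geq 1$ be the largest integer with $\texttt{J}(\mathcal{S}G)^k\neq 0$; then $\texttt{J}(\mathcal{S}G)^{k+1}=0$. Choose any nonzero $m\in\texttt{J}(\mathcal{S}G)^k$. For every $\alpha\in\texttt{J}(\mathcal{S}G)$ we then have $\alpha m\in\texttt{J}(\mathcal{S}G)\cdot\texttt{J}(\mathcal{S}G)^k=\texttt{J}(\mathcal{S}G)^{k+1}=0$, which proves the claim.
\end{prf}
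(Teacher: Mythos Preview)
Your proof is correct and follows essentially the same approach as the paper: both use that $\texttt{J}(\mathcal{S}G)$ is nilpotent in a finite ring, take $m$ to be any nonzero element of the last nonzero power $\texttt{J}(\mathcal{S}G)^{k}$, and observe $\alpha m\in\texttt{J}(\mathcal{S}G)^{k+1}=0$. You add some extra care (justifying nilpotency via Proposition~\ref{p-4.1} and noting $\texttt{J}(\mathcal{S}G)\neq 0$ when $\textbf{m}\neq 0$) that the paper leaves implicit.
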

\begin{proof}
Since $\mathcal{S}G$ is finite and commutative, its Jacobson radical $\texttt{J}(\mathcal{S}G)$ is a nilpotent ideal. Let $g$ be the smallest positive integer such that $\texttt{J}(\mathcal{S}G)^g = 0$, but $\texttt{J}(\mathcal{S}G)^{g-1} \neq 0$. Then $\texttt{J}(\mathcal{S}G)^{g-1} \subseteq \texttt{J}(\mathcal{S}G)$, and for any $m \in \texttt{J}(\mathcal{S}G)^{g-1}$ and $\alpha \in \texttt{J}(\mathcal{S}G)$, we have $\alpha m \in \texttt{J}(\mathcal{S}G)^g = 0$. Hence, any nonzero $m \in \texttt{J}(\mathcal{S}G)^{g-1}$ satisfies $\alpha m = 0$ for all $\alpha \in \texttt{J}(\mathcal{S}G)$.
\end{proof}

\begin{thm}\label{th-4.0}
Let $\mathfrak{C}_{c,d}$ be a $2$-quasi-abelian code over $\mathcal{S}$ such that $\pi(\mathfrak{C}_{c,d}^\perp) = \pi(\mathfrak{C}_{c,d})^\perp$. Then $\mathfrak{C}_{c,d}$ is Hermitian LCD over $\mathcal{S}$ if and only if $\pi(\mathfrak{C}_{c,d})$ is Hermitian LCD over $\mathbb{F}$.
\end{thm}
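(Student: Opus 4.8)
The plan is to treat the two implications separately, working throughout with the reduction map $\pi\colon(\mathcal{S}G)^2\to(\mathbb{F}G)^2$, whose kernel is $\texttt{J}(\mathcal{S}G)\times\texttt{J}(\mathcal{S}G)$ with $\texttt{J}(\mathcal{S}G)=\textbf{m}\,\mathcal{S}G$ (since $\mathbb{F}G$ is semisimple), and with the decomposition $\mathcal{S}G=\bigoplus_i\mathcal{S}Ge'_i$ of \eqref{sg-1} into finite chain rings. A preliminary observation I would record first: $\pi$ carries $\sigma$ to the Frobenius of $\mathbb{F}/\mathbb{F}_q$, hence $\pi(\langle y,c\rangle_H)=\langle\pi(y),\pi(c)\rangle_H$, so $\pi(\mathfrak{C}_{c,d}^{\perp_H})\subseteq\pi(\mathfrak{C}_{c,d})^{\perp_H}$ holds for free; the real force of the hypothesis is the reverse inclusion $\pi(\mathfrak{C}_{c,d})^{\perp_H}\subseteq\pi(\mathfrak{C}_{c,d}^{\perp_H})$.

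The implication ``$\pi(\mathfrak{C}_{c,d})$ Hermitian LCD $\Rightarrow$ $\mathfrak{C}_{c,d}$ Hermitian LCD'' needs nothing new: it is exactly the earlier lemma asserting that a $2$-quasi-abelian code over $\mathcal{S}$ is Hermitian LCD whenever its reduction is, whose proof uses only the automatic inclusion above together with the Nakayama-type Proposition~\ref{p-4.1}. So here I would simply invoke that lemma.

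For the converse I would argue by contraposition. Assume $\pi(\mathfrak{C}_{c,d})$ is not Hermitian LCD and pick $0\neq\bar x\in\pi(\mathfrak{C}_{c,d})\cap\pi(\mathfrak{C}_{c,d})^{\perp_H}$. Using the hypothesis $\pi(\mathfrak{C}_{c,d})^{\perp_H}=\pi(\mathfrak{C}_{c,d}^{\perp_H})$, I lift $\bar x$ in two ways: choose $x\in\mathfrak{C}_{c,d}$ and $y\in\mathfrak{C}_{c,d}^{\perp_H}$ with $\pi(x)=\pi(y)=\bar x$; then the coordinates of $x-y$ lie in $\texttt{J}(\mathcal{S}G)$. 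Let $\mu$ generate the socle ideal $\textbf{m}^{s-1}$ of $\mathcal{S}$ and set $m=\mu\cdot1_G\in\mathcal{S}G$. Then $m\,\texttt{J}(\mathcal{S}G)\subseteq\textbf{m}^{s-1}\textbf{m}\,\mathcal{S}G=\textbf{m}^{s}\mathcal{S}G=0$, so $m(x-y)=0$ and $mx=my$; since $\mathfrak{C}_{c,d}$ and $\mathfrak{C}_{c,d}^{\perp_H}$ are $\mathcal{S}G$-submodules, this common value lies in $\mathfrak{C}_{c,d}\cap\mathfrak{C}_{c,d}^{\perp_H}$. The contradiction will follow once I check $mx\neq0$: because $\pi(x)\neq0$, some coordinate of $x$ lies outside $\texttt{J}(\mathcal{S}G)$, hence has a unit component $x^{(k)}$ in some block $\mathcal{S}Ge'_k$; as $\mathcal{S}Ge'_k$ is $\mathcal{S}$-free of positive rank one has $\mu e'_k\neq0$, and in the chain ring $\mathcal{S}Ge'_k$ the product of the nonzero element $\mu e'_k$ with the unit $x^{(k)}$ is nonzero.

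I expect the only delicate point to be this last step, $mx\neq0$: it rests on the facts that every summand $\mathcal{S}Ge'_i$ in \eqref{sg-1} is a chain ring, is $\mathcal{S}$-free of rank $\ge1$, and has nilpotency index exactly $s$, so that $\mu e'_i\neq0$ and a nonzero element times a unit stays nonzero there. Everything else — identifying $\ker\pi$, verifying that $\pi$ respects the Hermitian pairing, and the two-lift bookkeeping — is routine, and the forward implication is already in hand.
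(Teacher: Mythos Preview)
Your proposal is correct and follows essentially the same strategy as the paper: invoke the earlier lemma for the backward implication, and for the forward one lift a hypothetical nonzero element of $\pi(\mathfrak{C}_{c,d})\cap\pi(\mathfrak{C}_{c,d})^{\perp_H}$ to both $\mathfrak{C}_{c,d}$ and $\mathfrak{C}_{c,d}^{\perp_H}$, multiply by a socle element $m$ annihilating $\texttt{J}(\mathcal{S}G)$ to force the two lifts to agree, and conclude from $\mathfrak{C}_{c,d}\cap\mathfrak{C}_{c,d}^{\perp_H}=\{0\}$. The only difference is cosmetic: the paper phrases the converse directly and appeals to the abstract existence of $m\in\texttt{J}(\mathcal{S}G)^{g-1}$ together with the unjustified assertion ``$cm=0\Rightarrow c\in(\mathcal{S}G)^2\texttt{J}(\mathcal{S}G)$'', whereas you choose $m=\mu\cdot 1_G$ explicitly and supply the block-by-block chain-ring argument that actually certifies $mx\neq0$ --- so your version is, if anything, more complete on the point you correctly flagged as delicate.
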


\begin{proof}
By Lemma~\ref{lm-4.1}, if $\pi(\mathfrak{C}_{c,d})$ is Hermitian LCD over $\mathbb{F}$, then so is $\mathfrak{C}_{c,d}$.

Conversely, assume $\mathfrak{C}_{c,d} \cap \mathfrak{C}_{c,d}^{\perp\_H} = {0}$. Let $x \in \pi(\mathfrak{C}_{c,d}) \cap \pi(\mathfrak{C}_{c,d}^{\perp\_H})$. Then there exist $c \in \mathfrak{C}_{c,d}$ and $d \in \mathfrak{C}_{c,d}^{\perp\_H}$ such that $\pi(c) = \pi(d) = x$, implying $c - d \in (\mathcal{S}G)^2 \texttt{J}(\mathcal{S}G)$.
By Lemma~\ref{lm-4.1}, there exists a nonzero $m \in \texttt{J}(\mathcal{S}G)$ such that $(c - d)m = 0$, hence $cm = dm \in \mathfrak{C}_{c,d} \cap \mathfrak{C}_{c,d}^{\perp\_H} = {0}$. Thus, $cm = 0\$, which implies \$c \in (\mathcal{S}G)^2 \texttt{J}(\mathcal{S}G)$.
If $c \notin (\mathcal{S}G)^2 \texttt{J}(\mathcal{S}G)$, then $cm \neq 0$, leading to a contradiction. Therefore, $x = \pi(c) = 0$, and we conclude that
$
\pi(\mathfrak{C}_{c,d}) \cap \pi(\mathfrak{C}_{c,d}^{\perp_H}) = \{0\}.
$
Hence, $\pi(\mathfrak{C}_{c,d})$ is Hermitian LCD over $\mathbb{F}$.
\end{proof}

Now, we provide a characterization of the $2$-quasi-abelian as a Hermitian LCD. 
\begin{thm}\label{th-4.1}
Let $(c,d) \in (\mathcal{S}G)^2$. The $2$-quasi-abelian code $\mathfrak{C}_{c,d}$ over $\mathcal{S}$ is Hermitian LCD if $\pi(c\widehat{\sigma}(c) + d\widehat{\sigma}(d))$ is a unit in $\mathbb{F}G$. 
\end{thm}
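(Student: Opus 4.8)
The plan is to reduce the statement over the chain ring $\mathcal{S}$ to the already-established result over the residue field $\mathbb{F}$, and then to invoke the field-level construction of Theorem~\ref{th-3.1} (in its group-ring incarnation over $\mathbb{F}G$). First I would set $\pi(c)=\bar c$, $\pi(d)=\bar d$ in $\mathbb{F}G$ and observe that, since $\pi$ is a ring homomorphism commuting with $\widehat{\sigma}$ (equivalently with $\tau$ on $\mathbb{F}G$), the hypothesis $\pi\bigl(c\widehat{\sigma}(c)+d\widehat{\sigma}(d)\bigr)$ being a unit in $\mathbb{F}G$ translates to $\bar c\,\bar c^{\,\tau}+\bar d\,\bar d^{\,\tau}=\nu$ for some unit $\nu\in(\mathbb{F}G)^\times$. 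I would then show that $\pi(\mathfrak{C}_{c,d})=\mathcal{C}_{\bar c,\bar d}$, the $2$-quasi-abelian code over $\mathbb{F}$ generated by $(\bar c,\bar d)$, because $\pi$ is surjective and sends the generator $(c,d)$ to $(\bar c,\bar d)$.

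Next I would prove that $\mathcal{C}_{\bar c,\bar d}$ is Hermitian LCD over $\mathbb{F}$ by essentially repeating the argument of Theorem~\ref{th-3.1}: if $s(\bar c,\bar d)\in\mathcal{C}_{\bar c,\bar d}\cap\mathcal{C}_{\bar c,\bar d}^{\perp_H}$, then $\langle s\bar c,s_1\bar c\rangle_H+\langle s\bar d,s_1\bar d\rangle_H=0$ for all $s_1\in\mathbb{F}G$, hence ${\phi\mid_{\mathbb{F}G}}\bigl(ss_1^\tau(\bar c\,\bar c^\tau+\bar d\,\bar d^\tau)\bigr)=0$, i.e. ${\phi\mid_{\mathbb{F}G}}(ss_1^\tau\nu)=0$ for all $s_1$. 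Since $\nu$ is a unit, replacing $s_1$ by $s_1\nu^{-\tau}$ (note $\nu^\tau$ is again a unit, so $\nu^{-\tau}$ exists) gives ${\phi\mid_{\mathbb{F}G}}(ss_1^\tau)=0$ for all $s_1\in\mathbb{F}G$, and taking $s_1=\widehat{\sigma}$-preimages running over a basis forces $s=0$. This is the one spot where the hypothesis being a \emph{unit} rather than merely equal to $\lambda e_0$ is genuinely used; the earlier Theorem~\ref{th-3.1} only needed $\lambda\in\mathbb{F}^\times$, so this is the natural generalization.

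Finally I would lift back to $\mathcal{S}$: having shown $\pi(\mathfrak{C}_{c,d})=\mathcal{C}_{\bar c,\bar d}$ is Hermitian LCD over $\mathbb{F}$, Lemma~\ref{lm-4.1} (the one stating that if $\pi(\mathfrak{C}_{c,d})$ is Hermitian LCD then so is $\mathfrak{C}_{c,d}$) immediately yields that $\mathfrak{C}_{c,d}$ is Hermitian LCD over $\mathcal{S}$, completing the proof. The main obstacle I anticipate is bookkeeping around the Hermitian form under $\pi$: one must check carefully that $\pi\bigl(\langle x,y\rangle_H\bigr)=\langle\pi(x),\pi(y)\rangle_H$ and that $\pi$ intertwines $\widehat{\sigma}$ on $\mathcal{S}G$ with $\tau$ on $\mathbb{F}G$ (so that the factorization $\langle \mathbf{a},\mathbf{b}\rangle_H=\phi(\mathbf{a}\widehat{\sigma}(\mathbf{b}))$ passes cleanly to the quotient), and that $\phi\mid_{\mathbb{F}G}$ is the reduction of $\phi$ on $\mathcal{S}G$; these are routine but must be stated. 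A secondary subtlety is verifying $\nu^\tau$ is a unit — which is clear since $\tau$ is a ring automorphism of $\mathbb{F}G$ — so that the cancellation step in the middle paragraph is legitimate.
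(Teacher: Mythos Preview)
Your argument is correct, but it takes a longer detour than the paper's proof. The paper argues \emph{directly} at the level of $\mathcal{S}G$: by Definition~\ref{def-3.1} (or equivalently the standard fact that an element of a semilocal ring is a unit iff its image modulo the Jacobson radical is a unit), the hypothesis that $\pi(c\widehat{\sigma}(c)+d\widehat{\sigma}(d))$ is a unit in $\mathbb{F}G$ already forces $c\widehat{\sigma}(c)+d\widehat{\sigma}(d)$ to be a unit in $\mathcal{S}G$ itself. One then runs the cancellation argument you wrote for $\mathbb{F}G$ verbatim in $\mathcal{S}G$: if $u(c,d)\in\mathfrak{C}_{c,d}\cap\mathfrak{C}_{c,d}^{\perp_H}$, then $\phi\bigl(u\,\widehat{\sigma}(u')(c\widehat{\sigma}(c)+d\widehat{\sigma}(d))\bigr)=0$ for all $u'\in\mathcal{S}G$, and invertibility of $c\widehat{\sigma}(c)+d\widehat{\sigma}(d)$ gives $\phi(u\,\widehat{\sigma}(u'))=0$ for all $u'$, hence $u=0$.

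Your route---reduce to $\mathbb{F}$, prove $\mathcal{C}_{\bar c,\bar d}$ is Hermitian LCD there, and lift via Lemma~\ref{lm-4.1}---is more modular (it cleanly separates the field-level computation from the lifting step) but costs you the extra bookkeeping you flagged: compatibility of $\pi$ with $\widehat{\sigma}$, with $\phi$, and with the Hermitian form. The paper's direct approach bypasses all of that by never leaving $\mathcal{S}G$; the single observation it needs is the ``unit lifts to unit'' fact hidden in Definition~\ref{def-3.1}. Both proofs ultimately hinge on the same cancellation identity $\phi(u\,\widehat{\sigma}(u')\cdot\text{unit})=0\Rightarrow u=0$; the only difference is the ring in which it is invoked.
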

\begin{proof}
Suppose $u(c,d) \in \mathfrak{C}_{c,d} \cap \mathfrak{C}_{c,d}^\perp$. Then, for all $u'(c,d) \in \mathcal{R}G$, we have
$
\langle u(c,d), u'(c,d) \rangle = \phi(u \bar{u}' (c\bar{c} + d\bar{d})) = 0.
$
Since $\pi(c\widehat{\sigma}(c) + d\widehat{\sigma}(d))$ is a unit, this implies $\phi(u\bar{u}') = 0$ for all $u' \in \mathcal{S}G$, hence $u = 0$.
Therefore, $\mathfrak{C}_{c,d}$ is Hermitian LCD.
\end{proof}
To investigate the Hermitian LCD asymptotic property for a class of $2$-quasi-abelian codes over finite chain rings, we consider a specific family of codes denoted by code $\mathfrak{C}_{1,d}$ in $\mathcal{C}_{a,b}$.
Recall that an $\mathcal{S}$-module $\mathfrak{C}_{c,d}$ is said to be free if it is isomorphic to $\mathcal{S}^t$ for some $t > 0$.
\begin{lemma}\label{lm-4.2}
$\mathfrak{C}_{1,d}$ is a free $\mathcal{S}G$ module.
\end{lemma}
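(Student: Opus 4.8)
The plan is to exhibit an explicit $\mathcal{S}G$-module isomorphism from $\mathcal{S}G$ onto $\mathfrak{C}_{1,d}$, which will simultaneously show that $\mathfrak{C}_{1,d}$ is free (of rank equal to that of $\mathcal{S}G$ as an $\mathcal{S}$-module, hence free over $\mathcal{S}$). Concretely, define $\varrho \colon \mathcal{S}G \to \mathfrak{C}_{1,d}$ by $u \mapsto u(1,d) = (u, ud)$ for all $u \in \mathcal{S}G$. This is the natural analogue of the map used in the proof of Corollary~\ref{cor-3.1} over the field $\mathbb{F}$, and the same elementary verification shows it is an $\mathcal{S}G$-module homomorphism and is surjective by the very definition of $\mathfrak{C}_{1,d} = \{(uc,ud) : u \in \mathcal{S}G\}$ with $c = 1$.

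The key step is injectivity: if $u(1,d) = (u,ud) = 0$ then the first coordinate forces $u = 0$, so $\ker \varrho = 0$. Hence $\varrho$ is an $\mathcal{S}G$-module isomorphism $\mathcal{S}G \cong \mathfrak{C}_{1,d}$. Since $\mathcal{S}G$ is a free $\mathcal{S}$-module (it has the elements of $G$ as an $\mathcal{S}$-basis, so $\mathcal{S}G \cong \mathcal{S}^{n}$ where $n = |G|$), and $\varrho$ is in particular an $\mathcal{S}$-module isomorphism, we conclude $\mathfrak{C}_{1,d} \cong \mathcal{S}^{n}$ as $\mathcal{S}$-modules; that is, $\mathfrak{C}_{1,d}$ is a free $\mathcal{S}$-module, and moreover a free $\mathcal{S}G$-module of rank $1$.

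There is essentially no obstacle here — the argument is a one-line observation that the projection onto the first coordinate is a left inverse to $\varrho$ — so the only care needed is to state clearly in which sense ``free'' is meant (free over $\mathcal{S}G$ of rank one, equivalently free over $\mathcal{S}$ of rank $|G|$) and to note that the $\mathcal{S}G$-action is respected, i.e. $\varrho(vu) = v\varrho(u)$ for $v \in \mathcal{S}G$, which is immediate from the module structure on $(\mathcal{S}G)^2$. If instead one wants a self-contained argument avoiding the isomorphism language, one can simply observe that $\{(g, gd) : g \in G\}$ is an $\mathcal{S}$-basis of $\mathfrak{C}_{1,d}$: these elements $\mathcal{S}$-span it (since $u(1,d) = \sum_g u_g (g,gd)$), and any $\mathcal{S}$-linear dependence among them projects in the first coordinate to a dependence among the $g \in G$, which is trivial.
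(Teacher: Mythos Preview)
Your proof is correct and follows essentially the same approach as the paper: the paper defines the inverse map $\varphi\colon \mathfrak{C}_{1,d}\to\mathcal{S}G$, $u(1,d)\mapsto u$, and simply asserts it is an $\mathcal{S}G$-module isomorphism, whereas you define the forward map $\varrho\colon u\mapsto u(1,d)$ and explicitly check injectivity via the first coordinate. The two arguments are the same up to direction, with yours supplying the verification the paper leaves implicit.
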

\begin{proof}
Define the map $\varphi: \mathfrak{C}_{1,d} \to \mathcal{S}G$ by $\varphi(u(1,d)) = u$ for all $u \in \mathcal{S}G$. This map is clearly an $\mathcal{S}G$-module isomorphism. Hence, $\mathfrak{C}_{1,d} \cong \mathcal{S}G$ as $\mathcal{S}G$-modules, and thus $\mathfrak{C}_{1,d}$ is free.
\end{proof}

\begin{thm}\label{th-4.2}
 Let $\mathfrak{C}_{1,d}$ be a $2$-quasi-abelian code over $\mathcal{S}$. Then $\mathfrak{C}_{1,d}$ is a Hermitian LCD if and only if $\pi(\mathfrak{C}_{1,d})$ is a Hermitian LCD over $\mathbb{F}$. 
\end{thm}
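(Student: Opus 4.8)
The plan is to deduce Theorem~\ref{th-4.2} from the more general Theorem~\ref{th-4.0} by verifying that the hypothesis $\pi(\mathfrak{C}_{1,d}^{\perp_H}) = \pi(\mathfrak{C}_{1,d})^{\perp_H}$ is automatically satisfied for the free code $\mathfrak{C}_{1,d}$. The inclusion $\pi(\mathfrak{C}_{1,d}^{\perp_H}) \subseteq \pi(\mathfrak{C}_{1,d})^{\perp_H}$ always holds (this is the easy direction already used in Lemma~\ref{lm-4.1}), so the whole task reduces to proving the reverse inclusion $\pi(\mathfrak{C}_{1,d})^{\perp_H} \subseteq \pi(\mathfrak{C}_{1,d}^{\perp_H})$, which is where freeness is essential.

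First I would record that, by Lemma~\ref{lm-4.2}, $\mathfrak{C}_{1,d} \cong \mathcal{S}G$ via $u(1,d) \mapsto u$, so $\mathfrak{C}_{1,d}$ is a free $\mathcal{S}G$-module of rank $1$ inside $(\mathcal{S}G)^2$, with generator $(1,d)$ whose first coordinate is a unit. I would then unwind the definition of the Hermitian dual: $(x,y) \in \mathfrak{C}_{1,d}^{\perp_H}$ iff $\phi\big(\widehat{\sigma}(u)(x\widehat{\sigma}(1) + y\widehat{\sigma}(d))\big) = 0$ for all $u \in \mathcal{S}G$, which (since $\widehat{\sigma}$ is an automorphism and $\phi$ pairs nondegenerately in the sense used in Theorem~\ref{th-3.1}/Theorem~\ref{th-4.1}) is equivalent to the single group-ring equation $x + y\widehat{\sigma}(d) = 0$, i.e. $x = -y\widehat{\sigma}(d)$. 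Thus $\mathfrak{C}_{1,d}^{\perp_H} = \{(-y\widehat{\sigma}(d),\, y) : y \in \mathcal{S}G\}$, again a free rank-$1$ module with generator $(-\widehat{\sigma}(d), 1)$ whose second coordinate is a unit. The same computation over $\mathbb{F}G$ gives $\pi(\mathfrak{C}_{1,d})^{\perp_H} = \{(-\bar y\, \widehat{\sigma}(\pi(d)),\, \bar y) : \bar y \in \mathbb{F}G\}$, since $\pi$ is a ring homomorphism commuting with $\widehat{\sigma}$ (equivalently with $\tau$ on the residue ring). Now given any element $(-\bar y\,\widehat{\sigma}(\pi(d)), \bar y) \in \pi(\mathfrak{C}_{1,d})^{\perp_H}$, lift $\bar y$ to some $y \in \mathcal{S}G$ (possible since $\pi$ is surjective), and observe $(-y\widehat{\sigma}(d), y) \in \mathfrak{C}_{1,d}^{\perp_H}$ maps under $\pi$ to exactly that element. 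This proves $\pi(\mathfrak{C}_{1,d}^{\perp_H}) = \pi(\mathfrak{C}_{1,d})^{\perp_H}$, so Theorem~\ref{th-4.0} applies verbatim and yields the claimed equivalence.

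Alternatively, one can bypass Theorem~\ref{th-4.0} entirely and argue directly: the forward implication ``$\pi(\mathfrak{C}_{1,d})$ H-LCD $\Rightarrow \mathfrak{C}_{1,d}$ H-LCD'' is just Lemma~\ref{lm-4.1}; for the converse, suppose $\mathfrak{C}_{1,d}$ is H-LCD and take $x \in \pi(\mathfrak{C}_{1,d}) \cap \pi(\mathfrak{C}_{1,d})^{\perp_H}$, so $x = \pi(u(1,d)) = \pi((-y\widehat{\sigma}(d),y))$ for suitable $u,y \in \mathcal{S}G$; then $u(1,d) - (-y\widehat{\sigma}(d),y) \in (\mathcal{S}G)^2\texttt{J}(\mathcal{S}G)$ and one runs the nilpotency argument of Theorem~\ref{th-4.0} (multiply by a nonzero $m \in \texttt{J}(\mathcal{S}G)^{g-1}$ killed by all of $\texttt{J}(\mathcal{S}G)$, use the freeness of both $\mathfrak{C}_{1,d}$ and $\mathfrak{C}_{1,d}^{\perp_H}$ to conclude the lift itself lies in $(\mathcal{S}G)^2\texttt{J}(\mathcal{S}G)$) to force $x = 0$.

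The main obstacle — and the only place real content enters — is establishing that the Hermitian dual of $\mathfrak{C}_{1,d}$ is again free with a unit coordinate, i.e. the explicit description $\mathfrak{C}_{1,d}^{\perp_H} = \mathfrak{C}_{-\widehat{\sigma}(d),1}$. This needs the nondegeneracy of $\phi(\,\cdot\,\widehat{\sigma}(\,\cdot\,))$ as a pairing on $\mathcal{S}G$ (so that $\phi(\widehat{\sigma}(u)z) = 0$ for all $u$ forces $z = 0$), which over a chain ring is not quite automatic but follows because $\widehat{\sigma}$ is an automorphism and $\phi$ composed with translation by group elements recovers every coordinate; I would state this as a small observation and then the rest is bookkeeping. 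Everything else — surjectivity of $\pi$, commutation of $\pi$ with $\widehat{\sigma}$, the radical/nilpotency mechanics — is already available from the preliminaries and from Lemmas~\ref{lm-4.1} and the proof of Theorem~\ref{th-4.0}.
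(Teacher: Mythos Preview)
Your proposal is correct and follows essentially the same route as the paper: use Lemma~\ref{lm-4.2} to get freeness, deduce $\pi(\mathfrak{C}_{1,d}^{\perp_H})=\pi(\mathfrak{C}_{1,d})^{\perp_H}$, and then invoke Theorem~\ref{th-4.0}. The only difference is that where the paper simply asserts ``it is well known'' that the Hermitian dual of a free code is free and that the $\pi$-compatibility follows, you actually carry out the computation $\mathfrak{C}_{1,d}^{\perp_H}=\{(-y\widehat{\sigma}(d),y):y\in\mathcal{S}G\}$ and the explicit lifting argument, which fills in exactly the step the paper leaves to the reader.
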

\begin{proof}
 By applying Lemma~\ref{lm-4.2}, we get $\mathfrak{C}_{1,d}$ is a free $\mathcal{R}G$ module. It is well known that for $\mathfrak{C}_{1,d}$ is a free $\mathcal{S}G$ module, $\mathfrak{C}_{1,d}^{\perp_H}$ is also free $\mathcal{S}G$ module. Therefore, $\pi\left(\mathfrak{C}_{1,d}^{\perp_H}\right)=\pi\left(\mathfrak{C}_{1,d}\right)^{\perp_H}$. From this and applying Theorem~\ref{th-4.0}, the desired result follows immediately.
\end{proof}
\begin{cor}
If $d\in\mathcal{R}G$ satisfies that $\pi\left(1+d\widehat{\sigma}(d)\right)$ is aunit in $\mathbb{F}G\setminus\{1\}$, then $\mathfrak{C}_{1,d}$ is a Hermitian LCD with rate $\frac{1}{2}$.
\end{cor}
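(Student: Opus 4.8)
The plan is to derive the Hermitian LCD property directly from the unit criterion of the previous theorems, and to read off the rate from the free module structure already isolated for $\mathfrak{C}_{1,d}$. First I would note that $\widehat{\sigma}$ is a ring automorphism of $\mathcal{S}G$, so $\widehat{\sigma}(1)=1$ and therefore $1\cdot\widehat{\sigma}(1)+d\widehat{\sigma}(d)=1+d\widehat{\sigma}(d)$; the hypothesis is thus exactly the assertion that $\pi\bigl(1\cdot\widehat{\sigma}(1)+d\widehat{\sigma}(d)\bigr)$ is a unit of $\mathbb{F}G$. Applying Theorem~\ref{th-4.1} to the pair $(c,d)=(1,d)$ then gives immediately that $\mathfrak{C}_{1,d}$ is Hermitian LCD over $\mathcal{S}$. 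An equivalent route, which uses the characterization machinery of this section, is to observe that since $\pi\colon\mathcal{S}G\to\mathbb{F}G$ is a surjective ring homomorphism we have $\pi(\mathfrak{C}_{1,d})=\mathcal{C}_{1,\pi(d)}$, then to run the proof of Theorem~\ref{th-3.1} with the scalar $\lambda$ replaced by the unit $\pi(1+d\widehat{\sigma}(d))$ --- here one uses commutativity of $\mathbb{F}G$ and bijectivity of $\tau$ to see that $s_1^{\tau}\cdot\pi(1+d\widehat{\sigma}(d))$ ranges over all of $\mathbb{F}G$, whence $s=0$ --- to conclude that $\mathcal{C}_{1,\pi(d)}$ is Hermitian LCD over $\mathbb{F}$, and finally to transfer this back via Theorem~\ref{th-4.2}.

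For the rate, I would invoke Lemma~\ref{lm-4.2}: the assignment $u(1,d)\mapsto u$ is an $\mathcal{S}G$-module isomorphism $\mathfrak{C}_{1,d}\cong\mathcal{S}G$, so $\mathfrak{C}_{1,d}$ is free over $\mathcal{S}$ of rank $\rk_{\mathcal{S}}(\mathcal{S}G)=n$, while the ambient code $(\mathcal{S}G)^2$ has length $2n$ over $\mathcal{S}$ (equivalently $\big|\mathfrak{C}_{1,d}\big|=\big|\mathcal{S}G\big|=\big|\mathcal{S}\big|^{n}$ over length $2n$). This yields rate $\tfrac{n}{2n}=\tfrac12$, in parallel with Corollary~\ref{cor-3.1}. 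The hypothesis $\pi(1+d\widehat{\sigma}(d))\neq 1$ is not needed for either conclusion; it only rules out the degenerate choices of $d$ with $\pi(d)\pi(d)^{\tau}=0$.

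I do not expect a real obstacle here: the statement is essentially the $c=1$ specialization of Theorem~\ref{th-4.1} for the complementary-dual property and of Lemma~\ref{lm-4.2} for the rate. The only points deserving a line of care are the identification $\widehat{\sigma}(1)=1$, which is what licenses the application of Theorem~\ref{th-4.1} with $c=1$, and the rank bookkeeping that exhibits $\mathfrak{C}_{1,d}$ as a rank-$n$ free $\mathcal{S}$-submodule of a rank-$2n$ free module so that the rate is exactly $1/2$.
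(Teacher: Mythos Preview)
Your proposal is correct and follows exactly the paper's route: apply Theorem~\ref{th-4.1} with $c=1$ (using $\widehat{\sigma}(1)=1$) to obtain the Hermitian LCD property, and invoke Lemma~\ref{lm-4.2} to get $\rk_{\mathcal{S}}\mathfrak{C}_{1,d}=\rk_{\mathcal{S}}\mathcal{S}G$ and hence rate $1/2$. Your additional alternative route via Theorem~\ref{th-4.2} and your remark that the condition $\pi(1+d\widehat{\sigma}(d))\neq 1$ is not actually needed are both correct but go beyond what the paper records.
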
 
\begin{proof}
By applying Theorem~\ref{th-4.1}, we deduce that $\mathfrak{C}_{1,d}$ is a Hermitian LCD. For the other part, by applying Lemma~\ref{lm-4.2}, it is easy to see that $\rk_{\mathcal{S}}\mathfrak{C}_{1,d}=\rk_{\mathcal{S}}\mathcal{S}G$. Hence, the rate is equal to $\frac{1}{2}$.
\end{proof}
In the following theorem, we provide a method to construct a $2$-quasi-abelian code over a finite chain ring $\mathcal{S}$ from $2$-quasi-abelian code over the residue field $\mathbb{F}$ of the chain ring $\mathcal{S}$.
\begin{thm}\label{th-4.3}
Let $\mathcal{C}_{1,\beta}=[2n,k,d]$ be a $2$-quasi-abelian code over $\mathbb{F}$. Then there exists $d\in\mathcal{S}G$ such that $\mathfrak{C}_{1,d}=[2n,k,d]$ is a $2$-quasi-abelian code over $\mathcal{S}$ with $\pi(d)=\beta$ and $\pi\left(\mathfrak{C}_{1,d}\right)=\mathcal{C}_{1,\beta}$.
\end{thm}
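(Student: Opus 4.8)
The plan is to lift the field-level code $\mathcal{C}_{1,\beta}$ to the chain ring $\mathcal{S}$ by choosing a suitable preimage $d$ of $\beta$ under the natural projection $\pi:\mathcal{S}G\to\mathbb{F}G$, and then to show that the resulting code $\mathfrak{C}_{1,d}$ has the same parameters. First I would pick any $d\in\mathcal{S}G$ with $\pi(d)=\beta$; this is possible because $\pi$ is a surjective ring homomorphism. Then $\mathfrak{C}_{1,d}=\{u(1,d)\mid u\in\mathcal{S}G\}$ is a $2$-quasi-abelian code over $\mathcal{S}$ by definition, and by Lemma~\ref{lm-4.2} it is a free $\mathcal{S}G$-module with $\varphi(u(1,d))=u$ an $\mathcal{S}G$-module isomorphism $\mathfrak{C}_{1,d}\cong\mathcal{S}G$. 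Hence $\rk_{\mathcal{S}}\mathfrak{C}_{1,d}=\rk_{\mathcal{S}}\mathcal{S}G=n$, so the $\mathcal{S}$-dimension is $k=n$ (viewing the length over $\mathcal{S}$ as $2n$, matching the rate-$1/2$ situation of Corollary~\ref{cor-3.1}); this settles the first parameter.

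Next I would verify $\pi(\mathfrak{C}_{1,d})=\mathcal{C}_{1,\beta}$. The inclusion $\pi(\mathfrak{C}_{1,d})\subseteq\mathcal{C}_{1,\beta}$ is immediate since $\pi(u(1,d))=\pi(u)(1,\pi(d))=\pi(u)(1,\beta)$. For the reverse inclusion, given any $s(1,\beta)\in\mathcal{C}_{1,\beta}$ with $s\in\mathbb{F}G$, choose $u\in\mathcal{S}G$ with $\pi(u)=s$; then $\pi(u(1,d))=s(1,\beta)$. Thus $\pi(\mathfrak{C}_{1,d})=\mathcal{C}_{1,\beta}$ exactly. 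In particular $\pi$ restricts to a surjection $\mathfrak{C}_{1,d}\to\mathcal{C}_{1,\beta}$, and since both are free $\mathcal{S}G$- (resp. $\mathbb{F}G$-) modules of the same rank, this is consistent.

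The remaining parameter, the minimum distance, is where the main obstacle lies. The claim is that the minimum Hamming weight of $\mathfrak{C}_{1,d}$ over $\mathcal{S}$ equals the minimum weight $d$ of $\mathcal{C}_{1,\beta}$ over $\mathbb{F}$. One direction is easy: for any nonzero $u\in\mathcal{S}G$, if $\pi(u)\neq 0$ then $\mathrm{wt}_H(u(1,d))\geq \mathrm{wt}_H(\pi(u)(1,\beta))\geq d$ because reducing mod $\mathbf{m}$ can only turn nonzero coordinates into zero ones. If instead $\pi(u)=0$, i.e. $u\in\texttt{J}(\mathcal{S}G)$, one must argue that $\mathrm{wt}_H(u(1,d))\geq d$ as well; here I would use the nilpotency filtration of $\texttt{J}(\mathcal{S}G)$, writing $u\in\mathbf{m}^{\,t}\mathcal{S}G\setminus\mathbf{m}^{\,t+1}\mathcal{S}G$ and reducing modulo $\mathbf{m}^{\,t+1}$, so that the image of $u(1,d)$ in a copy of $\mathbb{F}G$ lies in (a module isomorphic to) $\mathcal{C}_{1,\beta}$ and is nonzero, whence its support has size at least $d$ and the support of $u(1,d)$ contains it. The other direction — that the minimum weight does not exceed $d$ — follows by lifting a minimum-weight codeword: take $s\in\mathbb{F}G$ realizing weight $d$ in $\mathcal{C}_{1,\beta}$, lift to $u\in\mathcal{S}G$ with $\pi(u)=s$; then $\mathrm{wt}_H(u(1,d))\le d$ is not automatic, but one can instead observe that the support of $u(1,d)$ projects onto the support of $s(1,\beta)$, and by adjusting $u$ within its coset one can kill the extra coordinates; alternatively, since $\mathfrak{C}_{1,d}$ is free of rank $n$ and reduces onto $\mathcal{C}_{1,\beta}$, a standard argument on free codes over chain rings gives that the minimum distances coincide. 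I expect the careful handling of codewords $u(1,d)$ with $u\in\texttt{J}(\mathcal{S}G)$ to be the technical crux, and I would resolve it via the $\mathbf{m}$-adic filtration as sketched.
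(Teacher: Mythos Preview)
Your construction coincides with the paper's on the core steps: both lift $\beta$ to an arbitrary preimage under the surjection $\pi$ and invoke Lemma~\ref{lm-4.2} to obtain freeness and hence the rank $k=n$; your explicit verification that $\pi(\mathfrak{C}_{1,d})=\mathcal{C}_{1,\beta}$ is the content the paper asserts in one line. Where you diverge is the minimum distance. The paper introduces an ``inclusion'' $\iota:(\mathbb{F}G)^2\to(\mathcal{S}G)^2$, declares $\iota(\mathcal{C}_{1,\beta})$ and $\mathfrak{C}_{1,d}$ both free $\mathcal{S}G$-modules, and reads off the parameter match from that without further argument. You instead work through the $\textbf{m}$-adic filtration; your lower bound (for $u\in\textbf{m}^{t}\mathcal{S}G\setminus\textbf{m}^{t+1}\mathcal{S}G$, reduce modulo $\textbf{m}^{t+1}$ to land on a nonzero element of a copy of $\mathcal{C}_{1,\beta}$) is correct and is the standard reason free codes over chain rings inherit the residue distance. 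For the upper bound, ``adjust $u$ within its coset to kill the extra coordinates'' is not in general feasible, but your fallback to the known fact for free codes is valid; concretely, if $\gamma$ generates $\textbf{m}$ with $\gamma^{s-1}\neq 0=\gamma^{s}$ and $u$ is any lift of a minimum-weight $v\in\mathbb{F}G$, then $\gamma^{s-1}u(1,d)\in\mathfrak{C}_{1,d}$ has support exactly that of $v(1,\beta)$, since $\gamma^{s-1}a=0\Leftrightarrow\pi(a)=0$ for $a\in\mathcal{S}$. Your treatment of the distance is thus more complete than the paper's, at the cost of the extra filtration bookkeeping.
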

\begin{proof}
To prove this, we define a map $\iota : (\mathbb{F}G)^2\rightarrow(\mathcal{S}G)^2$ by $(a,b)\mapsto (a,b)$, for all $a,b\in\mathbb{F}G$, is an inclusion mapping. Under this mapping $\mathcal{C}_{1,\beta}$ is embedding in $(\mathcal{S}G)^2$, i.e., $\iota\left(\mathcal{C}_{1,\beta}\right)$ is an $\mathcal{S}G$-module of $(\mathcal{S}G)^2$. On the other hand, $\pi : \mathcal{S}G\rightarrow \mathbb{F}G$ is a surjective ring homomorphism, then for $\beta\in \mathbb{F}G$, there exists $d\in\mathcal{S}G$ such that $\pi(d)=\beta$ and $\pi\left(\mathfrak{C}_{1,d}\right)=\mathcal{C}_{1,\beta}$. By applying Lemma~\ref{lm-4.2}, we obtain that $\iota\left(\mathcal{C}_{1,\beta}\right)$ and $\mathfrak{C}_{1,d}$ are both free $\mathcal{S}G$-module. This gives $\mathfrak{C}_{1,d}=[2n,k,d]$ is a $2$-quasi-abelian code over $\mathcal{S}$.
\end{proof}
\begin{lemma}\cite{mil01}\label{le-4.3}
If $I$ is an ideal in $\mathcal{S}$ and $G$ is a finite group, then $\dfrac{\mathcal{S}G}{IG}\cong\dfrac{\mathcal{S}}{I}G$.
\end{lemma}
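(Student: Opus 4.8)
The plan is to exhibit an explicit surjective ring homomorphism $\mathcal{S}G \to (\mathcal{S}/I)G$ whose kernel is exactly $IG$, and then invoke the first isomorphism theorem. First I would define $\Psi : \mathcal{S}G \to (\mathcal{S}/I)G$ coefficientwise: for $\mathbf{a} = \sum_{g\in G} a_g g \in \mathcal{S}G$, set $\Psi(\mathbf{a}) = \sum_{g\in G} (a_g + I)\, g$. Using the quotient map $\mathcal{S} \to \mathcal{S}/I$ on each coefficient, $\Psi$ is well defined, and since each $a_g \mapsto a_g + I$ is additive and the group multiplication in $\mathcal{S}G$ and $(\mathcal{S}/I)G$ is given by the same convolution formula $\mathbf{a}\mathbf{b} = \sum_{g,h} a_g b_{g^{-1}h}\, h$, one checks directly that $\Psi$ respects addition and multiplication; it sends $1_{\mathcal{S}}1_G$ to $1_{\mathcal{S}/I}1_G$, so it is a ring homomorphism.

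Next I would verify surjectivity, which is immediate: any element of $(\mathcal{S}/I)G$ has the form $\sum_{g} (a_g + I)\, g$ with $a_g \in \mathcal{S}$, and this is $\Psi\big(\sum_g a_g g\big)$. Then I would identify the kernel. We have $\Psi(\mathbf{a}) = 0$ precisely when $a_g + I = 0$ in $\mathcal{S}/I$ for every $g\in G$, i.e. $a_g \in I$ for all $g$, which is exactly the condition $\mathbf{a} \in IG$ (recall $IG = \{\sum_{g\in G} a_g g \mid a_g \in I\}$, and this is an ideal of $\mathcal{S}G$ since $I$ is an ideal of $\mathcal{S}$ and $G$ is finite). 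Hence $\ker\Psi = IG$, and the first isomorphism theorem yields $\mathcal{S}G/IG \cong (\mathcal{S}/I)G$.

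I do not anticipate a genuine obstacle here; the only point requiring a little care is confirming that $IG$ really is a two-sided ideal of $\mathcal{S}G$ (so that the quotient $\mathcal{S}G/IG$ makes sense as a ring), but since $\mathcal{S}G$ is commutative — as noted earlier in the paper, $\mathcal{S}$ and $G$ being finite makes $\mathcal{S}G$ a finite commutative ring — and $I \mathcal{S} \subseteq I$, closure under multiplication by arbitrary elements of $\mathcal{S}G$ follows from the convolution formula together with $a_g b_{g^{-1}h} \in I$ whenever $a_g \in I$. So the whole argument is the standard ``reduce coefficients mod $I$'' computation, and the statement follows at once.
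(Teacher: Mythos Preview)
The paper does not give its own proof of this lemma; it is simply quoted from \cite{mil01}. Your argument---reduce coefficients modulo $I$ to get a surjective ring homomorphism $\mathcal{S}G \to (\mathcal{S}/I)G$, check that its kernel is $IG$, and apply the first isomorphism theorem---is exactly the standard proof and is correct. One minor remark: the lemma as stated allows $G$ to be an arbitrary finite group, so $\mathcal{S}G$ need not be commutative; your appeal to commutativity of $\mathcal{S}G$ is therefore unwarranted in general, but it is also unnecessary, since the convolution check you give already shows $IG$ is a two-sided ideal (because $I$ is an ideal of the commutative ring $\mathcal{S}$, so $a_g b,\, b a_g \in I$ whenever $a_g \in I$ and $b \in \mathcal{S}$).
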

\begin{lemma}\label{le-4.4}
If $f$ is an idempotent element in $\mathbb{F}G$, then there is an element $f'$ in $\mathcal{S}G$ such that $\pi(f')=f$.
\end{lemma}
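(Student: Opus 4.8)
This is the classical fact that idempotents lift along a nilpotent ideal, specialized to the surjection $\pi\colon\mathcal{S}G\to\mathbb{F}G$; I would give a self-contained Newton-type argument. \emph{Setting up the kernel.} By Lemma~\ref{le-4.3}, $\mathcal{S}G/\textbf{m}G\cong(\mathcal{S}/\textbf{m})G=\mathbb{F}G$, so $N:=\ker\pi=\textbf{m}G$. Since every element of $(\textbf{m}G)^{k}$ is an $\mathcal{S}$-linear combination of elements of $\textbf{m}^{k}G$, and $\textbf{m}^{s}=0$, we get $N^{s}\subseteq\textbf{m}^{s}G=0$; thus $N$ is a nilpotent ideal of the finite commutative ring $\mathcal{S}G$ (in fact $N=\texttt{J}(\mathcal{S}G)$, since $\mathbb{F}G$ is semisimple by Maschke's theorem).

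\emph{The iteration.} Fix any $a\in\mathcal{S}G$ with $\pi(a)=f$; since $f^{2}=f$ we have $\epsilon_0:=a^{2}-a\in N$. Put $a_0=a$ and $a_{k+1}=3a_k^{2}-2a_k^{3}$, and write $\epsilon_k=a_k^{2}-a_k$. Two polynomial identities, valid in every commutative ring, drive the argument: with $x=a_k$,
$$3x^{2}-2x^{3}-x=-(2x-1)(x^{2}-x)=-(2x-1)\epsilon_k,$$
so $a_{k+1}-a_k\in N$ and hence $\pi(a_{k+1})=\pi(a_k)=\cdots=\pi(a_0)=f$; and a direct computation gives
$$a_{k+1}^{2}-a_{k+1}=\epsilon_k^{2}\,(4\epsilon_k-3),$$
so, starting from $\epsilon_0\in N=N^{2^{0}}$, induction yields $\epsilon_k\in N^{2^{k}}$ for all $k$.

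\emph{Conclusion.} Choose $t$ with $2^{t}\ge s$. Then $\epsilon_t=a_t^{2}-a_t\in N^{2^{t}}\subseteq N^{s}=0$, so $f':=a_t$ is an idempotent of $\mathcal{S}G$ with $\pi(f')=f$, as required. I do not anticipate any genuine obstacle: the only points to check are the two polynomial identities (routine, using commutativity of $\mathcal{S}G$) and the nilpotency of $N$ (immediate from $\textbf{m}^{s}=0$). If one prefers, the lemma can instead be deduced by citing the standard theorem on lifting idempotents modulo a nil ideal, or by comparing the decomposition $\mathcal{S}G=\bigoplus_i\mathcal{S}Ge'_i$ of~\eqref{sg-1} with the Maschke decomposition of $\mathbb{F}G$.
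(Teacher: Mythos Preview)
Your proof is correct and is essentially the paper's own argument: both lift the idempotent through the nilpotent kernel $\textbf{m}G$ via the Newton polynomial $a\mapsto 3a^{2}-2a^{3}$, relying on the identity $a_{k+1}^{2}-a_{k+1}=(a_k^{2}-a_k)^{2}(4(a_k^{2}-a_k)-3)$. The only cosmetic difference is that the paper climbs the tower $\mathcal{S}G/\textbf{m}^{i}G$ one level at a time (so that the squared error vanishes exactly at each step), whereas you iterate directly in $\mathcal{S}G$ and use $\epsilon_k\in N^{2^{k}}$; the underlying mechanism is identical.
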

\begin{proof}
Since $\textbf{m}$ is the maximal ideal of $\mathcal{S}$ and it is both sided ideal, by applying Lemma \ref{le-4.3}, we get $$\dfrac{\mathcal{S}G}{\textbf{m} G}\cong\dfrac{\mathcal{S}}{\textbf{m}}G\cong\mathbb{F}G$$ and $\dfrac{\mathcal{S}G}{\textbf{m}^i G}\cong\dfrac{\mathcal{S}}{\textbf{m}^i}G$, where $1\leq i \leq s$, and $\textbf{m}^s=0$ but $\textbf{m}^{s-1}\neq 0$. 

We will prove this by induction. Inductively, we define the idempotent $f_i$ as follows
            $$f_i\in\dfrac{\mathcal{S}G}{\textbf{m}^iG}.$$
If $i=1$, in this case, we set $f_1=f$, then this lemma holds.

If $i>1$, let $f_{i-1}$ be an idempotent element in $\dfrac{\mathcal{S}G}{\textbf{m}^{i-1}G}$. Consider $h$ an element in $\dfrac{\mathcal{S}G}{\textbf{m}^iG}$ with its image $f_{i-1}$ in $\dfrac{\mathcal{S}G}{\textbf{m}^{i-1}G}$. Then it is easy to verify that $h-h^2$ in $\dfrac{\textbf{m}^{i-1}G}{\textbf{m}^iG}$. Hence, $(h-h^2)^2=0$.

Now, we set $f_i=(3h^2-2h^3)$.
Further,\begin{align*}
 f_i^2-f_i& = (3h^2-2h^3)(3h^2-2h^3-1)=h^2(3-2h)(3h^2-2h^3-1)& \\
 & =(2h-3)(1+2h)(h-h^2)^2=0.&
\end{align*}
This lemma holds for all $i$. 
Since $\textbf{m}^s=0$, so $f'=f_s$.
\end{proof}
By Equation~\eqref{sg-1}, 
\begin{eqnarray*}
\mathcal{S}G=&\mathcal{S}Ge'_0\bigoplus\mathcal{S}Ge'_1\bigoplus\cdots \bigoplus\mathcal{S}Ge_{t}, & ~\text{it follows that}\\
\mathbb{F}G=&\mathbb{F}Ge_0\bigoplus\mathbb{F}Ge_1\bigoplus\cdots\bigoplus\mathbb{F}Ge_{t},
\end{eqnarray*} and conversely by Lemma~\ref{le-4.4} and \cite[Theorem 10.2]{Hut1982}.
From this and Equation~\eqref{fg-2}, we conclude that
\begin{eqnarray*}
\mathcal{S}G=&\mathcal{S}Ge_0\bigoplus\left(\bigoplus\limits_{i=1}^{r}\mathcal{S}Ge_i\right)\bigoplus\left(\bigoplus\limits_{j=1}^{s}\mathcal{S}G\widehat{e}_{r+j}\right), &~\text{if and only if} \\
\mathbb{F}G=&\mathbb{F}Ge_0\bigoplus\left(\bigoplus\limits_{i=1}^{r}\mathbb{F}Ge_i\right)\bigoplus\left(\bigoplus\limits_{j=1}^{s}\mathbb{F}G\widehat{e}_{r+j}\right)
\end{eqnarray*}
  
Next,  we say that the sequence of Hrmitian LCD codes $\mathcal{C}_1$, $\mathcal{C}_2$, $\ldots$ of codes is asymptotically good over $\mathbb{F}$. 
Furthermore, by applying Theorems~\ref{th-4.2} and \ref{th-4.3}, there exist Hermitian LCD $2$-quasi-abelian codes $\mathfrak{C}_i$ of $(\mathcal{S}G)^2$, $i=1,2,\ldots$, such that the code sequence $\mathfrak{C}_1$, $\mathfrak{C}_2$, $\ldots$ satisfy the following conditions:
\begin{itemize}
\item[i)] The code $\mathfrak{C}_i$ of length $2n_i\rightarrow\infty$;
 \item[ii)] The rate $R(\mathfrak{C}_i)=R(\mathcal{C}_i)=\frac{1}{2}$;
 \item[iii)] The relative minimum distance $\vartriangle(\mathfrak{C}_i)=\vartriangle(\mathcal{C}_i)>\delta$ for all $i=1,2,\ldots$,
\end{itemize}  
Thus, from the above discussion, we present the class of Hermitian LCD $2$-quasi-abelian codes over finite chain rings, which is asymptotically good in
the following theorem.
\begin{thm}\label{th-4.12}
Let $n_1, n_2, \ldots$ be positive odd integers coprime to $q$ and let $G_i$ be an abelian group of finite order $n_i$ for $i=1,2,\ldots$. Then there exist Hermitian LCD $2$-quasi-abelian codes $\mathfrak{C}_i$ of $(\mathcal{S}G)^2$, $i=1,2,\ldots$, such that the sequence of codes $\mathfrak{C}_1$, $\mathfrak{C}_2$, $\ldots$ are asymptotically good.
 \end{thm}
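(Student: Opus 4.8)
The plan is to bootstrap from the finite-field result of Theorem~\ref{th-3.17} and transport the good sequence of codes to the chain ring $\mathcal{S}$ through the lifting machinery of Section~\ref{lcd}. First I would fix a real number $\delta$ with $0<\delta<1-q^{-1}$ and take the sequence of odd integers $n_i$ (coprime to $q$, with $\mu_q(n_i)\to\infty$) provided by Lemma~\ref{lm-3.15}; Theorem~\ref{th-3.17} then yields, for abelian groups $G_i$ of order $n_i$, Hermitian LCD $2$-quasi-abelian codes $\mathcal{C}_i=\mathcal{C}_{1,\beta_i}\subseteq(\mathbb{F}G_i)^2$ with length $2n_i\to\infty$, rate $\tfrac12$, and relative minimum distance $\Delta(\mathcal{C}_i)>\delta$.

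Next, for each $i$ I would apply Theorem~\ref{th-4.3} to the code $\mathcal{C}_{1,\beta_i}=[2n_i,n_i,d_i]$: this produces $d_i\in\mathcal{S}G_i$ with $\pi(d_i)=\beta_i$ such that $\mathfrak{C}_i:=\mathfrak{C}_{1,d_i}\subseteq(\mathcal{S}G_i)^2$ satisfies $\pi(\mathfrak{C}_i)=\mathcal{C}_i$, and by Lemma~\ref{lm-4.2} $\mathfrak{C}_i$ is a free $\mathcal{S}G_i$-module of rank $n_i$. Since $\pi(\mathfrak{C}_i)=\mathcal{C}_i$ is Hermitian LCD over $\mathbb{F}$, Theorem~\ref{th-4.2} gives that $\mathfrak{C}_i$ is Hermitian LCD over $\mathcal{S}$. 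Freeness of rank $n_i$ forces $|\mathfrak{C}_i|=|\mathcal{S}|^{n_i}$, so the rate of $\mathfrak{C}_i$ inside $(\mathcal{S}G_i)^2$ is $n_i/(2n_i)=\tfrac12$, while trivially the length $2n_i\to\infty$.

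It remains only to control the relative minimum distance, and the key step is to prove $d(\mathfrak{C}_i)\ge d(\mathcal{C}_i)$, with the Hamming weight measured on length-$2n_i$ words over $\mathcal{S}$ as in Section~\ref{ff-3}. Writing $\textbf{m}=(\theta)$ with nilpotency index $s$, I would start from a nonzero codeword $c=u(1,d_i)$ of $\mathfrak{C}_i$ (possible since $\mathfrak{C}_i$ is free), let $j<s$ be largest with $u\in\textbf{m}^{j}\mathcal{S}G_i=\theta^{j}\mathcal{S}G_i$, and write $u=\theta^{j}u'$ with $\pi(u')\neq 0$. Then $\theta^{\,s-1-j}c=\theta^{\,s-1}u'(1,d_i)$ is again a codeword of $\mathfrak{C}_i$; using $\operatorname{ann}_{\mathcal{S}G_i}(\theta^{\,s-1})=\textbf{m}\mathcal{S}G_i=\ker\pi$ one checks coordinatewise that $\operatorname{supp}\!\big(\theta^{\,s-1}u'(1,d_i)\big)=\operatorname{supp}\!\big(\pi(u')(1,\beta_i)\big)$, and this support is contained in $\operatorname{supp}(c)$. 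Hence
$$
wt_H(c)\;\ge\;wt_H\!\big(\theta^{\,s-1-j}c\big)\;=\;wt_H\!\big(\pi(u')(1,\beta_i)\big)\;\ge\;d(\mathcal{C}_i),
$$
because $\pi(u')(1,\beta_i)$ is a nonzero codeword of $\mathcal{C}_i=\mathcal{C}_{1,\beta_i}$. Consequently $\Delta(\mathfrak{C}_i)=d(\mathfrak{C}_i)/(2n_i)\ge\Delta(\mathcal{C}_i)>\delta$ for every $i$, and the sequence $\mathfrak{C}_1,\mathfrak{C}_2,\ldots$ meets all three requirements for asymptotic goodness. I expect this weight comparison — in particular the claim that multiplying by the right power of $\theta$ yields a codeword whose support equals that of the residual codeword and lies inside $\operatorname{supp}(c)$ — to be the only real obstacle; the rest is a direct appeal to Theorems~\ref{th-3.17}, \ref{th-4.2}, and \ref{th-4.3} together with Lemma~\ref{lm-4.2}.
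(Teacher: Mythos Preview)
Your proposal is correct and follows essentially the same route as the paper: invoke Theorem~\ref{th-3.17} to obtain the asymptotically good field-level sequence $\mathcal{C}_i$, lift via Theorem~\ref{th-4.3}, and conclude the Hermitian LCD property over $\mathcal{S}$ from Theorem~\ref{th-4.2}. The only difference is that you supply an explicit $\theta$-multiplication argument for $d(\mathfrak{C}_i)\ge d(\mathcal{C}_i)$, whereas the paper simply records $\Delta(\mathfrak{C}_i)=\Delta(\mathcal{C}_i)$ as part of the parameter statement in Theorem~\ref{th-4.3} without spelling out this step.
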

 \begin{proof}
By applying Theorems~\ref{th-4.2} and \ref{th-4.3}, we have $\mathfrak{C}_i$ Hermitian LCD $2$-quasi-abelian codes over finite chain rings. From this and by applying Theorem~\ref{th-3.17}, the statement follows immediately.
 \end{proof}
\section*{Acknowledgement} The authors sincerely thank the Associate Editor and the referees for their meticulous reviews and insightful suggestions, which have greatly improved the quality of this paper. We also gratefully acknowledge the support of the Indian Institute of Technology Guwahati.


\begin{thebibliography}{00}

\bibitem{AB1995}
J. L. Alperin, B. Bell.
\newblock Grop Representations.
\newblock {\em Springer-Verlag, Cham, Switzerland}, 1995.

\bibitem{BM2007}
L. M. J. Bazzi, S .K. Mitter.
\newblock Some randomized code constructions from group action.
\newblock {\em {IEEE} Trans. Inf. Theory}, \textbf{52}(7):3210--3219, 2007.

\bibitem{BDM20}
S. Bhowmick, A. F. Tabue, E. Mart{\'{i}}nez-Moro, R. Bandi, S. Bagchi.
\newblock Do non-free LCD codes over finite commutative Frobenius rings exist?
\newblock In {\em  Designs, Codes and Cryptography}, \textbf{88}, 825--840, 2020.

\bibitem{BBB24}
S. Bhowmick, S. Bagchi, R. Bandi.
\newblock On LCD codes over $\mathbb{Z}_4$.
\newblock In {\em  Indian J. Pure Appl. Math}, 2024, https://doi.org/10.1007/s13226-024-00563-x.

\bibitem{BTP24}
S. Bhowmick, A. F. Tabue, J. Pal.
\newblock On the $\ell$-DLIPs of codes over finite commutative rings.
\newblock In {\em  Discrete math.}, \textbf{347}(4), 113853, 2024.

\bibitem{BCC14}
J. Bringer, C. Carlet, H. Chabanne, S. Guilley, and H. Maghrebi.
\newblock Orthogonal direct sum masking - {A} smartcard friendly computation paradigm in a code, with built-in protection against side-channel and fault attacks.
\newblock In {\em Information Security Theory and Practice. Securing the Internet of Things {WISTP} 2014, Crete, Greece. Proceedings}, volume 8501 of {\em Lecture Notes in Computer Science}, pages 40--56. Springer, 2014.

\bibitem{CG16}
C. Carlet and S. Guilley.
\newblock Complementary dual codes for counter-measures to side-channel attacks.
\newblock {\em Advances in Mathematics of Communications}, \textbf{10}(1):131--150, 2016.

\bibitem{CMT18}
C. Carlet, S. Mesnager, C. Tang, Y. Qi, and R. Pellikaan.
\newblock Linear codes over $\FF_q$ are equivalent to {LCD} codes for $q>3$.
\newblock {\em {IEEE} Trans. Inf. Theory}, \textbf{64}(4):3010--3017, 2018.

\bibitem{CMTQ18}
C. Carlet, S. Mesnager, C. Tang, and Y. Qi.
\newblock Euclidean and hermitian {LCD} {MDS} codes.
\newblock {\em Designs, Codes and Cryptography}, \textbf{86}(11):2605--2618, 2018.

\bibitem{CMTQ19}
C. Carlet, S. Mesnager, C. Tang, and Y. Qi.
\newblock On $\sigma$-LCD codes.
\newblock {\em {IEEE} Trans. Inf. Theory}, \textbf{65}(3):1694--1704, 2019.

\bibitem{CPW1969}
C. L. Chen, W. W. Peterson, E. J. Weldon Jr.
\newblock Some results on quasi-cyclic codes.
\newblock {\em Inf. Control}, \textbf{15}(5):407--423, 1969.

\bibitem{Chepyzhov1992}
V. V. Chepyzhov.
\newblock New lower bounds for the minimal distance of linear quasi-cyclic and almost linear cyclic codes.
\newblock {\em Probl. Pereda. Inf.}, \textbf{28}(1):39--51, 1992.

\bibitem{FL15} Y. Fan, L. Lin  
\newblock Thresholds of random quasi-abelian codes. 
\newblock{\em IEEE Trans. Inform. Theory}, \textbf{61}(1): 82--90, 2015.

\bibitem{FL21} Y. Fan, L. Lin  
\newblock Dihedral group codes over finite fields. 
\newblock{\em IEEE Trans. Inform. Theory}, \textbf{67}(8): 5016--5025, 2021.

\bibitem{Fan22} Y. Fan, L. Lin  
\newblock Self-dual $2$-quasi-abelian codes. 
\newblock{\em IEEE Trans. Inform. Theory}, \textbf{68}(10): 6417--6425, 2022.

\bibitem{Fan16} Y. Fan, L. Lin  
\newblock Quasi-cyclic codes of index $1\frac{1}{3}$. 
\newblock{\em IEEE Trans. Inform. Theory}, \textbf{62}(11): 6342--6347, 2016.

\bibitem{Hammons94} 
A. R. Hammons, P. V. Kumar, A. R. Calderbank, N. J. A. Sloane and P. Sol{\'{e}}.
\newblock The $\mathbb{Z}_4$-linearity of Kerdock, Preparata, Goethals, and related codes.
\newblock{\em IEEE Trans. Inform. Theory}, \textbf{40}:301--319, 1994.

\bibitem{Hon20} 
T.Honold, I. Landjev.
\newblock Linear Codes over Finite Chain Rings.
\newblock{\em The electronic journal of combinatorics}, \textbf{7}:1--22, 2000.


\bibitem{Hut1982} B. Huppert, N. Blackburn.
\newblock Finite Groups II.
\newblock {\em Springer}, Berlin  1982.


\bibitem{Liu2015} X. Liu and H. Liu.
\newblock LCD codes over finite chain rings.
\newblock{\em Finite Fileds Appl.},  \textbf{34},1--19, 2015.

\bibitem{Liu2020} X. Liu and H. Liu.
\newblock $\sigma$-LCD codes over a finite chain rings.
\newblock{\em Des. Codes Cryptogr.},  \textbf{88}, 727--746, 2020.  

\bibitem{MW06}
C.Mart{\'{i}}nez-P{\'{e}}rez, W. Willems.
\newblock Is the class of cyclic codes asymptotically good?
\newblock {\em IEEE Trans. Inform. Theory}, \textbf{52}:696--700, 2006.

\bibitem{M07}
C.Mart{\'{i}}nez-P{\'{e}}rez, W. Willems.
\newblock Self-dual doubly even 2-quasi-cyclic transitive codes are asymptotically good.
\newblock {\em IEEE Trans. Inform. Theory}, \textbf{53}:4302--4308, 2007.

\bibitem{Mas92}
J. L. Massey.
\newblock Linear codes with complementary duals.
\newblock {\em Discrete Mathematics}, \textbf{106-107}:337--342, 1992.

\bibitem{McDonal1974} 
B. R. McDonald.
\newblock Finite Rings with Identity.
\newblock {\em Marcel Dekker Inc.}, New York  1974.

\bibitem{mil01} C. P. Miles and S. K. Sehgal.
\newblock An introduction to Group Rings.
\newblock {\em Kluwer academic publishers}, 2001.

\bibitem{G.Norton2000} G. H. Norton, A. S\u{a}l\u{a}gean.
\newblock On the structure of linear and cyclic codes over finite chain rings.
\newblock{\em Appl. Algebra Eng. Commun. Comput.}, \textbf{10}: 489--506, 2000.

\bibitem{Norton2000} G. H. Norton, A. S\u{a}l\u{a}gean.
\newblock On the structure of linear and cyclic codes over finite chain rings.
\newblock{\em IEEE Trans. Inform. Theory.}, \textbf{46}:(4)1060--1067, 2000.

\bibitem{Sendrier04} 
N. Sendrier.
\newblock Linear codes with complementary duals meet the Gilbert-Varshamov bound.
\newblock{\em Discrete math.}, \textbf{285}:345-347, 2004.



\bibitem{ZWan11} Z. Wan.
\newblock Finite Fields and Galois Rings.
\newblock {\em  World Scientific Publishing Co Pte Ltd}, 2011.


\bibitem{ZLQR2024} 
G. Zhang, L. Lin, C. Qin and R. Li.
\newblock Hermitian self-dual $2$-quasi-abelian codes.
\newblock{\em Finite Fileds Appl.}, \textbf{94}: 102357, 2024.

\bibitem{Zh2024} 
G. Zhang, L. Lin and X. Liu.
\newblock Asymptotically good LCD $2$-quasi-abelian codes over finite fields.
\newblock{\em Discrete math.}, 
\textbf{348}(1): 114224, 2025.
\end{thebibliography}
\end{document}